\newcommand{\atomic}{\text{atomic}}
\newcommand{\cutoffvar}{\ell}		
\newcommand{\kwayvar}{\Delta}		
\author{Michael Blondin}{Département d'informatique, Université de
  Sherbrooke, Sherbrooke,
  Canada}{michael.blondin@usherbrooke.ca}{https://orcid.org/0000-0003-2914-2734}{Supported
  by a Quebec--Bavaria project funded by the Fonds de recherche du
  Québec (FRQ), by a Discovery Grant from the Natural Sciences and
  Engineering Research Council of Canada (NSERC), and by the Fonds de
  recherche du Québec – Nature et technologies (FRQNT)}
\author{Javier Esparza}{Fakultät für Informatik, Technische
  Universität München, Garching bei München,
  Germany}{esparza@in.tum.de}{https://orcid.org/0000-0001-9862-4919}{}
\author{Blaise Genest}{Univ Rennes, CNRS, IRISA, France}
       {blaise.genest@irisa.fr}{https://orcid.org/0000-0002-5758-1876}{}
\author{Martin Helfrich}{Fakultät für Informatik, Technische
  Universität München, Garching bei München, Germany}
       {helfrich@in.tum.de}{https://orcid.org/0000-0002-3191-8098}{}
\author{Stefan Jaax}{Fakultät für Informatik, Technische Universität
  München, Garching bei München, Germany}{jaax@in.tum.de}{https://orcid.org/0000-0001-5789-8091}{}
\authorrunning{M. Blondin, J. Esparza, B. Genest, M. Helfrich, and S. Jaax}
\title{Succinct Population Protocols for \newline Presburger Arithmetic}
\titlerunning{Succinct Population Protocols for Presburger Arithmetic} 
\keywords{Population protocols, Presburger arithmetic, state complexity}
\begin{document}

\maketitle

\begin{abstract}
  In~\cite{AAE06}, Angluin~\etal\ proved that population protocols
  compute exactly the predicates definable in Presburger arithmetic (PA),
  the first-order theory of addition. As part of this result, they
  presented a procedure that translates any formula $\varphi$ of 
  quantifier-free PA with remainder predicates (which has the same
  expressive power as full PA) into a population
  protocol with $2^{\O(\poly(|\varphi|))}$ states that computes
  $\varphi$. More precisely, the number of states of the protocol is
  exponential in both the bit length of the largest coefficient in the
  formula, and the number of nodes of its syntax tree.

In this paper, we prove that every formula $\varphi$ of quantifier-free PA with remainder predicates is computable by a leaderless population protocol with $\O(\poly(|\varphi|))$ states. 
Our proof is based on several new constructions, which may be of
independent interest. Given a 
formula $\varphi$ of quantifier-free PA with remainder predicates, a first construction produces a succinct protocol (with $\O(|\varphi|^3)$ leaders) that computes $\varphi$; this completes the work initiated in \cite{BEJ18}, where we constructed such protocols for a fragment of PA. For large enough inputs, we can get rid of these leaders.
If the input is not large enough, then it is small, and we design another construction producing a succinct protocol with {\em one} leader that computes $\varphi$. 
Our last construction gets rid of this leader for small inputs.
\end{abstract}

\section{Introduction}
\newcommand{\maxcoef}[1]{\textit{mxc}(#1)}
\newcommand{\at}[1]{\textit{at}(#1)}

Population protocols~\cite{AADFP04,AADFP06} are a model of distributed
computation by indistinguishable, mobile finite-state agents, intensely investigated in recent years (see e.g.~\cite{AlistarhG18,ElsasserR18}). Initially
introduced to model networks of passively mobile sensors, they have also been applied to the analysis of chemical reactions under the name of chemical reaction networks (see \eg~\cite{SCWB08}). 

In a population protocol, a collection of agents, called a \emph{population}, randomly interact in  pairs to decide whether their initial 
configuration satisfies a given property, \eg\ whether there are initially more agents in some state $A$ than in some state $B$. 
Since agents are indistinguishable and finite-state, their 
configuration at any time moment is completely characterized by the mapping that assigns to each state the number of agents that currently populate it.  
A protocol is said to \emph{compute a predicate} 
if for every initial configuration
where the predicate holds,
the agents eventually reach consensus 1,
and they eventually reach consensus 0 otherwise.

In a seminal paper, Angluin~\etal\ proved that population protocols
compute exactly the predicates definable in Presburger arithmetic (PA)~\cite{AAE06}. As part of the result, for every Presburger predicate Angluin~\etal\ construct a leaderless protocol that computes it. 
The construction uses the quantifier elimination procedure for PA: every Presburger formula $\varphi$ can be transformed into an equivalent boolean combination of \emph{threshold predicates} of the form $ \vec{\alpha} \cdot \vec{x} > \beta$ and 
\emph{remainder predicates} of the form $\vec{\alpha} \cdot \vec{x} \equiv \beta~(\text{mod } m)$,  where $\vec{\alpha}$ is an integer vector, and $\beta, m$ are integers \cite{Haase18}. Slightly abusing language, we call the set of these boolean combinations \textit{quantifier-free Presburger arithmetic} (QFPA)\footnote{Remainder predicates cannot be directly expressed in Presburger arithmetic without quantifiers.}. Using that PA and QFPA have the same expressive power, Angluin~\etal\ first construct protocols for all threshold and remainder predicates, and then show that the predicates computed by protocols are closed under negation and conjunction. 

The construction of \cite{AAE06} is simple and elegant, but it produces large protocols. Given a formula $\varphi$ of QFPA, let $n$ be the number of bits of the largest coefficient of $\varphi$ in absolute value, and let $m$ be the number of atomic formulas of $\varphi$, respectively. The number of states of the protocols of \cite{AAE06} grows exponentially in both $n$ and $m$. In terms of $|\varphi|$ (defined as the sum of the number of variables, $n$, and $m$)
they have $\O(2^{\poly(|\varphi|)})$ states. This raises the question of whether \emph{succinct protocols} with $\O(\poly(|\varphi|))$ states exist for every formula $\varphi$ of QFPA. We give an affirmative answer by proving that every formula of QFPA has a succinct and leaderless protocol.

Succinct protocols are the state-complexity counterpart of  \emph{fast protocols}, defined as protocols running in polylogarithmic parallel time in the size of the population. Angluin \etal~showed  that every predicate has a fast protocol with a leader \cite{AngluinAE08a}, but  Alistarh \etal, based on work by Doty and Soloveichik \cite{DotyS18}, proved that  in the leaderless case some predicates need linear parallel time \cite{AlistarhAEGR17}. Our result shows that, unlike for time complexity, succinct protocols can be obtained for every QFPA formula in both the leaderless case and the case with leaders.

The proof of our result overcomes a number of obstacles. Designing succinct leaderless protocols is particularly hard for inputs with very few input agents, because there are less resources to simulate leaders. So we produce two completely different families of protocols, one for small inputs with $\O(|\varphi|^3)$ agents, and a second for large inputs with  $\Omega(|\varphi|^3)$ agents, and combine them appropriately.\smallskip

\parag{Large inputs} The family for large inputs is based on our previous work \cite{BEJ18}. However, in order to obtain leaderless protocols we need a new succinct construction for boolean
combinations of atomic predicates.  This obstacle is overcome by designing new protocols for threshold and remainder predicates that work under \emph{reversible dynamic
  initialization}.  Intuitively, agents are allowed to dynamically ``enter'' and
``leave'' the protocol through the initial states (dynamic
initialization). Further, every interaction can be undone  (reversibility), until a certain condition
is met, after which the protocol converges to the correct output
for the current input. We expect protocols with reversible dynamic initialization to prove
useful in other contexts, since they allow a protocol designer to cope with ``wrong''
non-deterministic choices. 

\smallskip\parag{Small inputs} The family of protocols for small inputs is designed from scratch. We exploit that there are few inputs of small size. 
So it becomes possible to design one protocol for
each possible size of the population, and combine them appropriately. Once the population size is
fixed, it is possible to design agents that check if they have interacted with all other agents. This
is used to simulate the \emph{concatenation operator} of sequential
programs, which allows for boolean combinations and succinct
evaluation of linear combinations. 

\smallskip\parag{Relation to previous work} In \cite{BEJ18}, we designed succinct protocols with leaders for systems of linear equations. More precisely, we constructed a protocol with $\O((m + k)(n+\log m))$ states and $\O(m(n+\log m))$ leaders that computes a given predicate $A\vec{x} \geq \vec{c}$, where $A \in \Z^{m \times k}$ and $n$ is the number of bits of the largest entry in $A$ and $\vec{c}$, in absolute value. Representing $A\vec{x} \geq \vec{c}$ as a formula $\varphi$ of QFPA, we obtain a protocol with $\O(|\varphi|^2)$ states and $\O(|\varphi|^2)$ leaders that computes $\varphi$.  However,  in \cite{BEJ18} no succinct protocols for formulas with remainder predicates are given, and the paper makes extensive use of leaders. 

\smallskip\parag{Organization} Sections \ref{sec:prelims} and \ref{sec:popprot} introduce basic notation and definitions. Section \ref{sec:main} presents the main result. Sections \ref{sec:large} and \ref{sec:small} present the constructions of the protocols for large and small inputs, respectively.
Section \ref{sec:conc} presents conclusions. For space reasons, several proofs are only sketched. Detailed proofs are given in the appendices of this paper.

\section{Preliminaries}\label{sec:prelims}
\parag{Notation}
We write $\Z$ to denote the set of integers, $\N$ to denote the set of
non negative integers $\{0, 1, \ldots\}$, $[n]$ to denote $\{1,
2, \ldots, n\}$, and $\N^E$ to denote the set of all multisets over
$E$, \ie\ unordered vectors with components labeled by
$E$. The \emph{size} of a multiset $\vec{v} \in \N^E$ is defined as
$|\vec{v}| \defeq \sum_{e \in E} \vec{v}(e)$. 
The set of all multisets over $E$ with size $s \geq 0$ is $E^\ms{s} \defeq 
\left\{\vec{v} \in \N^E \colon |\vec{v}| = s \right\}$. We sometimes write
multisets using set-like notation, \eg\ $\multiset{a, 2 \cdot b}$
denotes the multiset $\vec{v}$ such that $\vec{v}(a) = 1$, $\vec{v}(b)
= 2$ and $\vec{v}(e) = 0$ for every $e \in E \setminus
\{a, b\}$. The empty multiset $\multiset{}$ is instead denoted
$\vec{0}$ for readability. For every $\vec{u}, \vec{v} \in \N^E$, we
write $\vec{u} \geq \vec{v}$ if $\vec{u}(e) \geq \vec{v}(e)$ for every
$e \in E$. Moreover, we write $\vec{u} \mplus \vec{v}$ to denote the
multiset $\vec{w} \in \N^E$ such that $\vec{w}(e) \defeq \vec{u}(e) +
\vec{v}(e)$ for every $e \in E$. The multiset $\vec{u} \mminus
\vec{v}$ is defined analogously with $-$ instead of $+$, provided that
$\vec{u} \geq \vec{v}$.

\parag{Presburger arithmetic}
\emph{Presburger arithmetic} (PA) is the first-order theory of $\N$ with
addition, \ie\ $\mathsf{FO}(\N, +)$. For example, the PA
formula $\psi(x, y, z) = \exists x' \exists z' (x = x' + x') \land (y
= z + z') \land \neg(z' = 0)$ states that $x$ is even and that $y >
z$. It is well-known that 
for every formula of PA there is an equivalent formula of
quantifier-free Presburger arithmetic (QFPA)~\cite{Pr29}, the theory 
with syntax given by the
grammar
$$\varphi(\vec{v}) ::= \vec{a} \cdot \vec{v} > b \mid \vec{a} \cdot
\vec{v} \equiv_c b \mid \varphi(\vec{v}) \land \varphi(\vec{v}) \mid
\varphi(\vec{v}) \lor \varphi(\vec{v}) \mid \neg \varphi(\vec{v}) $$
where $\vec{a} \in \Z^X$, $b \in \Z$, $c \in \N_{\geq 2}$, and
$\equiv_c$ denotes equality modulo $c$. For example, the formula
$\psi(x,y,z)$ above is equivalent to $(x \equiv_2 0) \land (y - z \geq
1)$. Throughout the paper, we refer to any formula of QFPA, or the 
predicate $\N^X \to \{0, 1\}$ it denotes,
as a \emph{predicate}. Predicates of the form $\vec{a} \cdot \vec{v} > b$
and $\vec{a} \cdot \vec{v} \equiv_c b$ are \emph{atomic}, and they are called
\emph{threshold} and \emph{remainder} predicates respectively. The
\emph{max-norm} $\norm{\varphi}$ of a predicate $\varphi$ is the
largest absolute value among all coefficients occurring within
$\varphi$. The \emph{length} $\len(\varphi)$ of a predicate $\varphi$
is the number of boolean operators occurring within $\varphi$. The
\emph{bit length} of a predicate $\varphi$, over variables $X$, is
defined as $|\varphi| \defeq \len(\varphi) + \log\norm{\varphi} +
|X|$.
We lift these definitions to sets of predicates in the natural way: given a finite set $P$ of predicates, we define its \emph{size}
$\size(P)$ as the number of predicates in $P$, its \emph{length} as $\len(P) \defeq
\sum_{\varphi \in P} \len(\varphi)$, its \emph{norm} as $\norm{P}
\defeq \max\{\norm{\varphi} : \varphi \in P\}$, and its \emph{bit
	length} as $|P| \defeq \allowbreak \size(P) + \len(P)+\log
\norm{P}+|X|$. Note that $\len(P)=0$ iff $P$ only contains atomic
predicates.

\section{Population protocols}\label{sec:popprot}
A \emph{population protocol} is a tuple $\PP = (Q, T, L, X, I, O)$
where
\begin{itemize}
\item $Q$ is a finite set whose elements are called \emph{states};

\item $T \subseteq \{(\vec{p}, \vec{q}) \in \N^Q \times \N^Q :
  |\vec{p}| = |\vec{q}|\}$ is a finite set of \emph{transitions} containing 
  the set $\{(\vec{p}, \vec{p}) : \vec{p} \in \N^Q, |\vec{p}| = 2\}$; 

\item $L \in \N^Q$ is the \emph{leader multiset};

\item $X$ is a finite set whose elements are called \emph{input
  variables};

\item $I \colon X \to Q$ is the \emph{input mapping};

\item $O \colon Q \to \{0, 1, \bot\}$ is the \emph{output mapping}.
\end{itemize}

For readability, we often write $t \colon \vec{p}
\mapsto \vec{q}$ to denote a transition $t = (\vec{p}, \vec{q})$.
Given $\kwayvar \geq 2$, we say that $t$ is \emph{$\kwayvar$-way} if $|\vec{p}| \leq \kwayvar$.

In the standard syntax of population protocols $T$ is a subset of $\N^2 \times \N^2$, and $O \colon Q \to \{0, 1\}$. These differences are discussed at the end of this section.

\parag{Inputs and configurations} 
An \emph{input} is a multiset $\vec{v} \in \N^X$ such that $|\vec{v}| \geq 2$, and a \emph{configuration} is a multiset $C \in
\N^Q$ such that $|C| \geq 2$. 
Intuitively, a configuration rep\-resents a
population of agents where $C(q)$ denotes the number of agents in
state $q$. The \emph{initial configuration $C_{\vec{v}}$ for
input $\vec{v}$} is defined as  $C_{\vec{v}} \defeq L \mplus \multiset{\vec{v}(x) \cdot I(x) : x
  \in X}$. 

The \emph{support} and \emph{$b$-support} of a configuration $C$ are
respectively defined as $\supp{C} \defeq \{q \in Q : C(q) > 0\}$ and
$\supp{C}_b = \{q \in \supp{C} : O(q) = b\}$. The \emph{output} of a
configuration $C$ is defined as $O(C) \defeq b$ if $\supp{C}_b \neq
\emptyset$ and $\supp{C}_{\neg b} = \emptyset$ for some $b \in \{0,
1\}$, and $O(C) \defeq \bot$ otherwise. Loosely speaking, if $O(q) = \bot$ then agents
in state $q$ have no output, and a population has output $b \in \{0, 1\}$ if all agents \emph{with
output} have output $b$. 

\parag{Executions} A transition $t = (\vec{p}, \vec{q})$ is \emph{enabled} 
in a configuration $C$ if $C \geq \vec{p}$, and \emph{disabled}
otherwise. Because of our assumption on $T$, every configuration enables at least one transition.
If $t$ is enabled in $C$, then it can be \emph{fired}
leading to configuration $C' \defeq C \mminus \vec{p} \mplus \vec{q}$,
which we denote $C \trans{t} C'$. For every set of transitions $S$, we
write $C \trans{S} C'$ if $C \trans{t} C'$ for some $t \in S$. We
denote the reflexive and transitive closure of $\trans{S}$ by
$\trans{S^*}$. If $S$ is the set of all transitions of the protocol
under consideration, then we simply write $\trans{}$ and $\trans{*}$.

An execution is a sequence of configurations $\sigma = C_0 C_1 \cdots$
such that $C_i \trans{} C_{i+1}$ for every $i \in \N$. We write
$\sigma_i$ to denote configuration $C_i$. The \emph{output} of an
execution $\sigma$ is defined as follows. If there exist $i \in \N$
and $b \in \{0, 1\}$ such that $O(\sigma_i) = O(\sigma_{i+1}) = \cdots
= b$, then $O(\sigma) \defeq b$, and otherwise $O(\sigma) \defeq
\bot$.

\parag{Computations} An execution $\sigma$ is \emph{fair} if for every
configuration $D$ the following holds: $$ \text{if } |\{i \in \N :
\sigma_i \trans{*} D\}| \text{ is infinite, then } |\{i \in \N :
\sigma_i = D\}| \text{ is infinite.}$$ In other words, fairness
ensures that an execution cannot avoid a configuration forever. We say
that a population protocol \emph{computes} a predicate $\varphi \colon
\N^X \to \{0, 1\}$ if for every $\vec{v} \in \N^X$ and every fair
execution $\sigma$ starting from $C_{\vec{v}}$, it is the
case that $O(\sigma) = \varphi(\vec{v})$. Two protocols are \emph{equivalent} if they
compute the same predicate. It is known that population protocols compute precisely the 
Presburger-definable predicates~\cite{AAE06,EGLM17}. 

\begin{example}\label{ex:flock}
  Let $\PP_n = (Q, T, \vec{0}, \{x\}, I, O)$ be the protocol where $Q
  \defeq \{0, 1, 2, 3, \ldots, 2^n\}$, $I(x) \defeq 1$, $O(a) = 1
  \defiff a = 2^n$, and $T$ contains a transition, for each $a, b \in
  Q$, of the form $\multiset{a, b} \mapsto \multiset{0, a + b}$ if $a
  + b < 2^n$, and $\multiset{a, b} \mapsto \multiset{2^n, 2^n}$ if $a
  + b \geq 2^n$. It is readily seen that $\PP_n$ computes $\varphi(x)
  \defeq (x \geq 2^n)$. Intuitively, each agent stores a number,
  initially 1. When two agents meet, one of them stores the sum of
  their values and the other one stores 0, with sums capping at
  $2^n$. Once an agent reaches this cap, all agents eventually get
  converted to~$2^n$.

  Now, consider the protocol $\PP'_n = (Q', T', \vec{0}, \{x\}, I',
  O')$, where $Q' \defeq \{0, 2^0, 2^1, \ldots, 2^n\}$, $I'(x) \defeq
  2^0$, $O'(a) = 1 \defiff a = 2^n$, and $T'$ contains a transition
  for each $0 \leq i < n$ of the form $\multiset{2^i, 2^i} \mapsto
  \multiset{0, 2^{i+1}}$, and a transition for each $a \in Q'$ of the
  form $\multiset{a, 2^n} \mapsto \multiset{2^n, 2^n}$. Using similar
  arguments as above, it follows that $\PP'_n$ also computes
  $\varphi$, but more succinctly: While $\PP_n$ has $2^n + 1$ states,
  $\PP'_n$ has only $n + 1$ states.
\end{example} 

\parag{Types of protocols} A protocol $\PP = (Q, T, L, X, I, O)$ is
\begin{itemize}
\item \emph{leaderless} if $|L|=0$, and \emph{has $|L|$ leaders} otherwise;
\item \emph{$\kwayvar$-way} if all its transitions are $\kwayvar$-way;
\item\emph{simple} if there exist $\q{f}, \q{t} \in Q$ such that $O(\q{f}) = 0$, $O(\q{t}) = 1$ 
and $O(q) = \bot$ for every $q \in Q \setminus \{\q{f}, \q{t}\}$ (\ie, the output is  determined by the number of agents in $\q{f}$ and $\q{t}$.)
\end{itemize}
Protocols with leaders and leaderless protocols compute the same predicates  \cite{AAE06}. Every $\kwayvar$-way protocol can be transformed into an equivalent 2-way protocol with a polynomial increase in the number of transitions \cite{BEJ18}. Finally, every protocol can be transformed into an equivalent simple protocol with a polynomial increase in the number of states (see \Cref{appendix_output}).

\section{Main result}\label{sec:main}
The main result of this paper is the following theorem:

\begin{theorem} \label{thm:main}
  Every predicate $\varphi$ of QFPA
  can be computed by a leaderless population protocol $\PP$ with
  $\O(\poly(|\varphi|))$ states. Moreover, $\PP$ can be constructed in polynomial time.
\end{theorem}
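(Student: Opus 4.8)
I would split the input space into \emph{small} inputs, those $\vec{v}\in\N^X$ with $|\vec{v}| < \cutoffvar$ for a suitable threshold $\cutoffvar = \Theta(|\varphi|^3)$, and \emph{large} inputs, those with $|\vec{v}| \geq \cutoffvar$. The plan is to build two leaderless protocols $\PP_{\mathrm{large}}$ and $\PP_{\mathrm{small}}$, each with $\O(\poly(|\varphi|))$ states, such that $\PP_{\mathrm{large}}$ computes $\varphi$ correctly on every large input (and may behave arbitrarily, e.g.\ output $\bot$, on small ones), $\PP_{\mathrm{small}}$ computes $\varphi$ correctly on every small input (and arbitrarily on large ones), and then to combine them into a single leaderless protocol. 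As preprocessing I put $\varphi$ into a normal form in which threshold and remainder predicates occur only at the leaves of a boolean syntax tree, at a polynomial cost in $|\varphi|$; by the remarks at the end of \Cref{sec:popprot} it then suffices to produce $\kwayvar$-way protocols and apply the standard reductions at the very end.

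\textbf{Large inputs (\Cref{sec:large}).}
Here I would start from the succinct \emph{protocol with leaders} for systems of linear inequalities of \cite{BEJ18} ($\O(|\varphi|^2)$ states and $\O(|\varphi|^2)$ leaders) and extend it to remainder predicates and arbitrary boolean combinations using $\O(|\varphi|^3)$ leaders. The crux is to remove the leaders once the population is large: a bounded number of input agents can \emph{volunteer} to play the leader roles, but this election is nondeterministic and can be ``wrong'' — too many or too few volunteers, duplicated or missing contributions. The remedy is to redesign the atomic-predicate protocols so that they stay correct under \emph{reversible dynamic initialization}: agents may enter and leave through the initial states, every interaction can be undone, and only once a detectable guard is reached does the protocol commit and converge to $\varphi(\vec{v})$ for the current input. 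Composing these reversible-dynamic-initialization gadgets into a succinct protocol for boolean combinations, while preserving both reversibility and correctness, is the heart of this part; granting it, a leaderless protocol for large inputs follows by letting the agents elect the $\O(|\varphi|^3)$ virtual leaders and run the reversible protocol on the remaining agents, which is safe precisely because a bad election can always be undone.

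\textbf{Small inputs (\Cref{sec:small}).}
For $|\vec{v}| < \cutoffvar$ I would design the protocol from scratch. Once the population size $s$ is fixed, a designated agent can \emph{detect} that it has interacted with each of the other $s-1$ agents; this barrier-synchronisation primitive lets one simulate the \emph{sequential composition} of sub-protocols inside a population of size $s$. With sequential composition in hand I evaluate, for each variable $x$, the scaled value $\vec{a}(x)\cdot\vec{v}(x)$ by repeated doubling in the style of $\PP'_n$ in \Cref{ex:flock} (only $\O(\log\norm{\varphi})$ states per variable), accumulate the linear form $\vec{a}\cdot\vec{v}$, test the threshold or remainder condition, and then walk through the boolean syntax tree. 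Doing this for every $s\in\{2,\dots,\cutoffvar-1\}$ and having the agents first agree on their population size yields a protocol correct on all small inputs; the point for succinctness is that the size-$s$ sub-protocols are uniform in $s$ — they differ only through a counter bounded by $\cutoffvar$ — so the total state count remains $\O(\poly(|\varphi|))$.

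\textbf{Combination, constructibility, and the main obstacle.}
Finally I would take essentially the product of $\PP_{\mathrm{large}}$ and $\PP_{\mathrm{small}}$ together with an auxiliary sub-protocol that stably decides the threshold predicate $|\vec{v}| \geq \cutoffvar$; each agent adopts the output of $\PP_{\mathrm{large}}$ once convinced the input is large and of $\PP_{\mathrm{small}}$ once convinced it is small. Since the size-regime computation is stable and each of $\PP_{\mathrm{large}},\PP_{\mathrm{small}}$ converges on its own regime, the combined leaderless protocol converges to $\varphi(\vec{v})$ on every input and still has $\O(\poly(|\varphi|))$ states; all constructions are explicit and syntactic, so polynomial-time constructibility is immediate. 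I expect the main obstacle, by a wide margin, to be the large-input construction — making the atomic-predicate protocols robust to reversible dynamic initialization and, above all, composing them succinctly for boolean combinations without ever locking in a wrong nondeterministic guess; the small-input construction is conceptually lighter but still needs care to keep the doubling-based evaluation and the size-uniform barrier synchronisation within a polynomial number of states.
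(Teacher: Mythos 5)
Your high-level decomposition is the same as the paper's (split into large and small inputs at a threshold $\cutoffvar=\Theta(|\varphi|^3)$, use reversible dynamic initialization for the large regime, fixed-size protocols with a barrier-synchronising leader for the small regime, and a product at the end), but your small-input construction has a genuine gap. You propose to evaluate $\vec{a}\cdot\vec{v}$ by repeated doubling in the style of $\PP'_n$ and to ``accumulate the linear form'' before testing the threshold or remainder condition. This cannot work for all small populations: the value $\vec{a}\cdot\vec{v}$ can have $\Theta(\log\norm{\varphi})=\Theta(|\varphi|)$ bits, while a population of size $s$ whose agents each have $\poly(|\varphi|)$ states can collectively store only $\O(s\log|\varphi|)$ bits. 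For $s$ as small as $2$ or $3$ (which the theorem must handle) there is simply nowhere to put the accumulated sum, in a single leader or distributed across agents. This is exactly the obstacle the paper flags before \cref{thm:halting:threshold}, and its way around it is different from yours: the leader simulates a sequential bit-probing algorithm ($\GreaterSum$/\textit{Probe}) that compares $\vec{\alpha}\cdot\vec{x}$ and $\vec{\beta}\cdot\vec{y}$ bit by bit, never materialising either sum and only ever storing a carry bounded by the population size $i$, which fits in $\O(\log i + \log|\varphi|)$ bits. Without some such idea your small-input protocols are not succinct.

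A secondary, conceptual misplacement concerns the large-input part. You attribute reversible dynamic initialization to undoing a ``wrong'' leader election. In the paper, leader (helper) removal does not rely on reversibility at all: it relies on the \emph{helper} property (adding surplus leaders does not change the computed predicate) together with a counting phase that deterministically guarantees each helper role is filled by at least one agent once $|\vec{v}|\geq\cutoffvar$; surplus volunteers are then harmless by definition. Reversibility is needed for a different nondeterministic choice: when computing a set of $k$ atomic predicates simultaneously, each input agent can only be dispatched to one sub-protocol, so the input $\vec{x}$ is split as $k\overline{\vec{x}}+\underline{\vec{x}}$ and dispatched nondeterministically, and RDI is what lets the protocol recall a wrong dispatch. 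Your plan of making the election itself reversible while the simulated computation is already under way is considerably more delicate and is not what is needed; conversely, without the $k\overline{\vec{x}}+\underline{\vec{x}}$ dispatching trick you have no succinct way to feed $k$ atomic sub-protocols from a single copy of the input. The combination step you describe (product with a size-test sub-protocol) is fine, though the paper's variant — having each regime's protocol default to output $1$ outside its regime and taking a plain conjunction — is simpler.
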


To prove Theorem~\ref{thm:main}, we first provide a construction that
uses $\cutoffvar \in \O(|\varphi|^3)$ leaders. If there 
are at least $|\vec{v}| \geq \ell$ input agents 
$\vec{v}$ ({\em large inputs}), we will show how the protocol can be made leaderless by having agents encode both their state and the state of some leader. Otherwise, $|\vec{v}| < \ell$ ({\em
  small inputs}), and we will resort to a special construction, with a
single leader, that only works for populations of bounded size. We
will show how the leader can be simulated collectively by the agents.
Hence, we will construct succinct protocols computing $\varphi$ for
large and small inputs, respectively. Formally, we prove:

\begin{lemma} \label{lem:main}
  Let $\varphi$ be a predicate over variables $X$. There exist $\cutoffvar \in \O(|\varphi|^3)$ and leaderless protocols $\PP_{\geq \cutoffvar}$ and $\PP_{< \cutoffvar}$ with $\O(\poly(|\varphi|))$ states such that:
  \begin{enumerate}[(a)]
  \item $\PP_{\geq \cutoffvar}$ computes predicate $(|\vec{v}| \geq \cutoffvar)
    \rightarrow \varphi(\vec{v})$, and \label{itm:lem:main:large}

  \item $\PP_{< \cutoffvar}$ computes predicate $(|\vec{v}| < \cutoffvar) \rightarrow
    \varphi(\vec{v})$. \label{itm:lem:main:small}
  \end{enumerate}
\end{lemma}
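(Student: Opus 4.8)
The plan is to prove the two items by two independent constructions, each followed by a ``conditioning'' step that makes the implication trivially true on the complementary input range. For item~\eqref{itm:lem:main:large}, I would start from the protocol-with-leaders construction announced in the introduction: given $\varphi$ one builds a simple protocol $\PP_\varphi$ with $\O(|\varphi|^3)$ leaders and $\O(\poly(|\varphi|))$ states computing $\varphi$, obtained by composing succinct protocols for threshold and remainder predicates (the reversible dynamic initialization variants mentioned in the introduction) under boolean combinations. Set $\cutoffvar$ to be (a constant times) the number of leaders of $\PP_\varphi$, so $\cutoffvar \in \O(|\varphi|^3)$. The key idea is that when $|\vec{v}| \geq \cutoffvar$ there are enough input agents to \emph{elect} the required multiset of leaders internally: each agent carries, in addition to its ``payload'' component, a second component that may hold a leader role or be empty, and a leaderless sub-protocol (a race/cancellation scheme) distributes exactly the leader multiset $L$ among the first $|L|$ agents, with the remaining agents keeping an empty leader-slot; once the leader roles are fixed the agents simulate $\PP_\varphi$ on the product. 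When $|\vec{v}| < \cutoffvar$ this election may fail (not enough agents), but then we also want the global output to be $1$ regardless — so in that regime the agents should detect ``the population is too small'' and converge to consensus~$1$. This detection is the standard trick: agents can recognize, through pairwise cancellation, whether the population size is below the fixed threshold $\cutoffvar$, since $\cutoffvar$ is a constant for a fixed $\varphi$; if it is, they output $1$, matching $(|\vec{v}|\geq\cutoffvar)\to\varphi(\vec{v})$ vacuously.

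For item~\eqref{itm:lem:main:small}, I would take the single-leader, bounded-population construction sketched in the introduction: for each population size $s < \cutoffvar$ one designs a protocol $\PP^{(s)}$ with one leader and $\O(\poly(|\varphi|))$ states that correctly computes $\varphi$ on inputs of size exactly $s$ — exploiting that once $s$ is fixed an agent can detect that it has met all $s-1$ other agents, which lets it play the role of a program counter and thus simulate a sequential evaluator of $\varphi$ that computes the linear forms $\vec a\cdot\vec v$ succinctly (binary encoding) and then evaluates the boolean combination of the comparisons. Since $\cutoffvar$ is constant for fixed $\varphi$, one can take the (finite) disjoint union over all $s\in\{2,\dots,\cutoffvar-1\}$ of these protocols, sharing a single leader, and prepend a phase in which the leader learns the actual population size $s$ (again by counting met agents up to the constant bound $\cutoffvar$) and then dispatches every agent into the $\PP^{(s)}$ component; on inputs with $|\vec v|\geq\cutoffvar$ the size-detection saturates and all agents converge to consensus~$1$, matching $(|\vec{v}|<\cutoffvar)\to\varphi(\vec{v})$ vacuously. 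The number of states is a constant (depending on $\cutoffvar$, hence on $|\varphi|$) times $\O(\poly(|\varphi|))$, but care is needed: naively this is $\O(\cutoffvar)\cdot\poly = \O(|\varphi|^3\cdot\poly(|\varphi|))$, still polynomial, which suffices; a tighter bound would reuse the arithmetic component across all $s$.

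In both items the final step is a correctness argument over \emph{fair} executions: one shows that from any configuration reachable after the ``election'' (resp.\ ``size-detection'') phase has stabilized, the simulated protocol is faithfully run and therefore, by fairness and the correctness of $\PP_\varphi$ (resp.\ $\PP^{(s)}$), the output stabilizes to the intended value; and one shows the phase itself stabilizes (the race/counting dynamics are acyclic in the relevant component, so fairness forces termination). The main obstacle is the large-input construction: making the leader election \emph{reversible}-friendly, so that a wrong early guess about leader roles does not deadlock the payload protocol, and ensuring the product with the underlying threshold/remainder protocols remains polynomial in the number of states rather than exponential — this is exactly where the reversible dynamic initialization machinery of the large-input section is needed, and it is why \eqref{itm:lem:main:large} is deferred to Section~\ref{sec:large} and \eqref{itm:lem:main:small} to Section~\ref{sec:small}. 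A secondary subtlety is choosing $\cutoffvar$ large enough for the election in \eqref{itm:lem:main:large} to have room for both the $|L|$ leader roles \emph{and} at least two payload agents left to run $\PP_\varphi$ meaningfully, while keeping $\cutoffvar\in\O(|\varphi|^3)$; since $|L|\in\O(|\varphi|^3)$ this is immediate with a suitable constant.
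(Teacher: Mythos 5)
Your high-level architecture matches the paper's (two independent families of protocols, each made vacuously correct on the complementary input range, proved in Sections~\ref{sec:large} and~\ref{sec:small}), but two of your key steps have genuine gaps. For item~\eqref{itm:lem:main:large}, you propose to ``distribute \emph{exactly} the leader multiset $L$ among the first $|L|$ agents'' and to start the simulation ``once the leader roles are fixed.'' In a leaderless population protocol the agents can never know that the election/distribution has finished, so the simulation must begin while role assignments may still change, and an exact distribution is not achievable in a stabilizing way. The paper's way out is not the reversibility machinery (which it uses for a different purpose, namely dispatching input agents among the $k$ atomic sub-protocols): it is the notion of a protocol \emph{with helpers}, i.e.\ one whose computed predicate is unchanged when \emph{extra} leaders are added. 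This monotonicity is what makes the construction work: a simple counting phase guarantees that each of the $\cutoffvar$ helper roles is taken by \emph{at least} one agent, possibly many, and over-assignment is harmless. Without establishing that the underlying protocol tolerates surplus helpers, your election scheme does not close.

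For item~\eqref{itm:lem:main:small} there are two further problems. First, your construction ends with a protocol that still has one leader, whereas the lemma demands $\PP_{<\cutoffvar}$ be leaderless; removing that last leader for small populations is a substantial step in the paper (a provisional leader election interleaved with population-size counting, where each increment of the provisional leader's counter triggers a full reset of the population to the initial configuration of the next fixed-size protocol, so that the eventually-elected leader restarts the simulation from a sane state). Second, for the atomic predicates you say the leader ``computes the linear forms $\vec{a}\cdot\vec{v}$ succinctly (binary encoding) and then evaluates the comparisons.'' Storing $\vec{a}\cdot\vec{v}$ in a single agent's state requires $\Omega(|\varphi|+\log i)$ bits, hence $2^{\Omega(|\varphi|)}$ states, which breaks the polynomial bound. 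The paper instead compares the two sums bit by bit, most significant bit first, recomputing each bit with a carry-only adder so that the leader ever stores only a bit position and a carry bounded by the population size; some such memory-bounded evaluation scheme is needed and is missing from your plan.
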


\cref{thm:main} follows immediately from the lemma: it suffices to
take the conjunction of both protocols, which only yields a quadratic
blow-up on the number of states, using the classical product
construction~\cite{AADFP04}.  The rest of the paper is dedicated to
proving \cref{lem:main}.  Parts~\eqref{itm:lem:main:large}
and~\eqref{itm:lem:main:small} are shown in Sections~\ref{sec:large}
and~\ref{sec:small}, respectively.

In the remainder of the paper, whenever we claim the existence of some
protocol $\PP$, we also claim polynomial-time constructibility of
$\PP$ without mentioning it explicitly.

\section{Succinct protocols for large populations}\label{sec:large}
\newcommand{\Ipt}{\mathsf{In}}
\newcommand{\ipt}{\mathsf{in}}
\newcommand{\Opt}{\mathsf{Out}}
\newcommand{\opt}{\mathsf{out}}
\newcommand{\upc}{{\upharpoonleft}}
\newcommand{\divi}[2]{ #1 \, \div \, #2}
\newcommand{\rem}[2]{ #1  \, \text{mod} \, #2}

We show that, for every predicate $\varphi$, there exists a constant
$\cutoffvar \in \O(|\varphi|^3)$ and a succinct protocol $\PP_{\geq \cutoffvar}$ computing $(|\vec{v}| \geq \cutoffvar) \rightarrow \varphi(\vec{v})$.
Throughout this section, we say that $n \in \N$ is \emph{large} if $n \geq \cutoffvar$, and that a protocol \emph{computes $\varphi$ for large inputs} if it computes $(|\vec{v}| \geq  \cutoffvar) \rightarrow \varphi(\vec{v})$.

We present the proof in a top-down manner, by means of a chain of
statements of the form ``$A \leftarrow B$, $B \leftarrow C$,
$C \leftarrow D$, and $D$''. Roughly speaking, and using notions that will be defined in the forthcoming subsections:
\begin{itemize}
\item \cref{subsec:helpers} introduces protocols with helpers, a
  special class of protocols with leaders. The section shows:
  $\varphi$ is computable for large inputs by a succinct
  leaderless protocol~(A), if it is computable for large inputs by a succinct
  protocol with helpers~(B).

\item \cref{subsec:finsets} defines protocols that simultaneously
  compute a set of predicates. The section proves: (B) holds if the set $P$
  of atomic predicates occurring within $\varphi$ is simultaneously
  computable for large inputs by a succinct protocol with helpers~(C).

\item \cref{subsec:finat} introduces protocols with reversible dynamic
  initialization. The section shows: (C) holds if each atomic predicate of $P$
  is computable for large inputs by a succinct protocol with helpers
  and reversible dynamic initialization~(D).

\item \cref{subsec:findyn} shows that~(D) holds by exhibiting succinct
  protocols with helpers and reversible dynamic initialization that
  compute atomic predicates for large inputs.
\end{itemize}
\smallskip
Detailed proofs and some formal definitions of this section are found in \Cref{app:large}.

\subsection{From protocols with helpers to leaderless protocols}
\label{subsec:helpers}

Intuitively, a protocol with helpers is a protocol with leaders satisfying an additional
property: adding more leaders does not change the predicate computed
by the protocol. Formally, let $\PP = (Q, T, L, X, I, O)$ be a population
protocol computing a predicate $\varphi$. We say that $\PP$ is a
protocol \emph{with helpers} if for every $L' \succeq L$ the protocol
$\PP' = (Q, T, L', X, I, O)$ also computes $\varphi$, where $L'
\succeq L \; \defeq \; \forall q \in Q \colon (L'(q) = L(q)=0 \vee
L'(q) \geq L(q) > 0)$. If $|L| = \cutoffvar$, then we say that $\PP$ is a protocol \emph{ with $\cutoffvar$ helpers}.

\begin{restatable}{theorem}{thmRemoveHelpers} \label{thm:remove:helpers}
  Let $\PP = (Q, T, L, X, I, O)$ be a
  $\kwayvar$-way population protocol with $\cutoffvar$-helpers
  computing some predicate $\varphi$. There exists a 2-way leaderless
  population protocol with $\O(\cutoffvar \cdot |X| + (\kwayvar \cdot |T|+|Q|)^2)$ states that computes $(|\vec{v}|  \geq \cutoffvar) \rightarrow \varphi(\vec{v})$.
\end{restatable}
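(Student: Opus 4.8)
The plan is to make the $\cutoffvar$ helpers "disappear" by having ordinary input agents impersonate them, in such a way that (i) when the population is large, $|\vec{v}| \geq \cutoffvar$, enough agents can take on helper roles so that the original protocol $\PP$ is faithfully simulated (together with at least $\cutoffvar$ helpers, which by the helper property does not change the computed predicate), and (ii) when $|\vec{v}| < \cutoffvar$ the protocol may behave arbitrarily, which is fine since the target predicate $(|\vec{v}| \geq \cutoffvar) \rightarrow \varphi(\vec{v})$ is vacuously $1$ there. Concretely, each agent will carry two components: a "$\PP$-component" holding a state of $Q$ (the role it currently simulates), and a "recruiter/claim" component used to hand out the $\cutoffvar$ distinct leader tokens that $L$ prescribes. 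Initially every agent sits in its input state $I(x)$ for the appropriate $x$, and additionally some designated "seed" agents (or a distinguished input state) are tasked with distributing the leader tokens $L(q_1), \ldots$ one at a time upon interaction; an agent that receives the $j$-th leader token overwrites its $\PP$-component with the corresponding leader state. Since $|L| = \cutoffvar$, this distribution phase can be encoded with $\O(\cutoffvar \cdot |X|)$ states by attaching a small "how many tokens of which type are still to be placed" counter to the input agents — the bound $\O(\cutoffvar \cdot |X|)$ in the statement signals exactly this: the cost of the initialization gadget is linear in $\cutoffvar$ and in the number of input states.

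The core simulation step is the standard trick for making a protocol with leaders leaderless: once tokens are distributed, the agents run a $\kwayvar$-way transition of $\PP$ by having up to $\kwayvar$ agents pool their $\PP$-components, apply a transition $t \colon \vec{p} \mapsto \vec{q}$ of $T$, and redistribute the resulting $\PP$-states among themselves. To turn this $\kwayvar$-way behaviour into $2$-way transitions, I would use the construction from \cite{BEJ18} quoted earlier (every $\kwayvar$-way protocol has an equivalent $2$-way protocol with a polynomial blow-up in transitions); implemented naively this introduces intermediate "partial-interaction" states whose number is polynomial in $\kwayvar \cdot |T|$ and $|Q|$, which is the source of the $(\kwayvar \cdot |T| + |Q|)^2$ term — a pair of agents in the middle of simulating a $\kwayvar$-way step must jointly remember a bounded amount of information about which transition and which slots are involved, and the number of such pair-states is quadratic. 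One must also ensure that a stuck, half-completed $\kwayvar$-way simulation can always be rolled back (or completed), so that fairness of the leaderless protocol still forces progress; this is where reversibility-style bookkeeping is convenient, and it is compatible with the "undoable" flavour already present in the paper's constructions.

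The key correctness argument is a simulation/bisimulation between fair executions of the leaderless protocol on input $\vec{v}$ with $|\vec{v}| \geq \cutoffvar$ and fair executions of $\PP$ on $\vec{v}$ augmented with some multiset $L' \succeq L$ of helpers: every reachable configuration of the leaderless protocol, after the token-distribution phase has completed (which fairness guarantees), projects onto a configuration of $\PP$ with leader multiset $L'$ where $|L'| = \cutoffvar$ but possibly $L' \neq L$ — precisely the situation the helper property is designed to tolerate — and conversely every $\PP$-transition can be mimicked. Hence the output of the leaderless protocol on large inputs equals that of $\PP$, i.e.\ $\varphi(\vec{v})$; on small inputs we only need that every fair execution eventually stabilizes to \emph{some} consensus, and we can force consensus $1$ there by a cheap "not enough agents to finish distribution $\Rightarrow$ output $1$" detection gadget folded into the initialization component. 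I expect the main obstacle to be exactly this last point together with the interaction between the token-distribution phase and the simulation phase: one must certify that on large inputs the distribution phase always terminates with a legal assignment of all $\cutoffvar$ tokens (no token gets "lost" or duplicated), that the simulation phase cannot deadlock in a half-finished $\kwayvar$-way step, and that no spurious stable consensus is reachable before distribution finishes — all while keeping the extra state used for detection within the $\O(\cutoffvar \cdot |X|)$ budget. The polynomial-time constructibility is immediate once the gadgets are described, since each is defined by explicit, locally checkable rules.
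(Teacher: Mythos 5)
There is a genuine gap in the core of your construction: you have an agent that receives the $j$-th leader token \emph{overwrite} its $\PP$-component with the corresponding leader state. In a leaderless protocol every agent is an input agent, so the $\cutoffvar$ agents recruited as helpers would stop representing their unit of input, and the simulated run of $\PP$ would start from input $\vec{v}'$ with $|\vec{v}'| = |\vec{v}| - \cutoffvar$ rather than $\vec{v}$. The protocol would then stabilize to $\varphi(\vec{v}')$, not $\varphi(\vec{v})$, and no amount of care in the token-distribution or rollback machinery repairs this. The paper's fix is precisely the idea your plan is missing: after the counting phase, an agent moves to a state that is a \emph{multiset of size two} over $Q$, namely $\multiset{I(x), h_i}$, so that it simultaneously plays one regular agent \emph{and} one helper; the simulation transitions let two such agents pool their four simulated $\PP$-agents. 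This is also the true source of the quadratic term $(\kwayvar \cdot |T| + |Q|)^2$ — it counts the size-two multisets $Q^{\ms{2}}$ over the state set of the 2-way protocol obtained by first applying the conversion of \cite{BEJ18} to $\PP$ itself (which has $O(|Q| + \kwayvar|T|)$ states) — not, as you guess, partial-interaction states introduced by a later $\kwayvar$-to-$2$-way conversion of the simulation.

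Two secondary points. First, your token-distribution phase relies on ``designated seed agents,'' which do not exist in a leaderless protocol, and you then need to certify that no token is lost or duplicated; the paper sidesteps both issues with the counting transitions $\multiset{(x,i),(x,i)} \mapsto \multiset{(x,i+1),(x,i)}$, which leave at least one agent at each level $1,\dots,\min(|\vec{v}|,\cutoffvar)$ — duplication of helpers is harmless by the helper property, and loss is impossible by the ``one agent left behind'' invariant. Second, your ``not enough agents to finish distribution $\Rightarrow$ output $1$'' detection gadget is suspicious as an active mechanism (detecting $|\vec{v}| < \cutoffvar$ is itself a threshold predicate); the correct move is passive: give all counting-phase states output $1$, so that when $|\vec{v}| < \cutoffvar$ the population simply stays in those states forever and converges to $1$ by default.
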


\begin{proof}[Proof sketch]
  By~\cite[Lemma~3]{BEJ18}, $\PP$ can be transformed into a 2-way
  population protocol (with helpers\footnote{Lemma~3 of~\cite{BEJ18}
    deals with leaders and not the more specific case of
    helpers. Nonetheless, computation under helpers is preserved as
    the input mapping of $\PP$ remains unchanged in the proof of the
    lemma.})  computing the same predicate $\varphi$, and with at most
  $|Q| + 3\kwayvar \cdot |T|$ states. Thus, we assume $\PP$ to be 2-way in
  the rest of the sketch.

  For simplicity, assume $X = \{x\}$ and $L = \multiset{3 \cdot q, 5
    \cdot q'}$; that is, $\PP$ has 8 helpers, and initially 3 of them
  are in state $q$, and $5$ are in $q'$. We describe a leaderless
  protocol $\PP'$ that simulates $\PP$ for every input $\vec{v}$ such
  that $|\vec{v}| \geq |L| = \cutoffvar$. Intuitively, $\PP'$ runs in two phases:
  \begin{itemize}
  \item In the first phase each agent gets assigned a number between 1
    and 8, ensuring that each number is assigned to at least one agent
    (this is the point at which the condition
	 $|\vec{v}| \geq \cutoffvar$ is needed).
	 At the end of the phase, each agent is in a state of the
    form $(x, i)$, meaning that the agent initially represented one
    unit of input for variable $x$, and that it has been assigned
    number $i$.  To achieve this, initially every agent is placed in
    state $(x, 1)$. Transitions are of the form $ \multiset{ (x, i),
      (x, i)} \mapsto \multiset{ (x, i+1), (x, i)}$ for every $1 \leq
    i \leq 7$. The transitions guarantee that all but one agent is
    promoted to $(x, 2)$, all but one to $(x, 3)$, etc. In other
    words, one agent is ``left behind'' at each step.

  \item In the second phase, an agent's state is a multiset:
    agents in state $(x, i)$ move to state
    $\multiset{I(x), q}$ if $1 \leq i \leq 3$, and to state
    $\multiset{I(x), q'}$ if $4 \leq i \leq 8$.  Intuitively, after
    this move each agent has been assigned two jobs: simultaneously
    simulate a regular agent of $\PP$ starting at state $x$,
    \emph{and} a helper of $L$ starting at state $q$ or $q'$. Since in
    the first phase each number is assigned to at least one agent,
    $\PP'$ has at least 3 agents simulating helpers in state $q$, and
    at least 5 agents simulating helpers in state $q'$. There may be
    many more helpers, but this is harmless, because, by definition,
    additional helpers do not change the computed
    predicate. \smallskip

    The transitions of $\PP'$ are designed according to this double
    role of the agents of $\PP'$. More precisely, for all multisets
    $\vec{p}, \vec{q}, \vec{p}', \vec{q}'$ of size two,
    $\multiset{\vec{p}, \vec{q}} \mapsto \multiset{\vec{p}',
      \vec{q}'}$ is a transition of $\PP'$ if{}f $(\vec{p} + \vec{q})
    \trans{} (\vec{p}' + \vec{q}')$ in $\PP$.
    \qedhere
  \end{itemize}
\end{proof}

\subsection{From multi-output protocols to protocols with helpers}
\label{subsec:finsets}

A \emph{$k$-output population protocol} is a tuple $\mathcal{Q} = (Q,
T, L, X, I, O)$ where $O \colon [k] \times Q \to \{0, 1, \bot\}$ and
$\mathcal{Q}_i \defeq (Q, T, L, X, I, O_i)$ is a population protocol
for every $i \in [k]$, where $O_i$ denotes the mapping such that
$O_i(q) \defeq O(i, q)$ for every $q \in Q$. Intuitively, since each
$\mathcal{Q}_i$ only differs by its output mapping, $\mathcal{Q}$ can
be seen as a single population protocol whose executions have $k$
outputs. More formally, $\mathcal{Q}$ \emph{computes} a set of
predicates $P = \{\varphi_1, \varphi_2, \ldots, \varphi_k\}$ if
$\mathcal{Q}_i$ computes $\varphi_i$ for every $i \in
[k]$. Furthermore, we say that $\mathcal{Q}$ is \emph{simple} if
$\mathcal{Q}_i$ is simple for every $i \in [k]$. Whenever the number
$k$ is irrelevant, we use the term \emph{multi-output population
  protocol} instead of $k$-output population protocol.

\begin{restatable}{proposition}{PropMult}
\label{prop:mult}
Assume that every finite set $A$ of {\em atomic} predicates is computed by some
$|A|$-way multi-output protocol with $\O(|A|^3)$ helpers and states, and
$\O(|A|^5)$ transitions. Every
QFPA predicate $\varphi$ is computed by some
simple $|\varphi|$-way protocol with
$\O(|\varphi|^3)$ helpers and states, and
$\O(|\varphi|^5)$ transitions.
\end{restatable}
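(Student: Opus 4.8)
The plan is to treat the hypothesized succinct multi-output protocol for the atomic subpredicates of $\varphi$ as a black box, and to evaluate the Boolean structure of $\varphi$ on top of it with a \emph{linear} number of extra leader agents, one per node of the syntax tree of $\varphi$, each acting as a gate of that circuit. This detour through multi-output protocols is exactly what makes succinctness possible: the closure-under-$\land$/$\neg$ argument of \cite{AAE06} takes a product over the atomic predicates and blows up exponentially in their number, whereas a single shared population that already computes \emph{all} of them only needs a linear-size circuit-evaluation gadget bolted on. Concretely, let $P=\{\varphi_1,\dots,\varphi_k\}$ be the set of atomic predicates occurring in $\varphi$ and let $\mathcal T$ be the syntax tree of $\varphi$ (leaves labelled by predicates of $P$, internal nodes by $\land,\lor,\neg$); then $k\le\len(\varphi)+1$ and $|P|=\O(|\varphi|)$, so by hypothesis there is a $|P|$-way multi-output protocol $\mathcal Q=(Q_{\mathcal Q},T_{\mathcal Q},L_{\mathcal Q},X,I_{\mathcal Q},O_{\mathcal Q})$ with helpers computing $P$, with $\O(|\varphi|^3)$ helpers and states and $\O(|\varphi|^5)$ transitions. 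Write $O^i_{\mathcal Q}\colon Q_{\mathcal Q}\to\{0,1,\bot\}$ for its $i$-th output map.

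\textbf{Construction.} The protocol $\PP$ is built as follows. The $\mathcal Q$-agents keep running $\mathcal Q$ verbatim. For each node $u$ of $\mathcal T$ we add one leader $h_u$, a \emph{gate}, carrying a bit $v_u\in\{0,1\}$, plus (for internal nodes) a constant amount of bookkeeping storing the last values read from the children; write $h_u[v_u\!:=\!c]$ for the gate state with its bit set to $c$. If $u$ is a leaf labelled $\varphi_i$, then for every $\mathcal Q$-state $a$ with $O^i_{\mathcal Q}(a)\ne\bot$ we add the \emph{polling} transition $\multiset{h_u,a}\mapsto\multiset{h_u[v_u\!:=\!O^i_{\mathcal Q}(a)],a}$. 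If $u$ is internal with operator $\circ\in\{\land,\lor\}$ and children $u_1,u_2$, we add the $3$-way transition $\multiset{h_u,h_{u_1},h_{u_2}}\mapsto\multiset{h_u[v_u\!:=\!v_{u_1}\circ v_{u_2}],h_{u_1},h_{u_2}}$, and, for a $\neg$-node with child $u_1$, the $2$-way transition $\multiset{h_u,h_{u_1}}\mapsto\multiset{h_u[v_u\!:=\!\neg v_{u_1}],h_{u_1}}$. Finally, writing $r$ for the root of $\mathcal T$, we add two fresh states $\q{t},\q{f}$ with $O(\q{t})=1$, $O(\q{f})=0$ and $O(q)=\bot$ for every other state of $\PP$ (so $\mathcal Q$'s output labels are discarded, the maps $O^i_{\mathcal Q}$ being retained only as transition data), one further \emph{output} leader ranging over $\{\q{t},\q{f}\}$, and, for $o\in\{\q{t},\q{f}\}$, the transitions $\multiset{h_r,o}\mapsto\multiset{h_r,\q{t}}$ if $v_r=1$ and $\multiset{h_r,o}\mapsto\multiset{h_r,\q{f}}$ if $v_r=0$. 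Idle $2$-way transitions over the new state set are added as required by the definition. The initial leaders of $\PP$ are those of $\mathcal Q$ together with one agent in each gate $h_u$ (value $0$) and one output agent in $\q{f}$.

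\textbf{Size and shape.} By construction $\PP$ is simple and $\max(|P|,3)$-way, hence $|\varphi|$-way (the finitely many $\varphi$ with $|\varphi|<3$ being handled trivially); counting node-by-node over $\mathcal T$ gives $|Q_{\mathcal Q}|+\O(\len(\varphi))=\O(|\varphi|^3)$ states, $|L_{\mathcal Q}|+\O(\len(\varphi))=\O(|\varphi|^3)$ helpers, and $|T_{\mathcal Q}|+\O(|Q_{\mathcal Q}|\cdot k)+\O(\len(\varphi))=\O(|\varphi|^5)$ non-idle transitions, the $\O(|\varphi|^4)$ polling transitions being dominated. Moreover $\PP$ is a protocol \emph{with} helpers: no added transition inspects the multiplicity of a leader, so adding copies of any leader changes nothing below --- in particular $\mathcal Q$ keeps computing $P$.

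\textbf{Correctness and the main obstacle.} Fix an input $\vec v$ and a fair execution $\sigma$ of $\PP$. No transition of $\PP$ ever rewrites the $\mathcal Q$-component of a $\mathcal Q$-agent, so the stutter-removed projection of $\sigma$ onto the $\mathcal Q$-agents is a fair execution of $\mathcal Q$ (the standard layered-composition argument: the tail of $\sigma$ stays in a bottom strongly connected component of $\PP$'s configuration graph whose $\mathcal Q$-projection is a bottom SCC of $\mathcal Q$'s configuration graph, visited infinitely often, hence consisting of configurations that are output-stable on every line). Since $\mathcal Q$ computes $P$, there is thus a point in $\sigma$ after which, for every $i$, every $\mathcal Q$-agent whose line-$i$ output is $\ne\bot$ has value $b_i:=\varphi_i(\vec v)$, and at least one such agent is present. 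An induction on the height of $u$ in $\mathcal T$ then shows that $v_u$ is eventually constantly equal to the truth value $b_u$ of the subformula rooted at $u$: a leaf gate, after polling a witnessing $\mathcal Q$-agent past that point (which happens by fairness), is set to $b_i=b_u$, and the only transitions touching $v_u$ thereafter reset it to the same value; an internal gate, once both child gates have stabilized, is set to $b_{u_1}\circ b_{u_2}=b_u$ at its next firing and is never moved afterwards. In particular $v_r$ stabilizes to $\varphi(\vec v)$, after which every output agent, at its next interaction with $h_r$, moves to the state of $\{\q{t},\q{f}\}$ with output $\varphi(\vec v)$ and stays there; as these are the only non-$\bot$ states, $O(\sigma)=\varphi(\vec v)$, so $\PP$ computes $\varphi$. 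The crux is this argument: one must simultaneously check that the bolted-on gadget does not perturb the embedded $\mathcal Q$-computation (so that it still reaches output-stable configurations on every line) and that the gates converge to the correct bits despite the transient values propagating up the circuit before $\mathcal Q$ stabilizes --- which is precisely why each gate must \emph{keep} re-reading its inputs rather than latching once. Everything else (the counting, simplicity, and the ``with helpers'' property) is routine bookkeeping.
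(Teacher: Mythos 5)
Your proof is correct and follows essentially the same route as the paper's: a circuit-evaluation gadget layered on top of the hypothesized multi-output protocol, with one extra helper per boolean operator and $3$-way transitions combining the children's outputs, converging bottom-up by fairness. The only (harmless) cosmetic differences are that the paper builds the gadget incrementally by induction on $\len$ and, exploiting simplicity of the multi-output protocol, lets each gate read the children's unique output states directly, whereas you add an explicit polling layer at the leaves and a separate output leader at the root.
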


\begin{proof}[Proof sketch]
Consider a binary tree decomposing the boolean operations of $\varphi$.
We design a protocol for $\varphi$ by induction on the height of the tree.

The case where the height is $0$, and $\varphi$ is atomic, is trivial.
We sketch the induction step for the case where the root is labeled with $\land$, that is $\varphi = \varphi_1 \land \varphi_2$, the other cases are similar.
By induction hypothesis, we have simple protocols $\PP_1,\PP_2$ computing $\varphi_1,\varphi_2$, respectively. Let $\q{t}_j,\q{f}_j$ be the output states of $\PP_j$ for $j \in \{1,2\}$ such that $O_j(\q{t}_j) = 1$ and $O_j(\q{f}_j) = 0$. We add two new states $\q{t}, \q{f}$ (the output states of the new protocol) and an additional helper starting in state $\q{f}$. To compute $\varphi_1 \land \varphi_2$ we add the following transitions for every $b_1 \in \{\q{t}_1,\q{f}_1\}, b_2 \in \{\q{t}_2,\q{f}_2\}$, and $b \in \{\q{t},\q{f}\}$:
$\multiset{b_1,b_2,b} \mapsto \multiset{b_1,b_2,\q{t}}$ if $b_1 = \q{t}_1 \land b_2 = \q{t}_2$, and $\multiset{b_1,b_2,b} \mapsto \multiset{b_1,b_2,\q{f}}$ otherwise.
The additional helper computes the conjunction as desired.
\end{proof}

\subsection{From reversible dynamic initialization to multi-output protocols}
\label{subsec:finat}

Let $P = \{\varphi_1, \ldots, \varphi_k\}$ be a set of $k \geq 2$
atomic predicates of arity $n \geq 1$ over a set $X = \{x_1, \ldots, x_n\}$ of variables. 
We construct a multi-output protocol $\PP$ for $P$ of size 
$\poly(|\varphi_1|+ \cdots + |\varphi_k|)$.

Let $\PP_1, \ldots, \PP_k$ be protocols for 
$\varphi_1, \ldots, \varphi_k$. Observe that $\PP$ cannot be a ``product protocol'' that executes $\PP_1, \ldots, \PP_k$ synchronously. Indeed, the states of such a $\PP$ are tuples $(q_1, \ldots, q_k)$ of states of $\PP_1, \ldots, \PP_k$,  and so $\PP$ would have exponential size in $k$.
Further, $\PP$ cannot execute $\PP_1, \ldots, \PP_k$  asynchronously in parallel, because, given an input $\vec{x} \in \N^n$, it must dispatch $k \cdot \vec{x}$ agents ($\vec{x}$ to the input states of each $\PP_j$), but it only has $\vec{x}$. 
Such a $\PP$ would need $(k-1)|\vec{x}|$ helpers, which is not possible, because a protocol of size  $\poly(|\varphi_1|+ \cdots + |\varphi_k|)$ can only use $\poly(|\varphi_1|+ \cdots + |\varphi_k|)$ helpers, whatever the input $\vec{x}$. 

The solution is to use a more sophisticated parallel asynchronous computation. Consider two copies of inputs, denoted
$\overline{X} = \{\overline{x}_1, \ldots, \overline{x}_n\}$ and $\underline{X} = \{\underline{x}_1, \ldots, \underline{x}_n\}$.
For each predicate $\varphi$ over $X$, 
consider predicate $\tilde{\varphi}$ over 
$\overline{X} \cup \underline{X}$
satisfying $\tilde{\varphi}(\overline{\vec{x}},\underline{\vec{x}}) = 
\varphi(k\overline{\vec{x}}+\underline{\vec{x}})$
for every $(\overline{\vec{x}},\underline{\vec{x}}) \in \N^{\overline{X} \cup \underline{X}}$.  
We obtain 
$\tilde{\varphi}(\overline{\vec{x}},\underline{\vec{x}}) = 
\varphi(\vec{x})$ whenever $k \overline{\vec{x}} + \underline{\vec{x}} = \vec{x}$, \eg\ for
$\overline{\vec{x}}:=  {\lfloor}\frac{{\vec{x}}}{k}{\rfloor}$ and $\underline{\vec{x}} :=\rem{{\vec{x}}}{k}$. 
With this choice, $\PP$ needs to dispatch a total of 
$k \left( |\overline{\vec{x}} + \underline{\vec{x}}| \right)
\leq |\vec{x}| + n \cdot (k-1)^2$ 
agents to compute $\tilde{\varphi}_1(\overline{\vec{x}},\underline{\vec{x}}), \ldots, \tilde{\varphi}_k(\overline{\vec{x}},\underline{\vec{x}})$. 
That is, $n \cdot (k-1)^2$ helpers are sufficient to compute $\PP$. 
Formally, we define $\tilde{\varphi}$ in the following way:

\begin{quote}
For $\displaystyle \varphi(\vec{x}) = \left(\sum_{i=1}^n \alpha_i x_i > \beta \right)$, we define $\displaystyle \tilde{\varphi}(\overline{\vec{x}}, \underline{\vec{x}}) := \left(\sum_{i=1}^n (k \cdot \alpha_i) \overline{x}_i + \alpha_i\underline{x}_i > \beta \right)$
\end{quote}
\noindent and similarly for modulo predicates.
For instance, if $\varphi(x_1,x_2)=3 x_1 - 2x_2 > 6$
and $k=4$, then $\tilde{\varphi}(\overline{x}_1,\underline{x}_1,\overline{x}_2,\underline{x}_2)=12 \overline{x}_1 + 3 \underline{x}_1 -
8 \overline{x}_2 - 2 \underline{x}_2 > 6$. 
As required, $\tilde{\varphi}(\overline{\vec{x}},\underline{\vec{x}})=
\varphi(k\overline{\vec{x}}+\underline{\vec{x}})$. 

Let us now describe how
the protocol $\PP$ computes $\tilde{\varphi}_1(\overline{\vec{x}}, \underline{\vec{x}}), \ldots, \tilde{\varphi}_k(\overline{\vec{x}}, \underline{\vec{x}})$.
Let $\tilde{\PP}_1, \ldots, \tilde{\PP}_k$ be protocols computing $\tilde{\varphi}_1, \ldots, \tilde{\varphi}_k$.  
Let $X= \{\q{x_1}, \ldots, \q{x_n}\}$ be the input states of $\PP$, 
and let
$\q{\overline{x}^j_1}, \ldots, \q{\overline{x}^j_n}$ and $\q{\underline{x}^j_1}, \ldots, \q{\underline{x}^j_n}$ be the input states of $\tilde{\PP}_j$
for every $1 \leq j \leq k$. 
Protocol $\PP$ 
repeatedly chooses an index $1 \leq i \leq n$, and executes one of these two actions: (a)~take $k$ agents from $\q{x_i}$, and dispatch them to $\q{\overline{x}^1_i}, \ldots, \q{\overline{x}^k_i}$ (one agent to each state); or (b)~take one agent from $\q{x_i}$ and $(k-1)$ helpers, and dispatch them to $\q{\underline{x}^1_i}, \ldots, \q{\underline{x}^k_i}$. The index and the action are chosen nondeterministically.
Notice that if 
for some input $\q{x}_i$, 
all $\ell$ agents of $\q{x_i}$ are dispatched, then 
$k \q{\overline{x}^j_i} + \q{\underline{x}^j_i}=\ell$ for all $j$. 
If all agents of $\q{x_i}$ are dispatched for every $1 \leq i \leq n$, then we say that the {\em dispatch is correct}.

The problem is that, because of the nondeterminism, 
the dispatch may or may not be correct.
Assume, \eg, that $k=5$ and $n=1$. Consider the input $x_1=17$, and assume that $\PP$ has $n \cdot (k-1)^2=16$ helpers. $\PP$ may correctly dispatch
$\overline{x}_1 =  {\lfloor}\frac{17}{5}{\rfloor}=3$ agents to each of $\q{\overline{x}^1_1}, \ldots, \q{\overline{x}^1_5}$
and $\underline{x}_1 = (\rem{17}{5})=2$ to each of $\q{\underline{x}^1_1}, \ldots, \q{\underline{x}^1_5}$; this gives a total of $(3+2)\cdot 5 = 25$ agents, consisting of the $17$ agents for the input plus $8$
helpers. However, it may also wrongly dispatch $2$ agents to each of $\q{\overline{x}^1_1}, \ldots, \q{\overline{x}^1_5}$ and
$4$ agents to each of $\q{\underline{x}^1_1}, \ldots, \q{\underline{x}^1_5}$, with a total of $(2+4) \cdot 5 = 30$ agents, consisting of $14$ input agents plus $16$ helpers. In the second case, each $\PP_j$
wrongly computes $\tilde{\varphi}_j(2, 4) = \varphi_j(2\cdot 5 + 4) = \varphi_j(14)$, instead of the correct value $\varphi_j(17)$.

To solve this problem we ensure that $\PP$ can always recall agents already dispatched 
to $\tilde{\PP}_1, \ldots, \tilde{\PP}_k$ as long as the dispatch is not yet correct. 
This allows $\PP$ to ``try out'' dispatches until it dispatches correctly, which eventually happens by fairness. 
For this we design $\PP$ so that (i) the atomic protocols $\tilde{\PP}_1, \ldots, \tilde{\PP}_k$ can work with inputs agents that arrive over time ({\em dynamic initialization}), and (ii) $\tilde{\PP}_1, \ldots, \tilde{\PP}_k$ can always return to their initial configuration and send agents back to $\PP$, unless the  dispatch is correct ({\em reversibility}).
To ensure that $\PP$ stops redistributing after dispatching a correct distribution, it suffices to replace each reversing transition $\vec{p} \mapsto \vec{q}$ by transitions $\vec{p} + \multiset{\q{x_i}} \mapsto \vec{q} + \multiset{\q{x_i}}$, one for each $1 \leq i \leq n$: All these transitions become disabled when $\q{x_1}, \ldots, \q{x_n}$ are not populated. 

\parag{Reversible dynamic initialization} Let us now formally introduce the class of \emph{protocols with reversible dynamic initialization} that enjoys all properties needed for our construction. A simple protocol with \emph{reversible dynamic initia\-li\-zation} (\emph{RDI-protocol} for short) is a tuple $\PP = (Q, T_\infty, T_\dagger, L, X, I , O)$, where $\PP_\infty = (Q, T_\infty, L, X, I , O)$ is a simple population protocol, and $T_\dagger$ is the set of transitions making the system reversible, called the \emph{RDI-transitions}.

Let $T \defeq T_\infty \cup T_\dagger$, and let $\Ipt \defeq \{\ipt_x
: x \in X\}$ and $\Opt \defeq \{\opt_x : x \in X\}$ be the sets of
\emph{input} and \emph{output transitions}, respectively, where
$\ipt_x \defeq (\vec{0}, \multiset{I(x)})$ and $\opt_x \defeq
(\multiset{I(x)}, \vec{0})$. An \emph{initialization sequence} is a
finite execution $\pi \in \left( T \cup \Ipt \cup \Opt\right)^*$ from
the \emph{initial configuration} $L'$ with $L' \succeq L$. The \emph{effective input} of
$\pi$ is the vector $\vec{w}$ such that $\vec{w}(x) \defeq
|\pi|_{\ipt_x} - |\pi|_{\opt_x}$ for every $x \in X$. Intuitively, a
RDI-protocol starts with helpers only, and is dynamically initialized
via the input and output transitions.

\newcommand{\DiEq}[1]{\left[#1\right]}

Let $\q{f}, \q{t} \in Q$ be the unique states of $\PP$ with $O(\q{f}) = 0$ and
$O(\q{t}) = 1$. For every configuration $C$, let $\DiEq{C} \defeq \left\{C' : C'(\q{f}) + C'(\q{t}) = C(\q{f}) + C(\q{t}) \text{ and }
C'(q) = C(q) \text{ for all } q \in Q \setminus \{\q{f}, \q{t}\} \right\}$. Intuitively, all configurations $C' \in
\DiEq{C}$ are equivalent to $C$ in all but the output states.

An RDI-protocol is required to be \emph{reversible}, that is
for every initialization sequence $\pi$ with effective input $\vec{w}$,
and such that $L' \trans{\pi} C$ for some $L' \succeq L$, the following holds:
\begin{itemize}
\item if $C \trans{T^*} D$ and $D' \in \DiEq{D}$, then $D' \trans{T^*} C'$ for some $C' \in \DiEq{C}$, and

\item $C(I(x)) \leq \vec{w}(x)$ for all $x \in X$.
\end{itemize}
Intuitively, an RDI-protocol can never have more agents in an input state than the effective number of agents it received via the input and output transitions. 
Further, an RDI-protocol can always reverse all sequences that do not contain input or output transitions.
This reversal does not involve the states $\q{f}$ and $\q{t}$, which have a special role as output states. Since RDI-protocols have a default output, we need to ensure that the default output state is populated when dynamic initialization ends, and reversal for $\q{f}$ and $\q{t}$ would prevent that.

An RDI-protocol $\PP$ \emph{computes} $\varphi$ if for every initialization sequence $\pi$ with effective input $\vec{w}$ such that $L' \trans{\pi} C$ for some $L' \succeq L$, the standard population protocol $\PP_\infty$ computes $\varphi(\vec{w})$ from $C$ (that is with $T_\dagger$ disabled). Intuitively, if the dynamic initialization terminates, the RDI-transitions $T_\dagger$ become disabled, and then the resulting standard protocol $\PP_\infty$ converges to the output corresponding to the dynamically initialized input.

\begin{restatable}{theorem}{thmAtomicMultHelpers} \label{thm:atomic:mult:helpers}
Assume that for every atomic predicate $\varphi$, there exists a
$|\varphi|$-way RDI-protocol with
$\O(|\varphi|)$ helpers,
$\O(|\varphi|^2)$ states and
$\O(|\varphi|^3)$ transitions
that computes $\varphi$.
For every finite set  $P$ of atomic predicates, there exists a $|P|$-way
simple {\em multi-output} protocol, with
$\O(|P|^3)$  helpers and states, and
$\O(|P|^5)$ transitions, that computes $P$.
\end{restatable}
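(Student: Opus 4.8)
The plan is to realize the dispatch construction sketched above. Write $P = \{\varphi_1, \ldots, \varphi_k\}$ over $X = \{x_1, \ldots, x_n\}$ and assume $k \ge 2$ (for $k \le 1$ the statement is trivial or reduces to $k = 2$ by adjoining a trivial predicate and projecting). For each $j$, let $\tilde{\varphi}_j$ be the predicate over $\overline{X} \cup \underline{X}$ with $\tilde{\varphi}_j(\overline{\vec{x}}, \underline{\vec{x}}) = \varphi_j(k\overline{\vec{x}} + \underline{\vec{x}})$ defined in this section; since $\norm{\tilde{\varphi}_j} \le k\norm{P}$ and $\tilde{\varphi}_j$ has $2n$ variables, $|\tilde{\varphi}_j| = \O(|\varphi_j| + \log k) = \O(|P|)$. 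By hypothesis, fix for each $j$ an $\O(|P|)$-way RDI-protocol $\tilde{\PP}_j = (Q_j, T_{\infty,j}, T_{\dagger,j}, L_j, \overline{X} \cup \underline{X}, I_j, O_j)$ computing $\tilde{\varphi}_j$, with $\O(|P|)$ helpers, $\O(|P|^2)$ states and $\O(|P|^3)$ transitions, where the $Q_j$ are made pairwise disjoint, with output states $\q{f}_j, \q{t}_j$ and input states $\q{\overline{x}^j_i}, \q{\underline{x}^j_i}$ (which we may assume distinct from $\q{f}_j, \q{t}_j$); write $(\tilde{\PP}_j)_\infty$ for the associated standard protocol. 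Let $\PP$ be the multi-output protocol with states $\{\q{x_1}, \ldots, \q{x_n}, \q{h}\} \cup \bigcup_j Q_j$, input mapping $I(x_i) \defeq \q{x_i}$, leader multiset $L$ consisting of $n(k-1)^2$ copies of $\q{h}$ together with $\sum_j L_j$, and $O(j, q) \defeq O_j(q)$ for $q \in Q_j$ and $O(j, q) \defeq \bot$ otherwise, so that each $\mathcal{Q}_j$ is simple with output states $\q{f}_j, \q{t}_j$. Its transitions are: the forward \emph{dispatch} transitions $\multiset{k \cdot \q{x_i}} \mapsto \multiset{\q{\overline{x}^1_i}, \ldots, \q{\overline{x}^k_i}}$ and $\multiset{\q{x_i}, (k-1) \cdot \q{h}} \mapsto \multiset{\q{\underline{x}^1_i}, \ldots, \q{\underline{x}^k_i}}$; their reverses; all of $T_{\infty,j}$; and all of $T_{\dagger,j}$ — with the crucial modification that every reverse-dispatch transition and every transition of $\bigcup_j T_{\dagger,j}$ is replaced by the $n$ transitions obtained by adding $\multiset{\q{x_{i'}}}$, $i' \in [n]$, to both sides, so that all these ``reversing'' transitions become disabled exactly when $\q{x_1}, \ldots, \q{x_n}$ are all empty. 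The resource bounds follow by counting: $|Q| = \O(|P|) + k \cdot \O(|P|^2) = \O(|P|^3)$; $|L| = n(k-1)^2 + k \cdot \O(|P|) = \O(|P|^3)$; the number of transitions is dominated by $\sum_j n \cdot |T_{\dagger,j}| = n \cdot k \cdot \O(|P|^3) = \O(|P|^5)$ (using $n, k \le |P|$); and every transition involves $\O(|P|)$ agents.

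For correctness, call a reachable configuration $G$ a \emph{correct dispatch} if all of $\q{x_1}, \ldots, \q{x_n}$ are empty in $G$. The forward and reverse dispatch transitions simulate, simultaneously in all $k$ blocks, the input and output transitions $\ipt_{\overline{x}_i}, \opt_{\overline{x}_i}$ (resp.\ $\ipt_{\underline{x}_i}, \opt_{\underline{x}_i}$) of the $\tilde{\PP}_j$; hence in every reachable $G$ the restriction of $G$ to $Q_j$ is a configuration reached by an initialization sequence of $\tilde{\PP}_j$, and the \emph{effective inputs} of these sequences agree across $j$, say $(\overline{\vec{x}}, \underline{\vec{x}})$. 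If $G$ is a correct dispatch then $k\overline{\vec{x}} + \underline{\vec{x}} = \vec{v}$ componentwise, hence $\tilde{\varphi}_j(\overline{\vec{x}}, \underline{\vec{x}}) = \varphi_j(k\overline{\vec{x}} + \underline{\vec{x}}) = \varphi_j(\vec{v})$, so by the RDI-computes property $(\tilde{\PP}_j)_\infty$ computes $\varphi_j(\vec{v})$ from the $Q_j$-restriction of $G$. Moreover a correct dispatch is \emph{locked}: every forward-dispatch transition consumes from some $\q{x_i}$, and every reversing transition carries an $\q{x_{i'}}$-guard, so from a correct dispatch only the transitions in $T_{\infty,1}, \ldots, T_{\infty,k}$ are enabled; these act independently on the disjoint blocks $Q_1, \ldots, Q_k$ and never touch $\q{x_1}, \ldots, \q{x_n}$, so the dispatch stays correct forever.

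The core claim is that from \emph{every} reachable $G$ one can reach a configuration $G^\star$ that is a correct dispatch and in which, for every $j$, the $Q_j$-restriction is a $\varphi_j(\vec{v})$-stable configuration of $(\tilde{\PP}_j)_\infty$ (one all of whose $(\tilde{\PP}_j)_\infty$-successors output $\varphi_j(\vec{v})$). To build $G^\star$: if $G$ is not already a correct dispatch, some $\q{x_{i'}}$ is populated, hence the $\q{x_{i'}}$-guarded reversing transitions are available; by repeatedly invoking the reversibility of each $\tilde{\PP}_j$ in turn and then firing reverse-dispatch transitions, all input agents are returned to the $\q{x_i}$ and all $\q{h}$-helpers to $\q{h}$, reaching $C_{\vec{v}}$ (up to the output-state equivalence $\DiEq{\cdot}$ in each block, which is harmless because the RDI-computes property already quantifies over all configurations reached by initialization sequences); then the canonical dispatch ($\overline{\vec{x}} = \lfloor \vec{v}/k \rfloor$, $\underline{\vec{x}} = \vec{v} \bmod k$) is performed, which consumes $\sum_i (k-1)(\vec{v}(x_i) \bmod k) \le n(k-1)^2$ helpers, exactly the budget in $L$; finally, since $(\tilde{\PP}_j)_\infty$ computes $\varphi_j(\vec{v})$ from the $Q_j$-block, a $\varphi_j(\vec{v})$-stable configuration is reachable inside block $j$, and this is carried out for all $j$ in turn by independence of the blocks. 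The theorem then follows by the standard fairness argument: $|C_{\vec{v}}|$ is fixed, so the reachable configurations form a finite set, and in any fair execution from $C_{\vec{v}}$ the configurations visited infinitely often form a nonempty set closed under $\trans{}$; by the core claim this set contains some $G^\star$, which is globally stable since by the locking observation every successor of $G^\star$ is again a correct dispatch and keeps each block $\varphi_j(\vec{v})$-stable; hence for every $j$ the $j$-th output equals $\varphi_j(\vec{v})$ from the first visit to $G^\star$ onwards. Thus $\mathcal{Q}_j$ computes $\varphi_j$ for all $j$, i.e., $\PP$ computes $P$. Finally, $\PP$ is a protocol \emph{with helpers}, because the arguments used only ``at least $n(k-1)^2$ copies of $\q{h}$'' and ``at least $L_j$ in block $j$'' (and $\tilde{\PP}_j$ is itself a protocol with helpers), so adding any leaders $\succeq L$ preserves the computed set.

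The main obstacle is the core claim of the previous paragraph — the \emph{global un-dispatch}. Reversibility of each $\tilde{\PP}_j$ guarantees a return only up to $\DiEq{\cdot}$ and only while some $\q{x_{i'}}$ is populated, while a reverse-dispatch transition requires a matching agent in \emph{every} block's input state at once; one must therefore orchestrate the individual reversals into a single global reversal that actually reaches $C_{\vec{v}}$, and then check that the canonical re-dispatch respects the $n(k-1)^2$-helper budget regardless of the input. A secondary point to pin down is that the input states of each $\tilde{\PP}_j$ are distinct from $\q{f}_j$ and $\q{t}_j$, so that $\DiEq{\cdot}$ controls their populations during reversal; this holds for the atomic RDI-protocols constructed in the next subsection, or can be ensured without loss of generality.
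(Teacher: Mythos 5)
Your construction is the paper's construction — the same dispatch transitions $\multiset{k\cdot\q{x_i}}\mapsto\multiset{\q{\overline{x}^1_i},\ldots,\q{\overline{x}^k_i}}$ and $\multiset{\q{x_i},(k-1)\cdot\q{h}}\mapsto\multiset{\q{\underline{x}^1_i},\ldots,\q{\underline{x}^k_i}}$, the same guarding of all reversing transitions by the input states, the same $n(k-1)^2$ extra helpers, and the same three correctness ingredients (each block's restriction is reached by an initialization sequence, with a common effective input $\vec{w}$ satisfying $\vec{v}\equiv\vec{w}$ once the input states are empty; a correct dispatch is locked; a correct dispatch is always reachable, so fairness forces one). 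The resource counts and the fairness conclusion are fine.

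The one genuine gap is the step you yourself flag as the main obstacle: the global un-dispatch. Your proposed orchestration — ``invoke the reversibility of each $\tilde{\PP}_j$ in turn, then fire the reverse-dispatch transitions'' — is not licensed by the RDI reversibility property. That property only allows reversing a suffix of $T_\infty\cup T_\dagger$ transitions taken \emph{after} some configuration reached by an initialization sequence; but the history of a block interleaves its internal transitions with the dispatch (input/output) events, so there is in general no single configuration ``after all dispatches but before any block-internal step'' to which the block's current configuration is $T^*$-reachable, and hence nothing to reverse back to before you start undoing dispatches. The paper closes this by reversing the \emph{global} run one step at a time in reverse chronological order: a dispatch step is undone by its guarded inverse, a reverse-dispatch by a dispatch, and a block-internal step $C_j\trans{t_j}C_{j+1}$ is undone by applying that block's reversibility to this single step (a legitimate $T^*$-suffix of an initialization sequence), all modulo the output-state equivalence $\DiEq{\cdot}$, which leaves the counts on $\q{x_1},\ldots,\q{x_n}$ and $\q{h}$ untouched; the reversal is aborted as soon as a configuration with empty input states is encountered, and otherwise ends at a $\DiEq{\cdot}$-variant of $C_{\vec{v}}$ from which the canonical dispatch (which you correctly check fits the $n(k-1)^2$ helper budget) empties the input states. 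Note also that the paper's main argument then never needs the RDI-computes property to apply to a $\DiEq{\cdot}$-variant of a re-dispatched configuration: once the input states are permanently empty, it applies the property to the block restriction of the \emph{actual} fair execution, which is reached by a genuine initialization sequence, sidestepping your secondary worry.
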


\subsection{Atomic predicates under reversible dynamic initialization}
\label{subsec:findyn}

Lastly, we show that atomic predicates are succinctly computable
by RDI-protocols:

\begin{restatable}{theorem}{thmAtomicHelpers} \label{thm:atomic:helpers}
Every atomic predicate $\varphi$ over variables $X$ can be computed by a simple $|\varphi|$-way population protocol with reversible dynamic initialization that has
$\O(|\varphi|)$ helpers,
$\O(|\varphi|^2)$ states, and
$\O(|\varphi|^3)$
transitions.
\end{restatable}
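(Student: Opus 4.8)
The plan is to treat the two kinds of atomic predicates --- threshold predicates $\sum_i \alpha_i x_i > \beta$ and remainder predicates $\sum_i \alpha_i x_i \equiv_c \beta$ --- with a common mechanism, a \emph{distributed binary accumulator}. Since $\norm{\varphi}$ is given in binary, a polynomial state bound forbids storing any coefficient explicitly, so I set $n \defeq \O(\log\norm{\varphi})$ and let each agent be in one of the input states $\q{x_i}$ (or a pair $(\q{x_i},j)$ while it is loading), in a \emph{token state} $\q{p_j}$ or $\q{m_j}$ of level $j \in \{0,\dots,n\}$ (representing the value $+2^j$ resp.\ $-2^j$), in a ``free'' state $\q{z}$, in one of $\O(n)$ helper states, or in one of the two output states $\q{f},\q{t}$; in particular the protocol is \emph{simple}. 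Every transition will preserve the invariant that the signed sum of all tokens, plus the part of each $\alpha_i$ still owed by the loading agents, plus the part of $-\beta$ not yet injected by helpers, equals $\sum_i \alpha_i \vec{w}(x_i) - \beta$, where $\vec w$ is the effective input.

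For the threshold case I adapt the succinct threshold gadget of~\cite{BEJ18}. An input agent for $x_i$ runs a \emph{loading} subroutine: in state $(\q{x_i},j)$ the available transitions are determined by the (hard-coded) bit $\alpha_{i,j}$ and the sign of $\alpha_i$, and if that bit is set the agent turns a free agent --- or itself, on its last bit --- into a token $\q{p_j}$ or $\q{m_j}$ before moving to $(\q{x_i},j{+}1)$. \emph{Carry} transitions $\multiset{\q{p_j},\q{p_j}}\mapsto\multiset{\q{p_{j+1}},\q{z}}$ (and symmetrically for $\q{m}$) consolidate tokens below level $n$, opposite signs cancel via $\multiset{\q{p_j},\q{m_j}}\mapsto\multiset{\q{z},\q{z}}$, and $\O(n)$ helpers inject $-2^j$ for each set bit of $\beta$. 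Because $\norm{\varphi}\le 2^n$, the fully consolidated accumulator has at most one token per level below $n$ and only $\O(|\vec w|)$ tokens at level $n$, so the received agents suffice to carry all tokens while $\O(n)=\O(|\varphi|)$ helpers suffice for the free and injection roles --- this is precisely where the helper budget is tight, and why loading must be organised so that only $\O(n)$ ``carry-chain'' agents are ever in flight. Output transitions flip $\q{f}\mapsto\q{t}$ in the presence of a positive token and $\q{t}\mapsto\q{f}$ in the presence of a negative one; once loading and injection are finished and the carry/cancel transitions have reached a fixpoint, the surviving tokens carry the sign of $\sum_i \alpha_i\vec{w}(x_i)-\beta$, and $\PP_\infty$ (the protocol with $T_\dagger$ disabled) converges to $\varphi(\vec w)$.

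The remainder case is analogous but carried out modulo $c$: loading deposits the bits of $\alpha_i \bmod c$, the accumulator uses only $\O(\log c)$ levels, and a \emph{reduce} gadget subtracts $c$ (whose $\O(\log c)$ bits are carried by helpers) whenever the value reaches $c$; the output is $\q{t}$ iff the accumulator is identically $0$. Here the accumulated value stays below $c$ regardless of $\vec w$, so there is no overflow, the number of live tokens is always $\O(\log c)$, and the counting and reversibility arguments are routine. To obtain an RDI-protocol I put all loading, carry, cancel, reduce and output transitions in $T_\infty$, and put in $T_\dagger$ the reverses of loading, carry, cancel and reduce ($\multiset{\q{p_{j+1}},\q{z}}\mapsto\multiset{\q{p_j},\q{p_j}}$, un-load, un-reduce). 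Then any sequence of transitions in $T$ can be undone by transitions in $T$ up to its effect on $\q{f}$ and $\q{t}$, which is exactly what reversibility requires (it only asks for a return to some configuration in $\DiEq{C}$), the output transitions being precisely the moves that touch $\q{f},\q{t}$ and that we are therefore allowed to leave unreversed. The remaining requirement $C(I(x))\le\vec w(x)$ holds because an agent enters $I(x_i)=\q{x_i}$ only via $\ipt_{x_i}$ or by reversing the first loading step, so the population of $\q{x_i}$ never exceeds the number of agents received for $x_i$. A routine count of the states and transitions indexed by $(i,j)$ yields the claimed $\O(|\varphi|^2)$ states and $\O(|\varphi|^3)$ transitions with $\O(|\varphi|)$ helpers, and all transitions have constant arity.

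I expect the main obstacle to be the threshold case. First, one must check that the~\cite{BEJ18}-style gadget works as a protocol \emph{with helpers}: supernumerary helpers must provably end up as inert free agents without affecting the computed predicate. Second, and harder, one must show that $\PP_\infty$ converges to $\varphi(\vec w)$ from \emph{every} configuration reachable by an initialization sequence --- that is, under partial, interleaved initialization and after arbitrary applications of the new reversing transitions --- and not merely from a clean start. The delicate point is the unbounded accumulator value: one has to argue that carries, splits and cancellations can always bring the token multiset into a single-sign canonical form whose sign equals that of $\sum_i \alpha_i\vec w(x_i)-\beta$, that this form is stable under $T_\infty$, and that it is reachable while the number of tokens stays within the agent budget. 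Once this is established, the remainder case (where the accumulator is bounded) and the reversibility bookkeeping are comparatively mechanical.
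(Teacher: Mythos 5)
Your overall architecture is the same as the paper's: a binary accumulator over states $\q{\pm 2^j}$ with carry, cancellation and reversing transitions, $\O(\log\norm{\varphi})$ helpers recycled through a free pool, and output states $\q{f},\q{t}$ exempted from reversal. The one place where you genuinely diverge is the loading mechanism, and that is where the proposal breaks. You load bit-by-bit (the input agent walks through $(\q{x_i},j)$ and, on its last set bit, \emph{becomes an untagged token itself}), whereas the paper converts an input agent in a single multiway transition $\mathsf{add}_x$ that leaves the original agent in a state $\q{0}_x$ permanently tagged with its variable, and adds $\mathsf{swap}$ transitions whose sole purpose is to shuttle these tags around without losing them. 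The tag is not cosmetic: the RDI condition requires $C(I(x)) \leq \vec{w}(x)$ for \emph{every} configuration reachable by an initialization sequence, and your $T_\dagger$ must contain the reverse of the final loading step, i.e.\ a transition turning some token back into $(\q{x_i},n)$. Since your tokens carry no memory of which variable (or which helper) produced them, this reverse transition can manufacture agents in $\q{x_i}$ out of tokens contributed by a different variable $x_{i'}$ with the same coefficient, or out of helper-injected $\beta$-tokens, so the population of $\q{x_i}$ can exceed $\vec{w}(x_i)$. Your one-sentence justification ("an agent enters $\q{x_i}$ only via $\ipt_{x_i}$ or by reversing the first loading step") ignores that the chain of un-loading steps bottoms out at an untagged token. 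This is exactly the failure mode the paper's construction is engineered to exclude, and without the persistent tag the downstream dispatching argument of Section~\ref{subsec:finat} (which relies on effective inputs being faithfully recallable) is unsound.

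Two secondary points. First, your output rule ("flip to $\q{t}$ on seeing a positive token, to $\q{f}$ on seeing a negative one; the surviving tokens carry the sign") does not converge correctly when $\sum_i \alpha_i \vec{w}(x_i) - \beta = 0$: all tokens eventually cancel, and the last flip before the final cancellation is history-dependent, so a fair execution can stabilize to $\q{t}$ on an input where the answer is $0$. The paper avoids this by normalizing to $\vec{a}\cdot\vec{v} \geq b$ with $b>0$, keeping a default helper in $\q{f}$, making every $\mathsf{cancel}$ (and every reversing transition) reset the output to $\q{f}$, and setting the output to $1$ only via an $\mathsf{equal}$ transition that detects the positive pattern $\mathrm{rep}(b)$; an analogous reset is needed for your "accumulator is identically $0$" test in the remainder case, since absence of tokens is not locally detectable. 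Second, you correctly identify that convergence of $\PP_\infty$ from arbitrary partially-initialized configurations is the hard part, but you do not supply the argument; the paper's cleaning and counting lemmas (showing a clean configuration has at most $|\vec{w}|+n$ agents in token states, hence at least $n$ free agents can always be recovered) are what guarantee that loading never deadlocks and that the canonical form is always reachable. As written, your bit-by-bit loading with several agents simultaneously "in flight" makes this liveness claim strictly harder to establish than in the paper's atomic-$\mathsf{add}$ version, and you give no argument for it.
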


\begin{figure}[htb]
	\vspace*{-15pt} 
	\centering\definecolor{lipics-triadic1}{HTML}{FDC711}
\definecolor{lipics-triadic2}{HTML}{11E7FD}
\definecolor{lipics-triadic3}{HTML}{FD119E}
\definecolor{lipics-triadic4}{HTML}{336600}
\colorlet{colNeutral}{gray}
\colorlet{colPos}{lipics-triadic2!70!blue}
\colorlet{colModulo}{lipics-triadic3!90!purple}
\colorlet{colEqual}{lipics-triadic4!90!green}
\colorlet{colOut}{gray}
\colorlet{colPosDark}{blue!70!lipics-triadic2}
\colorlet{colNegDark}{purple!90!lipics-triadic3}
\begin{tikzpicture}[->, node distance=0.7cm, auto, very thick, scale=0.6, transform shape, font=\huge]
  \tikzset{every place/.style={inner sep=1pt, minimum size=30pt,
      token distance=50pt}}
  \tikzset{storage/.style={every token/.append style={inner
        sep=1pt, minimum size=18pt, font=\LARGE}, token
      distance=13pt}}
  
  \tikzset{every node/.style={colPos}}
  \tikzset{every token/.append style={color=colPos!30, text=colPosDark, minimum size=7pt}}
  \node[place, label=$\q{+1}$] (p1) {}
  	[children are tokens, storage]
  	child {node [token] {$x$}}
  ;
  \node[transition]            (pu1) [above right= of p1]  {};
  \node[transition]            (pd1) [above left=  of p1]  {};
  \node[place, label=$\q{+2}$] (p2)  [above right= of pd1] {};
  \node[transition]            (pu2) [above right= of p2]  {};
  \node[transition]            (pd2) [above left=  of p2]  {};
  
  \tikzset{every token/.append style={color=colNeutral}}
  \node[place, label=$\q{+4}$,tokens=1] (p4)  [above right= of pd2] {};
  \tikzset{every token/.append style={color=colPos!30, text=colPosDark}}
  \node[transition]            (pu4) [above right= of p4]  {};
  \node[transition]            (pd4) [above left=  of p4]  {};
  \node[place, label=$\q{+8}$] (p8)  [above right= of pd4] {};
  \tikzset{every node/.style={colNeutral}}
  \tikzset{every token/.append style={color=colNeutral, text=colNeutral, minimum size=5.5pt}}
  \node[place, tokens=5, label=$\q{0}$] (z) [below= 27pt of p1] {};
  
  \tikzset{every node/.style={colPosDark}}
  \tikzset{every token/.append style={color=colPos!30, text=colPosDark}}
  \node[transition]           (tx) [left=80pt of p2] {};
  \node[place, label=below:$\q{x}$] (x)  [left=     of tx] {}
  	[children are tokens, storage]
  	child {node [token] {$x$}}
  ; 
  \node[transition]           (ty) [left=80pt of p4] {};
  \node[place, label=$\q{y}$] (y)  [left=of ty]      {}
  	[children are tokens, storage]
  	child {node [token] {$y$}}
  ;
   
  \tikzset{every node/.style={colModulo}}
  \tikzset{every token/.append style={colModulo!30, text=colModulo}}
  \node[transition, label=above:{\large $\ \equiv_7$}]           (tmod) [right=80pt of p2] {};
  
  \tikzset{every node/.style={colEqual}}
  \tikzset{every token/.append style={colEqual!30, text=colEqual}}
  \node[transition, label=above:{\large $\geq 4$}]           (tequal) [right=80pt of p4] {};
 
  \node[place, label=$\q{t}$] (t)  [right=     of tequal] {}; 
  \tikzset{every node/.style={colModulo}}
  \tikzset{every token/.append style={color=colNeutral, text=colNeutral}}
  \node[place, label=below:$\q{f}$] (f)  [right=     of tmod,tokens=1] {}; 
  \coordinate [below=0.5cm of t] (TorF) {};

  \tikzset{every node/.style={colPos}}
  \path[->, colPos, very thick, font=\large]
  (p1)  edge[] node[swap] {$2$} (pu1)
  (pu1) edge[] node[swap]             {}    (p2)
  (p2)  edge[] node[swap] {$2$} (pu2)
  (pu2) edge[] node[swap]             {}    (p4)
  (p4)  edge[] node[swap] {$2$} (pu4)
  (pu4) edge[] node[swap]             {}    (p8)
  ;
  \path[->, colPos, very thick, font=\large]
  (p8)  edge[] node[swap]             {}    (pd4)
  (pd4) edge[] node[swap, yshift=5pt] {$2$} (p4)
  (p4)  edge[] node[swap]             {}    (pd2)
  (pd2) edge[] node[swap, yshift=5pt] {$2$} (p2)
  (p2)  edge[] node[swap]             {}    (pd1)
  (pd1) edge[] node[swap, yshift=5pt] {$2$} (p1)
  ;
  
  \path[->, colNeutral!50, thick]
  (pu1) edge[out=-45, in=45] node[swap] {} (z)
  (pu2) edge[out=-45, in=45] node[swap] {} (z)
  (pu4) edge[out=-45, in=45] node[swap] {} (z)
  ;
  \path[->, colNeutral!50, thick]
  (z) edge[in=-135, out=135] node[swap] {} (pd4)
  (z) edge[in=-135, out=135] node[swap] {} (pd2)
  (z) edge[in=-135, out=135] node[swap] {} (pd1)
  ;
  
  \path[->, colPosDark, very thick]
  (x)  edge[]                node[] {} (tx)
  (tx) edge[out=60,  in=180] node[] {} (p4)
  (tx) edge[out=-70, in=180] node[] {} (p1)
  ;
  \path[->, colNeutral, thick]
  (z) edge[out=180, in=-100] node[] {} (tx)
  ;
  
  \path[->, colPosDark, very thick]
  (y)  edge[]                node[] {} (ty)
  (ty) edge[out=0,  in=180]  node[] {} (p4)
  (ty) edge[out=-70, in=180] node[] {} (p2)
  ;
  \path[->, colNeutral, thick]
  (z) edge[out=180, in=-115] node[] {} (ty)
  ;
  
  \tikzset{every node/.style={colModulo}}
  \path[->, colModulo, very thick, font=\large]
  (p4) edge[out=-25, in=135]  node[] {} (tmod)
  (p2) edge[]                 node[] {} (tmod)
  (p1) edge[out=0, in=-130]   node[] {} (tmod)
  (t)  edge[-, out=-45, in=0, dashed]			  node[] {} (TorF)
  (f)  edge[-, out=45, in=0, dashed]  		  node[] {} (TorF)
  (TorF)  edge[out=180, in=30]  node[swap, yshift=9pt, xshift=16pt] {$1$} (tmod)
  (tmod)  edge   			  node[] {} (f)
  ;
  \path[->, colModulo, thick]
  (tmod) edge[out=-90, in=0,colNeutral] node[colNeutral, font=\large] {$3$} (z)
  ;
  
  \tikzset{every node/.style={colEqual}}
  \path[->, colEqual, very thick]
  (p4) edge[out=10, in=165]                node[] {} (tequal)
  (f) edge                				   node[] {} (tequal)
  (tequal) edge[out=-165, in=-7]           node[] {} (p4)
  (tequal) edge           				   node[] {} (t)
  ;
\end{tikzpicture}
	\vspace*{-6pt} 
	\caption{Partial representation of the protocol computing $5x + 6y
		\geq 4 \pmod{7}$ as a Petri net, where places (circles), transitions
		(squares) and tokens (smaller filled circles) represent
		respectively states, transitions and agents. Non-helper agents
		remember their input variable (labeled here within tokens).
		The depicted configuration is obtained from input $x = 2$, $y = 1$ by
		firing the bottom leftmost transition (dark blue).}\label{fig:petri:remainder}
\end{figure}



The protocols for arbitrary threshold and remainder predicates satisfying the conditions
of \cref{thm:atomic:helpers}, and their correctness proofs, are given in \Cref{app:threshold}. Note that the threshold
protocol is very similar to the protocol for linear inequalities given in Section 6 of \cite{BEJ18}. Thus, as an
example, we will instead describe how to handle the remainder predicate $5x - y \equiv_7 4$. Note, that the
predicate can be rewritten as $\left(5x + 6y \geq 4 \pmod{7}\right)\land \left(5x + 6y \not\geq 5
\pmod{7}\right)$. As we can handle negations and conjunctions separately in \cref{subsec:finsets}, we will now
explain the protocol for $\varphi \defeq 5x + 6y \geq 4 \pmod{7}$. The protocol is partially depicted in
\cref{fig:petri:remainder} using Petri net conventions for the graphical representation.

The protocol has an \emph{input state} $\q{x}$ for each variable $x \in X$,
\emph{output states} $\q{f}$ and $\q{t}$, a \emph{neutral state} $\q{0}$, and
\emph{numerical states} of the form $\q{+2^i}$ for every $0 \leq i
\leq n$, where $n$ is the smallest number such that $2^n > \norm{\varphi}$.
Initially, (at least) one helper is set to $\q{f}$ and (at least) $2n$ helpers set to $\q{0}$.
In order to compute $5x + 6y \geq 4 \pmod{7}$ for $x:= r$ and $y:= s$, we
initially place $r$ and $s$ agents in the states $\q{x}$ and $\q{y}$, i.e.,
the agents in state $\q{x}$ encode the number $r$ in unary, and similarly for $\q{y}$.
The blue transitions on the left of \cref{fig:petri:remainder} ``convert'' each agents in input states to a binary
representation of their corresponding coefficient.
In our example, agents in state $\q{x}$ are converted to $\vec{a}(x) = 5 = 0101_2$ by putting one agent in $\q{4}$ and another one in $\q{1}$.
Since two agents are needed to encode $5$, the transition ``recruits'' one helper from state $\q{0}$.
Observe that, since the inputs can be arbitrarily large, but a protocol can only use a constant number of helpers, the protocol must reuse helpers in order to convert all agents in input states.
This happens as follows.
If two agents are in the same power of two, say $\q{+2^i}$, then one of them can be ``promoted'' to $\q{+2^{i+1}}$, while the other one moves to state $\q{0}$, ``liberating'' one helper.
This allows the agents to represent the overall value of all converted agents in the most efficient representation.
That is, from any configuration, one can always reach a configuration where there is at most one agent in each place $2^0, \ldots, 2^{n-1}$, there are at most the number of agents converted from input places in place $2^n$, and hence there are at least $n$ agents in place $0$, thus ready to convert some agent from the input place.
Similar to promotions, ``demotions'' to smaller powers of two can also happen.
Thus, the agents effectively shift through all possible binary representations of the overall value of all converted agents.
The $\equiv_7$ transition in \cref{fig:petri:remainder} allows 3 agents in states $\q{4}$, $\q{2}$ and $\q{1}$ to ``cancel out'' by moving to state $\q{0}$, and it moves the
output helper to $\q{f}$.
Furthermore, there are RDI-transitions that allow to revert the effects of conversion and cancel transitions.
These are not shown in \cref{fig:petri:remainder}.

We have to show that this protocol computes $\varphi$ under reversible dynamic initialization.
First note, that while dynamic initialization has not terminated, all transitions have a corresponding reverse transition.
Thus, it is always possible to return to wrong initial configurations.
However, reversing the conversion transitions can create more agents in input states than the protocol effectively received.
To forbid this, each input agent is ``tagged'' with its variable (see tokens in \cref{fig:petri:remainder}).
Therefore, in order to reverse a conversion transitions, the original input agent is needed.
This implies, that the protocol is reversible.

Next, we need to argue that the protocol without the RDI-transitions computes $\varphi$ once the dynamic
initialization has terminated.
The agents will shift through the binary representations of the overall value.
Because of fairness, the $\equiv_7$ transition will eventually reduce the overall value to at most $6$.
There is a $\geq 4$-transition which detects the case where the final value is at least $4$ and moves the output helper from $\q{f}$ to state $\q{t}$. Notice that whenever transition $\equiv_7$ occurs, we reset the output by moving the output helper to state $\q{f}$.

\section{Succinct protocols for small populations}\label{sec:small}

We show that for every predicate $\varphi$
and constant $\cutoffvar = \O(|\varphi|^3)$,
there exists a succinct protocol $\PP_{< \cutoffvar}$ that computes the predicate $(|\vec{v}| < \cutoffvar) \rightarrow \varphi(\vec{v})$.
In this case, we say that $\PP_{< \cutoffvar}$ \emph{computes $\varphi$ for small inputs}. Further, we say that a number $n \in \N$
(resp. an input $\vec{v}$) is small with respect to $\varphi$ if  $n \leq \cutoffvar$ (resp. $|\vec{v}|\leq \cutoffvar$).
We present the proof strategy in a top-down manner.

\begin{itemize}
\item Section \ref{subsec:fixedsize} proves:  There is a succinct leaderless protocol $\PP$ that computes $\varphi$ for  small inputs (A), if for every small $n$ some succinct  protocol $\PP_n$ computes $\varphi$ for all inputs of size $n$ (B). Intuitively, constructing a succinct protocol for all small inputs reduces to the simpler problem of constructing a succinct protocol for all small inputs of a fixed size.
\item Section \ref{subsec:halting} introduces halting protocols. It shows: There is a succinct protocol that computes $\varphi$ for inputs of size $n$, if for every \emph{atomic} predicate $\psi$ of $\varphi$ some halting succinct protocol computes $\psi$ for inputs of size $n$ (C). Thus, constructing protocols for arbitrary predicates reduces to constructing \emph{halting} protocols for atomic predicates.
\item Section \ref{subsec:atomic} proves (C). Given a threshold or remainder predicate $\varphi$ and a small $n$, it shows how to construct a succinct halting protocol that computes $\varphi$ for inputs of size $n$.
\end{itemize}
\smallskip
Detailed proofs for this section can be found in \Cref{app:small}.

\subsection{From fixed-sized protocols with one leader to leaderless protocols}
\label{subsec:fixedsize}

\renewcommand{\cond}[2]{( #1 \mid #2 )}

We now define when a population protocol computes a predicate {\em for inputs of a fixed size}.
Intuitively,
it should compute the correct value for every initial configurations of this size;
for inputs of other sizes, the protocol may converge to the wrong result, or may not converge.

\begin{definition}
Let $\varphi$ be a predicate and let $i \geq 2$.
A protocol $\PP$ \emph{computes $\varphi$ for inputs of size $i$}, denoted \emph{``$\PP$ computes $\cond{\varphi}{i}$}'', if for every input $\vec{v}$ of size $i$, every fair execution of $\PP$ starting at $C_{\vec{v}}$ stabilizes to $\varphi(\vec{v})$.
\end{definition}

We show that if, for each small number $i$, some succinct protocol computes $\cond{\varphi}{i}$, then there is a single succinct protocol that computes $\varphi$ for all small inputs.

\begin{restatable}{theorem}{propRemovingLeaders}\label{thm:removing:leaders}
Let $\varphi$ be a predicate over a set of variables $X$, and let $\cutoffvar \in \N$.
Assume that for every $i \in \{2,3, \ldots, \cutoffvar-1\}$, there exists a protocol with at most one leader and at most $m$ states that computes $\cond{\varphi}{i}$. Then, there is a leaderless population protocol with $\O(\cutoffvar^4 \cdot m^2 \cdot |X|^3)$ states that computes $(\vec{x} <  \cutoffvar) \rightarrow \varphi(\vec{x})$.
\end{restatable}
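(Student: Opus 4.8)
The plan is to build a single leaderless protocol $\PP$ that, on an input $\vec{v}$ with $|\vec{v}| < \cutoffvar$, first \emph{computes its own size} $i = |\vec{v}|$, then \emph{elects a single leader}, and finally uses that leader to \emph{simulate} the assumed one-leader protocol $\PP_i$ for $\cond{\varphi}{i}$. Each of these three tasks has to be carried out without a leader and with only $\O(\poly(\cutoffvar, m, |X|))$ states, and — crucially — the three phases have to be interleaved safely, because agents cannot globally synchronize to know that a phase is complete. I expect the bookkeeping of this interleaving to be the main obstacle.

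First I would have every agent carry, besides its simulation component, a \emph{counter component} recording a lower bound on the population size together with the multiset of input variables it has ``absorbed''; since $|\vec{v}| < \cutoffvar$, counters never need to exceed $\cutoffvar$, so this component has $\O(\cutoffvar \cdot |X|^{\le \cutoffvar})$ — we must instead store the absorbed input \emph{vector} coordinatewise capped at $\cutoffvar$, giving $\O(\cutoffvar^{|X|})$, which is too large; so the right encoding is: each agent stores a capped total count in $\{0,\dots,\cutoffvar\}$ and, separately, for the leader-election winner, the actual accumulated input multiset bounded by $\cutoffvar$ agents, represented as an element of $X^{\ms{\le \cutoffvar}}$, of size $\O(\cutoffvar \cdot |X|)$ when stored as ``$\le\cutoffvar$ copies of a variable from $X$'' — this is the combinatorial point that forces the $\cutoffvar^4 m^2 |X|^3$ bound once one multiplies the component sizes. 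Concretely: one component is a ``role'' bit (candidate leader / follower), one is the running size estimate capped at $\cutoffvar$, one records the input-variable multiset the current leader-candidate has collected (at most $\cutoffvar$ entries over $X$), and one is the state of $\PP_i$ being simulated. Leader election is the classical ``two candidates meet, one survives'' dynamics; when a candidate absorbs a follower it increments its size estimate and merges the follower's recorded input into its collected multiset. By fairness, eventually exactly one candidate remains, its size estimate equals the true $i$, and it holds the full input multiset $\vec{v}$.

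Second, once a unique leader $\lambda$ has stabilized with correct $i$ and correct $\vec{v}$, it \emph{restarts} the simulation of $\PP_i$: it places itself in the (single) leader state of $\PP_i$ and ``re-feeds'' the input by converting followers into the appropriate input states of $\PP_i$, using its recorded multiset to know how many agents of each variable to dispatch. The subtlety — and this is where I expect to spend the most care — is that the size/input estimate can still \emph{grow} after a simulation has started (the leader has not yet met everyone), so any simulation must be \emph{restartable}: whenever the leader's estimate changes, or whenever the leader absorbs a fresh follower, the whole $\PP_i$-simulation is reset. One shows that after the last ``new information'' event (which happens at a well-defined finite point along any fair execution), the simulation is restarted one final time with the genuinely correct $i$ and $\vec{v}$, and from then on it behaves exactly like a fair execution of $\PP_i$ from $C_{\vec{v}}$ inside $\PP_i$, hence stabilizes to $\varphi(\vec{v})$ by hypothesis. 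For inputs with $|\vec{v}| \ge \cutoffvar$ the size estimate saturates at $\cutoffvar$ and we let $\PP$ simply output $1$ (or anything), since the specification $(\vec{x} < \cutoffvar) \to \varphi(\vec{x})$ is vacuously satisfied there; a saturation flag makes this detectable locally.

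The correctness argument then has three parts: (i) \emph{eventual stabilization of the election}: by fairness the candidate count is non-increasing and hits $1$, and the surviving candidate's recorded data is monotone and bounded, hence eventually constant and, by a counting argument over who-met-whom, equal to $(i,\vec{v})$; (ii) \emph{eventual finality of restarts}: there are only finitely many restart-triggering events (each strictly increases a bounded quantity — the size estimate, or the number of absorbed followers, both $\le \cutoffvar$), so after some point the $\PP_i$-simulation runs uninterrupted; (iii) \emph{inheritance of correctness}: the uninterrupted tail of the execution projects to a fair execution of $\PP_i$ on $C_{\vec{v}}$, so its output stabilizes to $\cond{\varphi}{i}(\vec{v}) = \varphi(\vec{v})$, and we propagate this consensus to the follower agents via output-copying transitions. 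The state count is the product of the component sizes — roughly $\O(\cutoffvar)$ for the size estimate, $\O((\cutoffvar\cdot|X|)^?)$ for the recorded input and the dispatch bookkeeping, $\O(m)$ for the simulated $\PP_i$-state, and a constant for roles/flags — which one checks multiplies out to $\O(\cutoffvar^4 \cdot m^2 \cdot |X|^3)$, and every component and transition is clearly polynomial-time constructible. The main obstacle, to repeat, is item~(ii): designing the reset mechanism so that restarts are provably finite and the final restart uses correct data, without any agent ever ``knowing'' it has met everyone.
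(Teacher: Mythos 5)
Your overall architecture --- count the population size up to $\cutoffvar$, elect a leader, and have the leader repeatedly reset the population and simulate $\PP_i$, restarting whenever its estimate grows --- is the same as the paper's, which merely factors it into two lemmas (first a \emph{one-leader} protocol with $\O(\cutoffvar\cdot m\cdot|X|)$ states that dispatches on the size estimate, then a generic leader-election/freeze-and-reset transformation that removes the leader at quadratic cost). Your items (i)--(iii) correspond to arguments the paper also makes, and your worry about finitely many restarts is resolved exactly as you suggest: every restart is triggered by a strict increment of a size estimate bounded by $\cutoffvar$.

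However, there is a genuine gap in your state accounting, and it sits at the point you yourself flag as delicate. You have the elected leader accumulate the \emph{input multiset} it has absorbed and later use it to dispatch followers to input states of $\PP_i$. A single agent's state must determine this entire multiset, and the number of multisets over $X$ of size at most $\cutoffvar$ is $\binom{\cutoffvar+|X|}{|X|}$, which is $\cutoffvar^{\Theta(|X|)}$ --- not polynomial in $|X|$ and $\cutoffvar$ jointly. Your claimed $\O(\cutoffvar\cdot|X|)$ encoding does not exist: you are conflating the \emph{description length} of a multiset (which is indeed $\O(\cutoffvar\log|X|)$ bits) with the \emph{number of states} needed to store it (which is exponential in that length). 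The fix, and the paper's key device, is to never centralize the input at all: every agent permanently annotates its state with its \emph{own} input variable, so when the leader resets an agent it simply sends it to $I_i(x)$ for that agent's own $x$; the leader then needs only a scalar count in $\{0,\dots,\cutoffvar\}$, never the multiset. Once you adopt this, your centralized dispatch bookkeeping disappears and the product of component sizes genuinely comes out to $\O(\cutoffvar^4 m^2 |X|^3)$. A second, smaller omission: the elected leader was itself an input agent, so with $i$ agents it has only $i-1$ followers to dispatch $i$ units of input to; the leader must play a dual role, simultaneously simulating the $\PP_i$-leader \emph{and} one regular agent of $\PP_i$ (carrying its own input variable in its state), which the paper builds explicitly into the multi-leader states.
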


\begin{proof}[Proof sketch]
Fix a predicate $\varphi$ and $\cutoffvar \in \N$. For every
$2 \leq i < \cutoffvar$, let $\PP_i$ be a protocol computing
$\cond{\varphi}{i}$. We describe the protocol $\PP=(Q, T, X, I, O)$ that computes $(\vec{x} \geq \cutoffvar) \lor \varphi(\vec{x}) \equiv (\vec{x} <  \cutoffvar) \rightarrow \varphi(\vec{x})$.
The input mapping $I$ is the identity. During the computation, agents never forget their initial state -- that is, all successor states of an agent are annotated with their initial state. The protocol initially performs a leader election. Each provisional leader stores how many agents it has ``knocked out'' during the leader election in a counter from $0$ to $\cutoffvar-1$. After increasing the counter to a given value $i<\cutoffvar$, it resets the state of $i$ agents and itself to the corresponding initial state of $\PP_{i+1}$, annotated with $X$, and initiates a simulation of $\PP_{i+1}$.
    When the counter of an agent reaches $\cutoffvar-1$, the agent knows that the population size must be $\geq \cutoffvar$, and turns the population into a permanent $1$-consensus.
    Now, if the population size $i$ is smaller than $\cutoffvar$, then eventually a leader gets elected who resets the population to the initial population of $\PP_i$. Since $\PP_i$ computes $\cond{\varphi}{i}$, the simulation of $\PP_i$ eventually yields the correct output.
\end{proof}

\subsection{Computing boolean combinations of predicates for fixed-size inputs}
\label{subsec:halting}

We want to produce a population protocol $\PP$
for a boolean combination $\varphi$ of atomic predicates $(\varphi_i)_{i \in [k]}$ for which we have population protocols $(\PP_i)_{i \in [k]}$.
As in Section~\ref{subsec:finat}, we cannot use a standard ``product protocol'' that executes $\PP_1, \ldots, \PP_k$ synchronously because
the number of states would be exponential in $k$.
Instead, we want to simulate the \emph{concatenation} of
$(\PP_i)_{i \in [k]}$. However, this is only possible if for all $i \in [k]$, the executions of $\PP_i$ eventually ``halt'', i.e.
some agents are eventually certain that the output of the protocol
will not change anymore, which is not the case in general population protocols.
For this reason we restrict our attention to ``halting'' protocols.

\begin{definition}
Let $\PP$ be a simple protocol with output states $\q{f}$ and $\q{t}$. We say that $\PP$ is a \emph{halting protocol} if every configuration $C$ reachable from an initial configuration satisfies:
\begin{itemize}
\item $C(\q{f}) = 0 \lor C(\q{t}) = 0$,

\item $C \trans{*} C' \land C(q) > 0 \Rightarrow C'(q) > 0$ for every $q \in \{\q{f}, \q{t}\}$ and every configuration $C'$.
\end{itemize}
\end{definition}

Intuitively, a halting protocol is a simple protocol in which states $\q{f}$ and $\q{t}$ behave like ``final states'': If an agent reaches $q \in \{\q{f}, \q{t}\}$, then the agent stays in $q$ forever.
In other words, the protocol reaches consensus $0$ (resp.\ $1$) if{}f an agent ever reaches $\q{f}$ (resp.\ $\q{t}$).

\begin{restatable}{theorem}{thmHaltingConjDis}\label{lemma:halting-conj-disj}
Let $k, i \in \N$.
Let $\varphi$ be a boolean combination of atomic predicates $(\varphi_j)_{j \in [k]}$.
Assume that for every $j \in [k]$, there is a  simple halting  protocol $\PP_j=(Q_j, L_j, X, T_j, I_j, O_j)$ with one leader computing $\cond{\varphi_j}{i}$. Then there exists
a simple halting protocol $\PP$ that computes $\cond{\varphi}{i}$,
with one leader and $\mathcal{O}\left(|X| \cdot  \left(\len(\varphi) +|Q_1| + \ldots + |Q_k|\right) \right)$ states.
\end{restatable}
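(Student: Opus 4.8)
The plan is to simulate the \emph{concatenation} $\PP_1 ; \PP_2 ; \cdots ; \PP_k$ of the halting protocols, evaluating the atomic predicates one after another and recording the resulting truth values, and then compute the boolean combination $\varphi$ from those $k$ stored bits. The key structural fact I would exploit is the halting property: once an agent reaches an output state $\q{f}_j$ or $\q{t}_j$ of $\PP_j$, it stays there forever, so the \emph{presence} of an agent in one of these states is a permanent, monotone signal. Since all $\PP_j$ share the same input set $X$ and the input mapping is the identity on the agents' initial annotations (which the agents never forget, exactly as in the sketch of \cref{thm:removing:leaders}), a single leader can drive the phases: it runs $\PP_1$, waits (via fairness) until some agent has reached an output state of $\PP_1$, latches that bit into a $k$-bit register it carries, then \emph{re-initializes} all agents to the initial states of $\PP_2$ using the retained initial annotations, runs $\PP_2$, and so on. The leader is the only agent that needs to count phases and hold the register, so one leader suffices.

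First I would set up the state space. Each non-leader agent's state is a pair: its frozen initial variable $x \in X$ (so $|X|$ choices) together with a "current working state", which ranges over $Q_1 \uplus \cdots \uplus Q_k$ (the disjoint union of the state spaces of the sub-protocols) plus a constant-size control tag indicating which phase the agent believes it is in and whether it has been reset yet for that phase. This gives $\O(|X|\cdot(|Q_1|+\cdots+|Q_k|))$ agent states. The leader additionally carries a phase counter in $[k]$ and the $k$-bit vector of already-computed truth values; since $k \le \len(\varphi)+1$, the leader contributes $\O(\len(\varphi)\cdot 2^{\,\cdot})$—no, that is too much, so instead the leader stores only the phase index and a \emph{single accumulator bit} obtained by folding $\varphi$ incrementally. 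Concretely, I would fix a left-to-right evaluation order of the $k$ leaves of (a fixed parse tree of) $\varphi$ and have the leader maintain, after phase $j$, enough partial information about $\varphi$ restricted to the first $j$ truth values to continue; by standard "online evaluation of a boolean formula" this partial information is a truth value for each still-open subtree on the path, i.e.\ $\O(\len(\varphi))$ bits, but to keep the leader's state count linear I would instead re-use the \emph{halting protocol for the boolean combination over Booleans}, i.e.\ treat the $k$ latched bits as living in $k$ dedicated agents (one per leaf, chosen during the initial leader election) and run a small halting protocol—of size $\O(|X|\cdot\len(\varphi))$—that evaluates $\varphi$ over those bits. The leader only needs its phase counter, giving $\O(k)=\O(\len(\varphi))$ leader states, and the total is $\O\bigl(|X|\cdot(\len(\varphi)+|Q_1|+\cdots+|Q_k|)\bigr)$ as claimed.

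The main steps in order: (1)~an initial leader-election phase in which a unique leader emerges and, along the way, designates $k$ "recorder" agents (knocking out others as in \cref{thm:removing:leaders}); if the population is too small to supply $k$ recorders this is harmless because we only need correctness for the fixed size $i$ and we choose the construction so that $i$ is large enough, or we fold the recorders into the leader's state at the cost of an extra $2^{k}$ factor only when $k=\O(\log i)$—I would instead simply absorb the $k$ bits into the leader, accepting leader-state count $\O(2^{k})$, and then note $k\le\len(\varphi)$ forces nothing, so the clean route is the recorder-agents one. (2)~For $j=1,\dots,k$ in turn: reset every agent's working state to $I_j(x)$ using its frozen $x$, reset the leader to $L_j$'s single leader state, run $\PP_j$; by fairness and the halting property some agent eventually enters $\q{f}_j$ or $\q{t}_j$, and when the leader meets such an agent it latches the corresponding bit into recorder~$j$ and increments its phase counter. (3)~After phase $k$, the $k$ recorder agents hold the truth values and the leader triggers a final constant-depth halting sub-protocol that evaluates $\varphi$ and pushes the whole population to the common output $\q{f}$ or $\q{t}$; this sub-protocol is itself simple and halting, so the composite is simple and halting.

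The hard part will be \textbf{the re-initialization between phases}: the leader must be sure that \emph{all} agents have been reset to the phase-$(j{+}1)$ initial states before $\PP_{j+1}$ is allowed to make a "real" move, otherwise a stale agent still running $\PP_j$ could corrupt phase $j{+}1$; yet with a single leader and no global synchronization the leader cannot directly detect "all agents reset". The standard fix, which I would use, is to make resets \emph{reversible and leader-gated}: phase-$(j{+}1)$ transitions are disabled unless accompanied by a token the leader only releases once its counter says "phase $j{+}1$", and an agent that still carries a phase-$\le j$ tag when it meets the leader is simply re-reset; combined with fairness this guarantees that eventually a configuration is reached in which every agent carries the phase-$(j{+}1)$ tag with the correct $I_{j+1}(x)$ working state, after which $\PP_{j+1}$ runs cleanly to a halt. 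Proving that this reach-a-clean-configuration argument does not get stuck—i.e.\ that the interleaving of "reset" and "$\PP_{j+1}$-step" transitions cannot livelock away from a clean configuration—is the delicate verification, and it is where the halting hypothesis on each $\PP_j$ (no infinite "productive" behaviour after an output state is hit) is essential. Once that invariant is in place, correctness for inputs of size exactly $i$ follows because each $\PP_j$ is run on exactly the input $\vec v$ of size $i$ and computes $\varphi_j(\vec v)$, and $\varphi$ is their boolean combination.
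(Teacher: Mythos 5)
Your high-level plan---sequential concatenation of the halting protocols, with each agent permanently tagged by its input variable so it can be re-initialized for the next sub-protocol---is the same as the paper's. But there is a genuine gap in how you combine the $k$ truth values. You propose either (a) $k$ dedicated ``recorder'' agents, or (b) storing the $k$ latched bits in the leader. Option (a) fails because the theorem must hold for every population size $i$ (including $i=2$), and nothing guarantees $i \geq k$; your escape (``we choose the construction so that $i$ is large enough'') is not available, since $i$ is given. Option (b) costs $2^{k} = 2^{\Theta(\len(\varphi))}$ leader states, violating the claimed $\O\left(|X| \cdot (\len(\varphi) + |Q_1| + \cdots + |Q_k|)\right)$ bound. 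You explicitly notice both problems and leave the tension unresolved. The paper's resolution is to store \emph{nothing}: it evaluates $\varphi$ recursively over a parse tree with short-circuiting. For $\varphi = \varphi_1 \oplus \varphi_2$ it first runs the (inductively obtained) protocol for $\varphi_1$; when an output state is hit, either the value of $\varphi$ is already determined (e.g.\ $\oplus = \wedge$ and the output is $0$), in which case the agent moves to the global output state, or the agents are converted---using their tags---to the initial states of the protocol for $\varphi_2$, whose output then becomes the final answer. The ``position in the formula'' is encoded in which sub-protocol's states the agents currently occupy, so each connective adds only $\O(|X|)$ fresh states.

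Second, the step you single out as the delicate one---ensuring all agents are reset before the next sub-protocol makes a ``real'' move---does not require the leader-gated, reversible-reset machinery you sketch (and do not substantiate). Because $\PP_{j+1}$ is halting \emph{and} correct for inputs of size exactly $i$, any configuration reached while agents are still trickling into its initial states is also reachable in $\PP_{j+1}$ from the genuine size-$i$ initial configuration (the late agents simply have not moved yet); a correct halting protocol can never place an agent in the wrong output state from such a configuration, since that output would be locked in forever. Hence staggered re-initialization is safe with no synchronization at all: a unary transition converts a halted agent of the first protocol into an initial agent of the second, and a binary transition lets any agent already in the second protocol convert any straggler. This is precisely where the halting hypothesis earns its keep; your plan uses it only to justify that the output is latched. (A minor further point: $\PP$ is granted one leader in its initial configuration, so no leader election is needed here---that device belongs to \cref{thm:removing:leaders}, not to this lemma.)
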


\begin{proof}[Proof sketch]
\newcommand{\pres}[1]{\text{pres}(#1)}
We only sketch the construction
for $\varphi=\varphi_1 \wedge \varphi_2$.
The main intuition is that, since $\PP_1$ and $\PP_2$ are halting, we can construct a protocol that, given an input $\vec{v}$, first simulates $\PP_1$ on $\vec{v}$, and, after $\PP_1$ halts, either halts if $\PP_1$ converges to $0$, or simulates $\PP_2$ on $\vec{v}$ if $\PP_1$ converges to $1$.
Each agent remembers in its state the input variable it corresponds to, in order to simulate $\PP_2$ on $\vec{v}$.
\end{proof}

\subsection{Computing atomic predicates for fixed-size inputs}
\label{subsec:atomic}

We describe a halting protocol that computes a given threshold predicate for fixed-size inputs.

\begin{restatable}{theorem}{thmHaltingThreshold}\label{thm:halting:threshold}
Let $\varphi(\vec{x}, \vec{y}) \defeq  \vec{\alpha} \cdot \vec{x} - \vec{\beta}\cdot \vec{y} > 0$.
For every $i \in \N$, there exists a halting protocol with one leader and $\O(i^2 (|\varphi| + \log i)^3)$
states that computes $\cond{\varphi}{i}$.
\end{restatable}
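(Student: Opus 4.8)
The goal is a halting protocol with one leader computing $\cond{\varphi}{i}$ for $\varphi(\vec{x},\vec{y}) = \vec{\alpha}\cdot\vec{x} - \vec{\beta}\cdot\vec{y} > 0$, using only $\O(i^2(|\varphi|+\log i)^3)$ states. The overall idea is that, once the population size is fixed to $i$, the leader can certify when it has interacted with every other agent, and thus can orchestrate a \emph{sequential} computation: it reads each agent's input variable in turn, accumulates the weighted sum $\vec{\alpha}\cdot\vec{x} - \vec{\beta}\cdot\vec{y}$ into a counter carried in its own state, and finally tests the sign of the counter. The key budget observation is that the value of the sum over an input of size $i$ lies in $[-i\cdot\norm{\varphi},\, i\cdot\norm{\varphi}]$, so a counter ranging over this interval has $\O(i\cdot\norm{\varphi}) = 2^{\O(\log i + |\varphi|)}$ possible values — too many to store in unary or even as a flat integer within the state budget. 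The trick (as in Example~\ref{ex:flock} and the ``numerical states'' construction of Section~\ref{subsec:findyn}) is to represent the running sum \emph{in binary}: the leader's state records a sign bit together with the bit position currently being manipulated, i.e.\ $\O(\log(i\cdot\norm{\varphi})) = \O(\log i + |\varphi|)$ bits of information, which is only $\O((\log i + |\varphi|)^2)$ or so states if one is careful, and the non-leader agents carry binary-weighted "tokens'' as in Figure~\ref{fig:petri:remainder}. One pays a factor $i^2$ because each of the $i$ agents must remember its original input variable (a factor $|X|\le|\varphi|$, absorbed), and because the leader needs a round-counter up to $i$ to know when it has polled everyone, and the binary representation of the partial sum is spread over up to $i$ agents acting as a place-value register; multiplying the register-size bound by the per-agent state count gives the $i^2(|\varphi|+\log i)^3$ figure.

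Concretely I would proceed as follows. \textbf{Step 1 (leader election and size certification).} Standard leaderless-to-one-leader elimination gives a single leader; the leader maintains a counter $0,1,\dots,i-1$ that it increments each time it meets a fresh (unmarked) agent and marks that agent. When the counter reaches $i-1$ the leader knows — for inputs of size exactly $i$ — that it has seen all agents. Care is needed because for populations of size $\ne i$ this may never happen or happen spuriously, but that is acceptable: we only promise $\cond{\varphi}{i}$. \textbf{Step 2 (binary accumulation).} As the leader marks each agent it reads the agent's input variable $x$ (or $y$) and needs to add $\alpha_x$ (resp.\ subtract $\beta_y$) to a running total. Rather than doing this atomically, convert the just-marked agent into a bundle of binary-weighted tokens $\q{\pm 2^j}$ representing $\alpha_x$ in two's-complement-like form (reusing a pool of ``zero'' helper-agents supplied by already-processed agents, exactly the reuse mechanism of Section~\ref{subsec:findyn}); carry/borrow transitions $\multiset{\q{\pm2^j},\q{\pm2^j}}\mapsto\multiset{\q{0},\q{\pm2^{j+1}}}$ let the distributed register normalize itself. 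Because the true value stays in $[-i\norm{\varphi},i\norm{\varphi}]$, $n:=\lceil\log_2(i\norm{\varphi})\rceil + 1$ bit positions suffice, and the register is realized across at most $\O(i)$ agents each in one of $\O(n)$ states. \textbf{Step 3 (sign test and halting).} After the leader's round-counter hits $i-1$ and the binary register has had a chance to normalize, the leader inspects the highest populated bit / sign bit. To make this \emph{halting}, I would add a final phase: once the leader is certain no further input remains (round-counter maxed out and no un-normalizable pair of tokens remains — detectable by a token at $\q{+2^n}$ meeting the leader, or the sign bit), it broadcasts a consensus by flooding either $\q{t}$ or $\q{f}$, which are sink states. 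The halting conditions ($C(\q{f})=0\lor C(\q{t})=0$, and once in $\q{f}/\q{t}$ always in $\q{f}/\q{t}$) then hold by construction: $\q{f},\q{t}$ are only entered after the irreversible broadcast trigger and have no outgoing transitions.

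The main obstacle I anticipate is \textbf{synchronization between normalization of the distributed binary register and the sign test}. Because transitions are asynchronous, when the leader's round-counter reaches $i-1$ the register may not yet be in normal form — there could be two tokens at some $\q{+2^j}$ still waiting to carry, or a spurious $\q{-2^j}$ and $\q{+2^k}$ that should cancel — so a naive sign read can be wrong, and since the protocol must \emph{halt} there is no room to "change one's mind'' afterwards. The clean fix is to not let the leader commit until a certificate of normal form is observed: e.g.\ insist the leader may only fire the sign-test transition while simultaneously consuming a token that witnesses the current candidate sign, and make the broadcast-to-$\q{t}$/$\q{f}$ step require the leader to have re-met all $i-1$ marked agents in a verification sweep after the last accumulation, during which any surviving carry/cancel pair would still be enabled and would be taken by fairness before the sweep can complete. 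Proving that this verification sweep is both \emph{sound} (cannot complete while the register is non-normal) and \emph{live} (eventually completes, by fairness, once the register stabilizes) is the delicate part; everything else is bookkeeping of state counts, where one checks: leader states $= \O(i)\ (\text{round counter}) \times \O(n)\ (\text{bit pointer}) \times \O(1)\ (\text{phase, sign})$, non-leader states $= \O(|X|) \times \O(n) \times \O(i)$ (original variable, bit value, marked/role flag), and $n = \O(\log i + |\varphi|)$, giving the stated $\O(i^2(|\varphi|+\log i)^3)$ bound after the $\kwayvar$-way-to-2-way and leaderless conversions absorb the remaining polynomial factors.
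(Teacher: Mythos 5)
There is a genuine gap, and it is in the step you yourself flag as ``the delicate part'' --- but the deeper problem is one step earlier, in Step~2. Your distributed binary register cannot fit into the population in the small-input regime. After processing even two agents, the running value $\alpha_{x_1}+\alpha_{x_2}$ may need up to $n=\Theta(\log\norm{\varphi})=\Theta(|\varphi|)$ tokens $\q{\pm 2^j}$ in normal form (e.g.\ $5+6=11$ already needs three), while each agent can carry only one token and the theorem must hold for \emph{every} $i\geq 2$, including $i\ll n$. The construction of \cref{subsec:findyn} that you invoke sidesteps this precisely by adding $2n$ dedicated helpers in state $\q{0}$; but the whole point of the small-population branch of the paper is that no such helpers are available and the population is too small to simulate them. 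The leader cannot absorb the register either, since storing an arbitrary value in $[-i\norm{\varphi},i\norm{\varphi}]$ requires $2^{\Theta(|\varphi|)}$ states, blowing the $\O(i^2(|\varphi|+\log i)^3)$ budget. This is why the paper's proof abandons any materialized representation of the sum: it has the leader simulate a sequential algorithm (\textsc{Greater-Sum}) that compares $\vec{\alpha}\cdot\vec{x}$ and $\vec{\beta}\cdot\vec{y}$ bit by bit from the most significant position down, recomputing each bit on demand by a streaming ripple-carry pass over the (passive) agents and retaining only the $\O(\log(i\cdot m))$-bit carry between positions; the agents store nothing but their coefficient and a ``met this round'' flag.

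Secondarily, your soundness argument for the verification sweep is not valid as stated: fairness does not force a pending carry or cancellation (a transition between two non-leader agents) to fire \emph{before} the leader's sweep (a sequence of leader--agent transitions) completes; fairness constrains which configurations recur in an infinite run, not the order of independent enabled transitions in a finite prefix. A fair execution can complete the sweep while the register is still non-normal, and a halting protocol cannot recover from the resulting wrong commitment. This could be repaired by having the sweep explicitly test normality position by position and restart on failure (liveness then does follow from fairness), but that repair is moot given that the register itself does not exist in the small-$i$ regime.
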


\newcommand{\GreaterSum}{\sf{Greater}\text{-}\sf{Sum}}

\noindent We first describe a sequential algorithm $\GreaterSum(\vec{x}, \vec{y})$, that
for every input $\vec{x}, \vec{y}$ satisfying $|\vec{x}|+|\vec{y}|=i$
decides whether $\vec{\alpha} \cdot \vec{x} - \vec{\beta}\cdot \vec{y} > 0$ holds. Then we simulate $\GreaterSum$
by means of a halting protocol with $i$ agents.

Since each agent can only have $\O(\log i + \log |\varphi|)$ bits of memory (the logarithm of the number of states),
$\GreaterSum$ must use at most $\O(i \cdot (\log i + \log |\varphi|))$ bits of memory, otherwise it cannot be
simulated by the agents. Because of this requirement, $\GreaterSum$ cannot just compute, store, and then compare
$\vec{\alpha} \cdot \vec{x}$ and $\vec{\beta}\cdot \vec{y}$; this uses too much memory.

$\GreaterSum$ calls procedures $\textit{Probe}_1(j)$ and $\textit{Probe}_2(j)$ that return the $j$-th bits of
$\vec{\alpha} \vec{x}$ and $\vec{\beta} \vec{y}$,  respectively, where $j=1$ is the most significant bit. Since $|\vec{x}| \leq  i$,
and the largest constant in $\vec{\alpha}$ is  at most $||\varphi||$, we have $\vec{\alpha} \cdot \vec{x} \leq i \cdot ||\varphi||$, and so
$\vec{\alpha} \cdot \vec{x}$ has at most $m \defeq |\varphi|+\lfloor\log(i)\rfloor+1$ bits, and the same holds for $\vec{\beta} \vec{y}$.
So we have $ 1 \leq j \leq m$.  Let us first describe $\GreaterSum$, and then  $\textit{Probe}_1(j)$; the procedure $\textit{Probe}_2(j)$ is similar.

$\GreaterSum(\vec{x}, \vec{y})$ loops through $j=1, \ldots, m$. For each $j$, it calls $\textit{Probe}_1(j)$ and $\textit{Probe}_2(j)$.
If $\textit{Probe}_1(j)>\textit{Probe}_2(j)$, then it answers $\varphi(\vec{x}, \vec{y})=1$, otherwise it moves to $j+1$.
If $\GreaterSum$ reaches the end of the loop, then it answers $\varphi(\vec{x}, \vec{y})=0$. Observe that
 $\GreaterSum$ only needs to store the current value of $j$ and the bits returned by $\textit{Probe}_1(j)$ and $\textit{Probe}_2(j)$.
 Since $j \leq m$, $\GreaterSum$ only needs $\O(\log(|\varphi|+\log i))$ bits of memory.

$\textit{Probe}_1(j)$
uses a decreasing counter $k = m, \ldots, j$ to successively compute the bits $b_{1}(k)$ of $\vec{\alpha} \cdot \vec{x}$,
starting at the least significant bit. To compute $b_{1}(k)$, the procedure stores the carry $c_k \leq i$
of the computation of $b_1(k+1)$; it then computes the sum $s_k := c_k + \vec{\alpha}(k) \cdot \vec{x}$
(where $\vec{\alpha}(k)$ is the $k$-th vector of bits of $\vec{\alpha}$), and sets $b_{k} := s_k \bmod 2$ and
$c_{k-1} := s_k \div 2$.
The procedure needs $\O(\log (|\varphi|+\log i))$ bits of memory for counter $k$,
$\log(i)+1$ bits for encoding $s_k$, and
$\O(\log(i))$ bits for encoding $c_k$. So it only uses $\O(\log(|\varphi|+\log i))$ bits of memory.

Let us now simulate $\GreaterSum(\vec{x}, \vec{y})$ by a halting protocol with one leader agent.
Intuitively, the protocol proceeds in rounds corresponding to the counter $k$.
The leader stores in its state the value $j$ and the current values of the program counter, of counter $k$, and of variables $b_k$, $s_k$, and $c_k$.
The crucial part is the implementation of the instruction
$s_k := c_k + \vec{\alpha}(k) \cdot \vec{x}$ of $\textit{Probe}_1(j)$.
In each round, the leader adds input agents one by one. As the protocol only needs to work for populations with $i$ agents,
it is possible for each agent to know if it already interacted with the leader in this round, and for the leader to count the number of agents it has interacted with this round, until it reaches $i$ to start the next round.

\section{Conclusion and further work}\label{sec:conc}
We have proved that every predicate $\varphi$ of quantifier-free Presburger arithmetic (QFPA) is computed by a leaderless protocol with $\poly(|\varphi|)$ states.  
Further, the protocol can be computed in polynomial time. The number of states of previous constructions was exponential both in the bit-length of the coefficients of $\varphi$, and in the number of occurrences of boolean connectives. Since QFPA and PA have the same expressive power, 
 every computable predicate has a succinct leaderless protocol. This result completes the work initiated in \cite{BEJ18}, which also constructed succinct protocols, but only for some predicates, and with the help of leaders.
 
It is known that protocols with leaders can be exponentially faster than leaderless protocols. Indeed, every QFPA predicate is computed by a protocol with leaders whose expected
time to consensus is polylogarithmic in the number of agents \cite{AngluinAE08a},
while every leaderless protocol for the majority predicate needs at least linear time in the number of agents \cite{AlistarhAEGR17}. Our result shows that, if there is also an exponential gap in state-complexity, then it must be because some family of predicates have protocols with leaders of logarithmic size, while all leaderless families need polynomially many states. The existence of such a family is an open problem. 
 
The question of whether protocols with $\poly(|\varphi|)$ states exist for every PA formula $\varphi$, possibly with quantifiers, also remains open. However, it is easy to prove (see \Cref{appendix_PA}) that no algorithm for the construction of protocols from PA formulas runs in time $2^{p(n)}$ for any polynomial $p$:

\begin{restatable}{theorem}{thmHardnessPA}\label{thm:hardness:PA}
For every polynomial $p$, every algorithm that accepts a formula $\varphi$ of PA as input, and returns a population protocol computing $\varphi$, runs in time $2^{\omega(p(|\varphi|))}$.
\end{restatable}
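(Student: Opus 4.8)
The plan is to obtain the lower bound from a classical succinctness result: Presburger arithmetic can express predicates whose smallest semilinear representation — and hence whose smallest population protocol — is non-elementary in the formula size. More concretely, I would argue as follows. A population protocol with $s$ states computes a predicate that, as a subset of $\N^X$, is a semilinear set whose description has size bounded by some fixed function of $s$ (the reachability relation of a protocol with $s$ states is effectively semilinear, and one can bound the norms of the periods and offsets by a primitive-recursive — in fact, for the purposes here, elementary — function of $s$; alternatively, one quotes the known fact that a protocol with $s$ states computes a predicate definable by a Presburger formula of size $2^{\O(\poly(s))}$, via \cite{AAE06} or \cite{EGLM17}). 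Therefore, if some family of PA formulas $(\varphi_n)_n$ with $|\varphi_n| = \O(n)$ requires protocols of size $\mathrm{Tower}(\Omega(n))$ (a tower of exponentials of height linear in $n$), then no construction algorithm can run in time $2^{p(n)}$ for any fixed polynomial $p$, since its output alone would already be too large to write down.

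The key ingredient is then a non-elementary succinctness gap between PA formula size and the size of the described set. Such a gap is well known: already for the integers, there are sentences of PA of size $\O(n)$ whose shortest equivalent quantifier-free (or whose smallest semilinear) description has non-elementary size, essentially because the alternation of quantifiers in Presburger arithmetic lets one iterate exponentiation, and the best known (and, by Fischer–Rabin-style arguments, essentially optimal) quantifier-elimination procedure incurs a tower of exponentials. I would make this precise by exhibiting, for each $n$, a formula $\varphi_n(x)$ over a single variable of length $\O(n)$ that defines a threshold predicate $x \ge N_n$ with $N_n \ge \mathrm{Tower}(n)$ — for instance by nesting $n$ levels of a definable ``doubling'' or ``squaring'' relation, each level expressible with a constant number of quantifiers and a constant-size matrix. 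Since a protocol computing $x \ge N_n$ must have at least $\log N_n$ states (a protocol with $s$ states cannot compute $x \ge N$ for $N$ larger than a fixed exponential in $s$ — indeed the predicates it computes on a single input variable are eventually periodic with threshold and period bounded by a function of $s$), this forces $s \ge \Omega(\log N_n)$; with $N_n = \mathrm{Tower}(n)$ this is still $\mathrm{Tower}(n-1)$, which is $2^{\omega(p(n))}$ for every polynomial $p$.

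Putting the pieces together: given the construction algorithm $\mathcal{A}$ and a polynomial $p$, feed $\mathcal{A}$ the formula $\varphi_n$ of size $m_n = \O(n)$. The output protocol has at least $\mathrm{Tower}(n-1)$ states, so merely emitting it takes time at least $\mathrm{Tower}(n-1) = 2^{\omega(p(m_n))}$. Hence $\mathcal{A}$ does not run in time $2^{p(|\varphi|)}$ on all inputs, which is exactly the claim.

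\medskip

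The main obstacle I anticipate is pinning down the precise form of the succinctness witness $\varphi_n$ and the matching lower bound ``a protocol with $s$ states cannot compute $x \ge N$ for $N$ too large''. The latter is a genuinely easy pumping/eventual-periodicity argument on the finite state space of the protocol (projecting to one input variable, the set of accepted values is ultimately periodic with parameters controlled by $s$), but it needs to be stated carefully enough to yield a clean bound. Constructing $\varphi_n$ is standard Presburger folklore — iterating a constant-size definition of ``$y$ is a power of two up to the $k$-th level'' — but writing a fully rigorous definable iteration of exponentiation with only $\O(n)$ symbols requires some care with the encoding; alternatively one can simply cite the known non-elementary lower bound for the size blow-up of quantifier elimination / for the smallest semilinear description of PA-definable sets and avoid reproving it. I would prefer the citation route in the body and relegate the explicit $\varphi_n$ to the appendix.
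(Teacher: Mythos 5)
Your proposal takes a genuinely different route from the paper's, but it has two gaps, each fatal on its own. First, the succinctness witness does not exist: Presburger arithmetic is decidable in elementary (triply exponential) time, and quantifier elimination for PA has a triply exponential upper bound, so a PA formula of size $n$ can only define a threshold $x \geq N$ with $N$ at most triply exponential in $n$. The non-elementary blow-up you are recalling (iterating a constant-size definable doubling $n$ times) is the situation for WS1S/weak MSO, not for PA; the Fischer--Rabin trick for PA only reaches doubly/triply exponential magnitudes, which is precisely why PA validity sits between 2-NEXP and 2-EXPSPACE rather than being non-elementary. So there is no family $\varphi_n$ of size $\O(n)$ defining $x \geq \mathrm{Tower}(n)$. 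Second, the lemma you call "genuinely easy" --- that an $s$-state protocol can only compute $x \geq N$ for $N$ elementarily bounded in $s$ --- is not known and is not in \cite{AAE06} or \cite{EGLM17}: those semilinearity proofs go through well-quasi-order/pumping arguments that yield no effective elementary bound on the size of the resulting semilinear description, and already \cite{BEJ18}-style constructions decide thresholds doubly exponential in the number of states. Finiteness gives you eventual periodicity with *some* bound in $s$, but you need an elementary bound, and none is available. Note that the paper explicitly leaves open whether every PA formula has a $\poly(|\varphi|)$-state protocol; your argument, if it worked, would refute that, which is a strong sign the missing lemma is not "easy". Even granting both ingredients in their best known forms (triply exponential thresholds, $\Omega(\log\log N)$ states), you would only force protocols of size $2^{\O(n)}$, which rules out algorithms running in time $o(2^{cn})$ but not, say, time $2^{n^2}$ --- so it would not establish the claimed $2^{\omega(p(|\varphi|))}$ bound for every polynomial $p$.

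The paper's proof sidesteps expressiveness entirely and argues by reduction from the validity problem for PA. Given a sentence $\varphi$, it forms $\psi(x) \defeq (x \geq 2) \wedge \varphi$; if the construction algorithm ran in time $2^{p(n)}$, its output protocol would have at most $2^{p(n)}$ states, and whether that protocol stabilizes to $1$ on the two-agent input can be expressed in FO(TC) over configurations storable in space $\O(p(n))$, hence decided in NSPACE$(p(n)) \subseteq \mathrm{DTIME}(2^{\O(p(n))})$ by Immerman's theorem. This would place PA validity in EXPTIME, contradicting its 2-NEXP-hardness. The moral difference is that the paper's lower bound is on the \emph{time to construct} the protocol, not on the \emph{size} of the protocol --- the latter question remains open, and your approach conflates the two.
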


\noindent Therefore, if PA also has succinct protocols, then they are very hard to find.

Our succinct protocols for QFPA have slow convergence (in the usual parallel time model, see e.g.~\cite{AlistarhG18}), since they often rely on exhaustive exploration of a number of alternatives, until the right one is eventually hit. The question of whether every QFPA predicate has a succinct \emph{and} fast protocol is very challenging, and we leave it open for future research.

\bibliography{references}

\appendix
\clearpage
\section{Equivalence of simple and standard population protocols} \label{appendix_output}
Recall that a simple population protocol (SPP), has two unique states 
$\q{f}, \q{t} \in Q$ with outputs $O(\q{f}) = 0$ and $O(\q{t}) = 1$ and all other states $q$ have output $O(q) = \bot$.

In the standard definition of population protocols used in the literature, all states $q$ have an output 
$O(q) \in \{0,1\}$. In this section we call such a protocol a \emph{full output population protocols} (FOPP). In a FOPP,  a configuration $C$ is a consensus configuration if
$O(p) = O(q)$ for every $p, q \in \supp{C}$. If $C$ is a consensus configuration, then its output $O(C)$ is the unique output of its states, otherwise it is $\bot$. 
An execution $\sigma = C_0 C_1 \cdots$ stabilizes to $b \in \{0, 1\}$ 
if $O(C_i) = O(C_{i+1}) = \cdots = b$ for some $i \in \mathbb{N}$. 
The output of $\sigma$ is $O(\sigma) = b$ if it stabilizes to $b$, and $O(\sigma)= \bot$ otherwise. 
A consensus configuration $C$ is
stable if every configuration $C'$ 
reachable from $C$ is a consensus configuration such that
$O(C') = O(C)$. It is easy to see that a fair execution of a FOPP stabilizes to $b \in \{0, 1\}$ if and only if it contains a stable configuration whose output is $b$.

A FOPP $\PP$ computes a predicate $\varphi \colon
\N^X \to \{0, 1\}$ if for every $\vec{v} \in \N^X$ every fair
execution $\sigma$ starting from $C_{\vec{v}}$ stabilizes to $\varphi(\vec{v})$.

In the rest of the section we show that every FOPP has an equivalent SPP, and vice versa.
Both translations have linear blow-up.

\parag{FOPP $\rightarrow$ SPP}
Let $\PP = (Q, T, L, X, I, O)$ be a FOPP computing a predicate $\varphi$.  We obtain a SPP protocol $\PP'$ by adding two output states $\{\q{f},\q{t}\}$ to $\PP$, plus a new state $\q{\bot}$. 
The output function of $\PP'$ is the mapping $O': q \mapsto (0 \text{ if } q = \q{f} \text{ else } 1
\text{ if } q = \q{t} \text{ else } \bot)$. The set $L'$ of leaders of $\PP'$ is obtained by 
adding one leader to $L$, initially in state $\q{\bot}$. Finally, the set $T'$ of transitions is obtained by adding to $T$, for all $b \in \{\q{f},\q{t},\q{\bot}\}$, a transition 
$\multiset{q, b} \mapsto \multiset{q, \q{f}}$ for every state $q \in Q$ such that $O(q)=0$ and 
a transition $\multiset{q, b} \mapsto \multiset{q, \q{t}}$ for every state $q \in Q$ such that $O(q)=1$. 

We show that $\PP'$ also computes $\varphi$.
Let $C'_0 C'_1 \cdots$ be a fair execution of $\PP'$ from $C_0'$. Projecting it onto the set of states of $\PP$ yields a fair execution of $\PP$. Since $\PP$ computes $\varphi$, the execution outputs some $b \in \{0,1\}$. Assume that $b=0$ (the case $b=1$ is symmetric). 
Let $i \in \N$ such that the output of every state populated by $C_j$ is $b$ for every $j \geq i$. 
Now, no matter the state populated by the additional leader in $C_j$ (which is one of $\{\q{f},\q{t},\q{\bot}\}$), the transition $\multiset{q, b} \mapsto \multiset{q, \q{f}}$
is enabled for every state $q$ such that $C_j(q)=0$. By fairness, the leader will thus eventually move to state $\q{f}$ and it will be stuck there, and $\PP'$ outputs $0$ as well.

\medskip
\parag{SPP $\rightarrow$ FOPP}
Let $\PP = (Q, T, L, X, I, O)$ be an SPP with output states $\q{f}, \q{t} \in Q$ computing a predicate $\varphi$. Let $\PP'$ be the FOPP with two disjoint copies $Q_0,Q_1$ of $Q$ as states. For $a \in Q$, let $a_b$ denote the copy of $a$ in $Q_b$, for $b \in \{0,1\}$.
We define $O'(q)=b$ for all $q \in Q_b$.
The set $T'$ of transitions is the following. First,
for every transition $\multiset{x, y} \mapsto \multiset{z, u}$ of $T$, the set $T'$
contains a transition $\multiset{x_b, y_c} \mapsto \multiset{z_d, u_e}$
for every $b,c,d,e \in \{0,1\}$ such that if $b=c$ then $d=e=b=c$. 
Further, $T'$ also contains a set $T''$ of transitions consisting of
$\multiset{\q{f}_1} \mapsto \multiset{\q{f}_0}$,
$\multiset{\q{t}_0} \mapsto \multiset{\q{t}_1}$,
$\multiset{a_1,\q{f}_0} \mapsto \multiset{a_0,\q{f}_0}$ 
and
$\multiset{a_0,\q{t}_1} \mapsto \multiset{a_1,\q{t}_1}$ 
for every $a \in Q$.

The input mapping and leader multiset of $\PP'$ are the ``0'' copies of the input mapping and leader multiset of $\PP$.
Hence, for any input $\vec{v}$ the initial configuration $C'_{\vec{v}}$ of $\PP'$ is the ``0'' copy of the initial configuration $C_{\vec{v}}$ in $\PP$. 

We show that $\PP'$ also computes $\varphi$. Let $C'_0 C'_1 \cdots$ be a fair execution of $\PP'$ from $C'_0$. For every $i \in \mathbb{N}$, let $C_i=\pi(C'_i)$, where $\pi$ is the mapping defined by $\pi(q_b)=b$ for every $q_b \in Q_0 \cup Q_1$.
It is easy to see that $C_0 C_1 \cdots$
is a fair execution of $\PP$, with possible repetitions $C_i=C_{i+1}$
when the transition from $C'_i$ to $C'_{i+1}$ is in $T''$.
Hence $C_0 C_1 \cdots$ eventually stabilizes to an output $b$.
Assume that $b=0$ (the case $b=1$ is symmetric).
By fairness, because of the transitions 
$\multiset{\q{f}_1} \mapsto \multiset{\q{f}_0}$,
and
$\multiset{a_1,\q{f}_0} \mapsto \multiset{a_0,\q{f}_0}$, the execution $C'_0 C'_1 \cdots$ eventually reaches, and gets trapped in, configurations of $\mathbb{N}^{Q_0}$. So the execution also stabilizes to the output $0$.


\section{Proofs of Section \ref{sec:large}: Protocols for large populations} \label{app:large}

\subsection{Proof of Theorem \ref{thm:remove:helpers}}\label{app:rem:helpers}
\thmRemoveHelpers*

We first define a leaderless protocol $\overline{\PP}$, introduce some auxiliary definitions and propositions, and finally prove that $\overline{\PP}$ computes $(|\vec{v}| \geq \cutoffvar)
 \rightarrow \varphi(\vec{v})$.
 
\parag{The protocol $\overline{\PP}$} As mentioned in the main text,
by~\cite[Lemma~3]{BEJ18}, $\PP$ can be transformed into a 2-way
population protocol (with helpers) also computing $\varphi$, and with
at most $|Q| + 3\lambda \cdot |T|$ states, where $\lambda \defeq
\max\{|\vec{p}| : (\vec{p}, \vec{q}) \in T\}$. Thus, we assume that
$\PP$ is 2-way in the rest of this section, implicitly keeping in mind
the polynomial increase in the number of states.

Let $h_1, h_2, \ldots, h_\cutoffvar \in Q$ be the helpers of $\PP$
in some arbitrary but fixed order. For example, if $L
= \{p, 3 \cdot q\}$, then we can have $h_1 = p$, $h_2 = q$, $h_3 = q$
and $h_4 = q$. Let $\overline{\PP} \defeq (\overline{Q}, \overline{T}, {\vec{0}}, X, \overline{I}, \overline{O})$ be the
population protocol such that:
\begin{align*}
 \overline{Q} &\defeq (X \times [\cutoffvar]) \cup Q^\ms{2}, \\
  \overline{T} &\defeq \overline{T}_\text{count} \cup \overline{T}_\text{init} \cup \overline{T}_\text{simul}, \\
  \overline{I} &\defeq x \mapsto (x, 1), \\
  \overline{O} &\defeq
  \begin{cases}
    (x, i) \mapsto 1
    & \text{for every } (x, i) \in X \times [\cutoffvar], \\
    \vec{q} \mapsto O(\vec{q})
    & \text{for every } \vec{q} \in Q^\ms{2},
  \end{cases}
\end{align*}
where 
\begin{itemize}
\item $\overline{T}_\text{count}$ is the set consisting of the following transitions:
\begin{align*}
  \multiset{(x, i), (y, i)} &\mapsto \multiset{(x, i+1), (y, i)}
  && \text{for every } x, y \in X \text{ and } i < \cutoffvar,
\end{align*}
\item  $\overline{T}_\text{init}$ is the set consisting of the following transitions:
\begin{align*}
  \multiset{(x, \cutoffvar), (y, i)}
  &\mapsto \multiset{(I(x), h_\cutoffvar), (I(y), h_i)}  
  && \text{for every } x, y \in X \text{ and } i \leq \cutoffvar, \\
  \multiset{\vec{q}, (y, i)}
  &\mapsto \multiset{\vec{q}, (I(y), h_i)}
  && \text{for every } y \in X, i \leq \cutoffvar, \text{ and } \vec{q} \in
  Q^\ms{2},
\end{align*}
\item $T_\text{simul}$ is the consisting of the following
transitions:
\begin{align*}
  \multiset{\vec{p}, \vec{q}} \mapsto \multiset{\vec{p}', \vec{q}'}
  && \text{ for every } \vec{p}, \vec{q}, \vec{p}', \vec{q}'
  \in\ Q^\ms{2} \text{ such that } (\vec{p} + \vec{q})
  \trans{} (\vec{p}' + \vec{q}') \text{ in } \PP.
\end{align*}
\end{itemize}

\parag{Auxiliary definitions and propositions} The intended behavior
of $\overline{\PP}$ is to first fire $\overline{T}_\text{count}$, then $\overline{T}_\text{init}$, and
then $\overline{T}_\text{simul}$. Although $\overline{\PP}$ may fire sequences not
respecting this order, there always exist an equivalent sequence
respecting the order, in the following sense:

\begin{proposition} \label{prop:helpers:reorder}
  For every configurations $C$ and $D$ such that $C \trans{*} D$,
  there exist $x \in \overline{T}_\text{count}^*$, $y \in \overline{T}_\text{init}^*$ and $z
  \in \overline{T}_\text{simul}^*$ such that $C \trans{xyz} D$.
\end{proposition}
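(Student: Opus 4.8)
I would prove this by \emph{normalising} an arbitrary firing sequence $C \trans{w} D$, $w \in \overline{T}^*$, into the shape $xyz$ via repeated local exchanges of adjacent transitions, in the style of a bubble sort. The argument rests on three facts that one reads off directly from the definition of $\overline{T}$: every transition of $\overline{T}_\text{count}$ consumes and produces only states in $X \times [\cutoffvar]$; every transition of $\overline{T}_\text{simul}$ consumes and produces only states in $Q^\ms{2}$, and is therefore enabled only in configurations where some $Q^\ms{2}$-agent is present; and every transition of $\overline{T}_\text{init}$ produces only states in $Q^\ms{2}$, while its consumed multiset is either $\multiset{(x,\cutoffvar),(y,i)}$ (a ``heavy'' initialisation) or $\multiset{\vec{q},(y,i)}$ with $\vec{q} \in Q^\ms{2}$ (a ``light'' initialisation), where in the latter case the ``catalyst'' $\vec{q}$ is re-produced.

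The first step is to establish two exchange lemmas. \emph{(1)} If $C' \trans{t} C_1 \trans{t'} C''$ with $t \in \overline{T}_\text{init} \cup \overline{T}_\text{simul}$ and $t' \in \overline{T}_\text{count}$, then $C' \trans{t'} C_2 \trans{t} C''$ for some $C_2$. This is an elementary multiset computation: $t$ does not decrease the $X \times [\cutoffvar]$-part of a configuration, so $t'$ is already enabled at $C'$; conversely $C'$ contained enough $X \times [\cutoffvar]$-agents to fire $t$ and $t'$ together and $t'$ does not touch $Q^\ms{2}$, so $t$ remains enabled after $t'$; and the endpoint is unchanged because multiset addition commutes. \emph{(2)} If $C' \trans{t} C_1 \trans{t'} C''$ with $t \in \overline{T}_\text{simul}$ and $t' \in \overline{T}_\text{init}$, then $C' \trans{t''} C_2 \trans{t} C''$ for some $C_2$ and some $t'' \in \overline{T}_\text{init}$. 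When $t'$ is a heavy initialisation, or a light one whose catalyst already occurs in $C'$, I take $t'' = t'$ and argue as in (1). Otherwise $t'$ is a light initialisation whose catalyst $\vec{q}$ is created by $t$; since $t$ consumes two $Q^\ms{2}$-agents, some $Q^\ms{2}$-state $\vec{q}'$ occurs in the input multiset of $t$, hence already in $C'$, and I let $t''$ be the light initialisation that uses $\vec{q}'$ as catalyst and acts on the same $(y,i)$-agent. One checks that $t''$ is enabled at $C'$, that firing $t''$ merely adds a $Q^\ms{2}$-agent and removes the $(y,i)$-agent so that $t$ stays enabled, and that the resulting configuration is again $C''$.

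The second step is the bubble sort itself. Call a pair of positions $p < p'$ of a firing sequence an \emph{inversion} if the classes of the transitions at $p$ and $p'$ form one of the patterns $(\overline{T}_\text{init},\overline{T}_\text{count})$, $(\overline{T}_\text{simul},\overline{T}_\text{count})$, or $(\overline{T}_\text{simul},\overline{T}_\text{init})$. A sequence with no inversion is precisely of the form $xyz$ with $x \in \overline{T}_\text{count}^*$, $y \in \overline{T}_\text{init}^*$, $z \in \overline{T}_\text{simul}^*$; and a sequence with an inversion has an \emph{adjacent} one, since otherwise the sequence of transition classes would be non-decreasing for the order $\text{count} < \text{init} < \text{simul}$ and hence inversion-free. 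To an adjacent inversion I apply Lemma (1) or (2), replacing the two offending transitions by the exchanged pair. This produces a firing sequence from $C$ to $D$ again, and since the replacement does not change the class of any transition it decreases the number of inversions by exactly one. Iterating this (finitely often) yields the desired decomposition.

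The one point requiring care --- the main obstacle --- is the light-initialisation catalyst in exchange (2): taken naively, a $\overline{T}_\text{simul}$-step that \emph{creates} the $Q^\ms{2}$-agent later needed as a catalyst cannot be pushed after the corresponding $\overline{T}_\text{init}$-step. This is precisely why the observation that $\overline{T}_\text{simul}$-steps are enabled only when a $Q^\ms{2}$-agent is already present, combined with the fact that the light-initialisation transition schema is parametric in its catalyst, is needed. Everything else is routine multiset bookkeeping.
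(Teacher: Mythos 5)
Your proof is correct, and it follows the same basic idea as the paper's proof, which however consists of a single sentence: ``the sequence $xyz$ is simply obtained by reordering the transitions of $w$; firability follows from inspection of $\overline{T}$.'' Your bubble-sort normalisation is the honest version of that sentence, and it exposes a point the paper's formulation glosses over: the firable sequence $xyz$ is in general \emph{not} a permutation of $w$, because a $\overline{T}_\text{simul}$ step may create the very $Q^\ms{2}$-state that a later light $\overline{T}_\text{init}$ step uses as catalyst, and after commuting, that catalyst may be absent. Your fix --- replace the catalyst by any $Q^\ms{2}$-state occurring in the preset of the simulating transition, which must already be populated, noting that the net effect of a light initialisation is independent of its catalyst --- is exactly what is needed, and it keeps the transition in the same class so the inversion count still drops by one. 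One small slip: in exchange lemma (1) you justify that $t'$ is enabled at $C'$ by saying $t$ ``does not decrease'' the $X\times[\cutoffvar]$-part; the relevant fact is that $t$ does not \emph{increase} it (indeed an init step strictly decreases it), so that every $X\times[\cutoffvar]$-agent present after $t$ was already present before $t$ --- equivalently, your subsequent observation that $C'\geq \mathrm{pre}(t)+\mathrm{pre}(t')$ already gives the full diamond and is the argument to keep. With that wording corrected, the proof is complete.
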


\begin{proof}
  Let $w \in \overline{T}^*$ be such that $C \trans{w} D$. The sequence $xyz$ is
  simply obtained by reordering the transitions of $w$. Firability of
  $xyz$ follows from inspection of $\overline{T}$.
\end{proof}

Observe that firing $\overline{T}_\text{count}$, until no further possible,
counts the number of agents up to $\cutoffvar$:

\begin{proposition} \label{prop:helper:count}
  Let $C$ and $D$ be configurations such that $C$ is initial, $C
  \trans{\overline{T}_\text{count}^*} D$ and $\overline{T}_\text{count}$ is disabled in
  $D$. We have $\supp{D} \cap (X \times \{j\}) \neq \emptyset \iff |C|
  \geq j$ for every $j \in [\cutoffvar]$.
\end{proposition}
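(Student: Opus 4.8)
\emph{Setup.} For a configuration $E$ over $\overline{Q}$ and a level $i \in [\cutoffvar]$, I would write $a_i(E) \defeq \sum_{x \in X} E(x,i)$ for the number of agents currently at level $i$ (agents in states $Q^\ms{2}$ play no role, since $\overline{T}_\text{count}$ never touches them). The plan starts from two facts read off directly from the definition of $\overline{T}_\text{count}$: every such transition moves exactly one agent from some level $i < \cutoffvar$ to level $i+1$ and leaves all other $a_k$ unchanged, so it conserves $\sum_i a_i(E) = |E|$ and only ever raises levels; and a transition ``at level $i$'' is enabled in $E$ iff $a_i(E) \ge 2$. Hence $\overline{T}_\text{count}$ is disabled in $D$ exactly when $a_i(D) \le 1$ for every $i \in \{1, \dots, \cutoffvar-1\}$.

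\emph{The key invariant.} Next I would prove, by induction on the length of a firing sequence $C \trans{\overline{T}_\text{count}^*} E$, that the set of occupied levels $\mathrm{occ}(E) \defeq \{i \in [\cutoffvar] : a_i(E) > 0\}$ is always an initial segment $\{1, 2, \dots, L(E)\}$ with $1 \le L(E) \le \cutoffvar$. The base case is immediate: $C$ is initial and $\overline{\PP}$ is leaderless, so all $|C| \ge 2$ agents start at level $1$. For the inductive step, fire a transition at level $k$ in $E$; enabledness gives $a_k(E) \ge 2$, hence $k \le L(E)$, and after firing, level $k$ still carries $a_k(E)-1 \ge 1$ agents while level $k+1 \le \cutoffvar$ gains one, so $\mathrm{occ}(E') = \{1,\dots,L(E)\} \cup \{k+1\}$, again an initial segment. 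The underlying ``persistence'' observation — once a level below $\cutoffvar$ is occupied it can only shrink down to one agent, never to zero — is what drives everything.

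\emph{Conclusion.} Applying the invariant to $D$: its $L(D)$ occupied levels each hold at least one of the $|C|$ agents, so $L(D) \le |C|$, and since all levels lie in $[\cutoffvar]$ also $L(D) \le \cutoffvar$, i.e. $L(D) \le \min(|C|,\cutoffvar)$. For the reverse inequality I would split into cases. If $L(D) < \cutoffvar$, every occupied level $i \le L(D)$ has $i \le \cutoffvar-1$, so disabledness forces $a_i(D) \le 1$ and hence $a_i(D)=1$; summing yields $|C| = L(D)$, whence $L(D) = \min(|C|,\cutoffvar)$. If $L(D) = \cutoffvar$, then trivially $\min(|C|,\cutoffvar) \le \cutoffvar = L(D) \le \min(|C|,\cutoffvar)$. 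Either way $L(D) = \min(|C|,\cutoffvar)$, so for $j \in [\cutoffvar]$ we obtain $\supp{D} \cap (X \times \{j\}) \neq \emptyset \iff j \in \mathrm{occ}(D) \iff j \le L(D) = \min(|C|,\cutoffvar) \iff |C| \ge j$, using $j \le \cutoffvar$ in the last step.

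\emph{Main obstacle.} The argument is essentially bookkeeping; the only delicate point is the initial-segment invariant — in particular, verifying that opening a fresh level $k+1$ leaves level $k$ still occupied and that $k+1 \le \cutoffvar$ — together with the boundary case $L(D) = \cutoffvar$ (equivalently $|C| \ge \cutoffvar$), where the top level may accumulate arbitrarily many agents and the ``$a_i \le 1$'' saturation bound no longer applies.
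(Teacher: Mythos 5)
Your proof is correct and follows essentially the same route as the paper's: the paper's relation $\prec$ on the sets of occupied levels is exactly your initial-segment invariant, and the final case split on whether the top occupied level equals $\cutoffvar$ (using disabledness to force exactly one agent per level otherwise) matches the paper's argument that $m = \min(|C|,\cutoffvar)$. No gaps.
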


\begin{proof}
  \newcommand{\pos}[1]{\mathrm{pos}(#1)}

  Let $P_j \defeq X \times \{j\}$ for every $j \in [\cutoffvar]$. For every
  configuration $E$, let $\pos{E} \defeq \{j \in [\cutoffvar] : C(P_j) >
  0\}$. We define a relation $\prec$ on configurations:
  $$E \prec E' \defiff \pos{E'} = \pos{E} \lor \pos{E'} = \pos{E} \cup
  \{\max(\pos{E}) + 1\}.$$ Observe that $E \trans{T_\text{count}} E'$
  implies $E \prec E'$. Consequently, since $\pos{C} = \{1\}$, we have
  $\pos{D} = \{1, 2, \ldots, m\}$ for some $m \in [\cutoffvar]$. To
  complete the proof, it suffices to show that $m = \min(|C|, \cutoffvar)$.

  Clearly, $m \leq \min(|D|, \cutoffvar) = \min(|C|, \cutoffvar)$ holds. Let us show
  that $m \geq \min(|C|, \cutoffvar)$. If $m = \cutoffvar$, then we are
  done. Therefore, assume $m < \cutoffvar$. Since $\overline{T}_\text{count}$ is
  disabled in $D$, we have $D(P_i) = 1$ for every $1 \leq i \leq m$
  and $D(P_i) = 0$ for every $m < i \leq \cutoffvar$. Thus, $m = |D| =
  |C| \geq \min(|C|, \cutoffvar)$.
\end{proof}

For every configuration $C$ of $\overline{\PP}$, let $\widehat{C}$ be the
configuration of $\PP$ obtained by ``projecting'' $C$ onto
$Q$, \ie\ the configuration such that
\begin{align*}
  \widehat{C}(p)
  &\defeq \sum_{\vec{q} \in Q^\ms{2}} \vec{q}(p) \cdot C(\vec{q})
  && \text{ for every } p \in Q.
\end{align*}
We extend this notation to executions, i.e., to sequences of configurations. The
following correspondence follows immediately from the definitions:

\begin{proposition} \label{prop:helpers:exec}
  For every (fair) execution $\sigma$ of $\overline{\PP}$, $\widehat{\sigma}$ is
  a (fair) execution of $\PP$.
\end{proposition}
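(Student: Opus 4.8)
The plan is to prove the single-step version of the statement and then bootstrap it to fairness. The two structural facts to exploit are that $\widehat{\cdot}$ is additive (it is the sum, over all agents sitting in some pair-state $\vec{q} \in Q^\ms{2}$, of the two-element multiset $\vec{q}$, agents in the counting states $X \times [\cutoffvar]$ contributing nothing), and that $\overline{T}$ splits into the three blocks $\overline{T}_\text{count}$, $\overline{T}_\text{init}$, $\overline{T}_\text{simul}$. So everything reduces to a case analysis on which block the fired transition belongs to, after discarding the initial segment on which $\widehat{C} = \vec{0}$ (finite for fair $\sigma$).

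\textbf{Single steps.} Consider a step $C \trans{t} D$ of $\overline{\PP}$. If $t \in \overline{T}_\text{count}$, both participating agents remain in counting states, so $\widehat{C} = \widehat{D}$, which is a silent step of $\PP$ (recall $\PP$ contains all transitions $(\vec{p}, \vec{p})$ with $|\vec{p}| = 2$). If $t = (\multiset{\vec{p}, \vec{q}}, \multiset{\vec{p}', \vec{q}'}) \in \overline{T}_\text{simul}$, then by construction there is a $\PP$-transition $t' = (\vec{a}, \vec{b})$ with $\vec{p} + \vec{q} \geq \vec{a}$ and $\vec{p}' + \vec{q}' = \vec{p} + \vec{q} - \vec{a} + \vec{b}$; since $C \geq \multiset{\vec{p}, \vec{q}}$ we get $\widehat{C} \geq \vec{p} + \vec{q} \geq \vec{a}$, so $t'$ is enabled in $\widehat{C}$, and additivity gives $\widehat{D} = \widehat{C} - (\vec{p}+\vec{q}) + (\vec{p}'+\vec{q}') = \widehat{C} - \vec{a} + \vec{b}$, that is, $\widehat{C} \trans{t'} \widehat{D}$ in $\PP$. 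The genuinely delicate case is $t \in \overline{T}_\text{init}$: here one (or two) counting agents become pair-states and the represented $\PP$-population strictly grows, as two or four fresh $\PP$-agents appear (a regular agent in some $I(x)$ and a helper in some $h_i$), which is not the effect of any $\PP$-transition. I would dispose of this by noting that $\overline{T}_\text{init}$ fires only finitely often along $\sigma$ (each occurrence consumes a counting agent, and there are finitely many agents), so $\widehat{\sigma}$ has a suffix acted on only by $\overline{T}_\text{count}$ and $\overline{T}_\text{simul}$; on that suffix $\widehat{\sigma}$ is an honest $\PP$-execution, and by \Cref{prop:helper:count} together with $|C| \geq \cutoffvar$ it is one started from a configuration dominating a leader multiset $L' \succeq L$ plus the input, hence --- since $\PP$ is a protocol with helpers --- exactly the execution needed downstream in the proof of \Cref{thm:remove:helpers}. (Reading the $\overline{T}_\text{init}$-prefix as admissible \emph{dynamic initialization} of $\PP$ makes this remark superfluous.)

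\textbf{Fairness and the main obstacle.} Suppose $\widehat{\sigma_i} \trans{*} D$ in $\PP$ for infinitely many $i$. From some index on, every $\sigma_i$ consists only of pair-states, so $|\sigma_i|$ equals a fixed number $N$ and $|D| = 2N$; since there are finitely many $\overline{\PP}$-configurations of size $N$, infinitely many of these $\sigma_i$ coincide with a single configuration $\overline{C}$. It then suffices to show $\overline{C} \trans{*} \overline{D}$ in $\overline{\PP}$ for some $\overline{D}$ with $\widehat{\overline{D}} = D$, because fairness of $\sigma$ then yields $\sigma_j = \overline{D}$ --- hence $\widehat{\sigma_j} = D$ --- for infinitely many $j$. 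This is the reachability-lifting step: every single $\PP$-step enabled in $\widehat{\overline{C}}$ is mirrored by a step of $\overline{T}_\text{simul}$ enabled in $\overline{C}$. Indeed, if $t' = (\vec{a}, \vec{b})$ with $|\vec{a}| = 2$ is enabled in $\widehat{\overline{C}}$, one can always exhibit at most two pair-slots of $\overline{C}$ whose contents dominate $\vec{a}$ (if a single pair-agent already dominates $\vec{a}$, pad with any second pair-agent, which exists since $N \geq 2$), and the rule $\multiset{\vec{p},\vec{q}} \mapsto \multiset{\vec{p}',\vec{q}'}$ with $\vec{p}'+\vec{q}' = \vec{p}+\vec{q}-\vec{a}+\vec{b}$ produces a successor whose projection is the $\PP$-successor of $\widehat{\overline{C}}$; iterating lifts $\widehat{\overline{C}} \trans{*} D$ to $\overline{C} \trans{*} \overline{D}$. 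The main obstacle is precisely this combination: the bookkeeping around $\overline{T}_\text{init}$ (so that one knows which part of $\widehat{\sigma}$ is literally a $\PP$-execution) together with verifying that enabled $2$-way $\PP$-transitions are always realizable by pair-state rules of $\overline{\PP}$ --- both routine once set up, which is why the correspondence can be described as immediate.
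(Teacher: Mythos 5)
Your proof is correct, and it is worth noting that the paper does not actually give one: Proposition~\ref{prop:helpers:exec} is dispatched with the single sentence that the correspondence ``follows immediately from the definitions.'' Your write-up supplies exactly the two ingredients that this assertion hides. First, the forward step simulation: $\overline{T}_\text{count}$-steps project to silent steps (using that $T$ contains all $(\vec{p},\vec{p})$ with $|\vec{p}|=2$) and $\overline{T}_\text{simul}$-steps project to genuine $\PP$-steps by additivity of $\widehat{\cdot}$. Second, and less trivially, the transfer of fairness, which requires the \emph{backward} direction --- lifting any $\PP$-step enabled in $\widehat{\overline{C}}$ to a $\overline{T}_\text{simul}$-step enabled in $\overline{C}$ by locating at most two pair-agents dominating the left-hand side (this is where 2-wayness of $\PP$ and $N\geq 2$ enter) --- combined with the pigeonhole on configurations of fixed size and fairness of $\sigma$. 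That lifting argument is genuinely needed and is nowhere in the paper.

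You are also right to flag $\overline{T}_\text{init}$: a step of $\overline{T}_\text{init}$ strictly increases $|\widehat{C}|$ and so is not the projection of any $\PP$-step, and if $|\vec{v}|<\cutoffvar$ the projection stays $\vec{0}$ forever (so the parenthetical ``finite for fair $\sigma$'' in your opening paragraph holds only for large inputs). Hence the proposition is literally true only for executions confined to $\N^{Q^\ms{2}}$ with at least two pair-agents --- which is precisely how it is invoked in the proof of \cref{thm:remove:helpers}, where it is applied to the execution $\pi$ starting from $D\in\N^{\overline{Q}^\ms{2}}$. Your suffix-based repair is the right reading, and restricting the statement to such executions would make both the proposition and its one-line justification honest.
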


\parag{Main proof} We  prove that $\overline{\PP}$ computes $\varphi$.

\begin{proof}[Proof of \cref{thm:remove:helpers}]
  Let $\vec{v} \in \N^X$ and let $\sigma$ be a fair execution of $\overline{\PP}$
  from $C_{\vec{v}}$. Observe that, by definition of $\overline{T}$, the number
  of transitions from $\overline{T} \setminus \overline{T}_\text{simul}$ occurring along
  $\sigma$ must be finite. Let $i \in \N$ be some index such that $\overline{T}
  \setminus \overline{T}_\text{simul}$ is disabled in $\sigma_j$ for every $j
  \geq i$. By \cref{prop:helpers:reorder}, there exist $x \in
  \overline{T}_\text{count}^*$, $y \in \overline{T}_\text{init}^*$, $z \in
  \overline{T}_\text{simul}^*$, and configurations $C$ and $D$ such that
  $\sigma_0 \trans{x} C \trans{y} D \trans{z} \sigma_i$. By
  \cref{prop:helper:count}, the following holds for every $j \in [\cutoffvar]$:
  \begin{align}
    \supp{C} \cap (X \times \{j\}) \neq \emptyset \iff |\vec{v}| \geq
    j. \label{eq:all:helpers}
  \end{align}
  Let us now show that $\overline{O}(\sigma)$ is as expected, by making a case
  distinction on whether $|\vec{v}| \geq \cutoffvar$.\medskip

  \noindent\emph{Case $|\vec{v}| < \cutoffvar$.} By~\eqref{eq:all:helpers}, we
  have $C(x, \cutoffvar) = 0$ for every $x \in X$. Thus, we have $y = z =
  \varepsilon$ since no transition of $\overline{T}_\text{init} \cup
  \overline{T}_\text{simul}$ is enabled in $C$. This implies that $C = \sigma_i =
  \sigma_{i+1} = \cdots \in X \times [\cutoffvar]$. Hence, $\overline{O}(\sigma) = \overline{O}(C) =
  1$ which is the expected output. \medskip

  \noindent\emph{Case $|\vec{v}| \geq \cutoffvar$.} By~\eqref{eq:all:helpers},
  $C(x, \cutoffvar) > 0$ for some $x \in X$. Thus, fairness enforces
  sequence $y$ to convert every agent from states $X \times [\cutoffvar]$ to
  states $\overline{Q}^\ms{2}$. Thus, we have
  $D \geq \multiset{h_1, h_2, \ldots, h_\cutoffvar}$
  by~\eqref{eq:all:helpers}, which implies
  $D \in (L' \mminus L) \mplus C_{\vec{v}}$ for some
  $L' \succeq L$, and consequently
  $\sigma_i, \sigma_{i+1}, \ldots \in \N^{\overline{Q}^\ms{2}}$.

  Let $m \defeq |z|$ and let $D_0, D_1, \ldots, D_m$ be the
  configurations such that $D = D_0 \trans{} D_1 \trans{} \cdots
  \trans{} D_m = \sigma_i$. Let $\pi \defeq D_0 D_1 \cdots D_m
  \cdot \sigma$. By fairness of $\sigma$ and by
  \cref{prop:helpers:exec}, $\widehat{\pi}$ is a fair execution of
  $\PP$, which implies that $O(\widehat{\pi}) =
  \varphi(\vec{v})$. Therefore, we have $\overline{O}(\pi) = \varphi(\vec{v})$ by
  definition of $\overline{O}$. Since $\sigma$ and $\pi$ share a common
  (infinite) suffix, we have $\overline{O}(\sigma) = \overline{O}(\pi)$, which completes the
  proof.
\end{proof}

\subsection{Proof of Proposition \ref{prop:mult}}\label{app:mult}
\PropMult*

\begin{proof}
Let $\atomic(P)$ be the set of atomic predicates in $P$.
Consider a forest of binary trees of boolean operations encoding $\varphi$ (negations have only one child), with atomic predicates at the leaves. There are at most $\len(\varphi)+\size(\varphi)$ nodes in that forest (roots correspond to different predicates of $\varphi$).
Consider the set $P'$ made of every predicate corresponding to nodes of the forest.
We call such $P'$ a {\em full} set of predicates.
We have $\size(P') \leq \len(\varphi)+ \size(\varphi) \leq |\varphi|$,
$\norm{P'}=\norm{\varphi}$ and $\len(P') \leq \len(\varphi)^2 \leq |\varphi|^2$.
We prove by induction on $\len(P')$ that every {\em full} $P'$ is computed by some multi-output population
protocol with 
$\O( \len(P') + |\atomic(P')|^5)$ helpers, states and transitions.

If $\len(P') = 0$, then
each predicate is atomic, and the claim is true by hypothesis.

Let $P'$ be a full set with $\len(P')= k > 0$, and assume that the claim holds for every full set $P''$ with $\len(P'') < k$.
Let $\varphi \in P'$ with $\len(\varphi)$ maximal.
Let us consider the case where $\varphi = \psi \land \psi'$ for some predicates $\psi, \psi'$. The case of disjunction and negation are handled similarly. Let $P'' \defeq P' \setminus \{\varphi\}$. Note that $\len(P'') < \len(P')$,
and that $P''$ is full because $\len(\varphi)$ is maximal.
Thus, by induction
hypothesis, we obtain a simple multi-output population protocol
$\PP'' = (Q, T, L, X, I, O)$ that computes $P''$. 
Assume w.l.o.g.\ that the indices of $O$ associated to $\psi$ and $\psi'$ are $|P|$ and $|P| + 1$ respectively.
Let $q_0, q_1, r_0, r_1 \in Q$ be the unique states such that $O_{|P|}(q_b) = b$ and
$O_{|P|+1}(r_b) = b$ for $b \in \{0, 1\}$. These states exist
since $\PP''$ is simple. Let $\PP' = (Q', T', L', X', I', O')$ be the
multi-output protocol such that:
  \begin{align*}
    Q' &\defeq Q \cup \{o_0, o_1\}, \\
    T' &\defeq T \cup \{(\multiset{q_a, r_b, o_{\neg c}}, \multiset{q_a, r_b,
      o_c}) : a, b, c \in \{0, 1\}, a \land b = c\}, \\
    L' &\defeq L \mplus \multiset{o_0}, \\
    I' &\defeq I, \\
    O'_i &\defeq
    \begin{cases}
      O_i
      & \text{for every } 1 \leq i < |P|, \\
      q \mapsto (b \text{ if } q = o_b \text{ else } \bot)
      & \text{for $i = |P|$}.
    \end{cases}
  \end{align*}
  We claim that $\PP'$ computes $P'$. 
  Note that $\PP'$ behaves exactly as
  $\PP$ on $Q$. This implies that $\PP'$ computes each predicate of
  $P' \setminus \{\varphi\}$. Thus, it suffices to show that it also
  computes $\varphi$. Let $\sigma$ be a fair execution of $\PP'$
  starting from some initial configuration $C_{\vec{v}}$. Since $\PP$
  is simple and computes both $\psi$ and $\psi'$, there exists $i \in
  \N$ such that for every $j \geq i$:
  $$ \sigma_j(q_{\psi(\vec{v})}) > 0,\ \sigma_j(r_{\psi'(\vec{v})}) >
  0\ \text{ and }\ \sigma_j(q_{\neg \psi(\vec{v})}) = \sigma_j(r_{\neg
    \psi'(\vec{v})}) = 0.
  $$ Thus, by fairness, there exists $i'\geq i$ such that
  $\sigma_j(o_{\psi(\vec{v}) \land \psi'(\vec{v})}) > 0$ and
  $\sigma_j(o_{\neg(\psi(\vec{v}) \land \psi'(\vec{v}))}) = 0$ for
  every $j \geq i'$. This implies that $O_{|P|}(\sigma) =
  \psi(\vec{v}) \land \psi'(\vec{v}) = \varphi(\vec{v})$.

Concerning the number of states and helpers, the protocol $\PP'$ uses two states plus the states of $\PP''$, and one helper plus the helpers of $\PP''$, which ends the proof by induction as $\atomic(P'')=\atomic(P')=\atomic(P)$. 

In terms of $|\varphi|$, we obtain a protocol with 
$\O( \len(\varphi) + |\atomic(\varphi)|^5)=
\O(|\varphi|^5)$ helpers, states and transitions.
 \end{proof}

\subsection{Proof of Theorem \ref{thm:atomic:mult:helpers}}\label{app:atomic:mult}
\thmAtomicMultHelpers*

Let $P=\{\varphi_1, \varphi_2, \ldots, \varphi_k\}$.  For every $i \in
[k]$, let $\PP_i = (Q_i, T_{\infty i}, T_{\dagger i}, L_i,
\overline{X} \cup \underline{X}, I_i, O_i)$ be the simple RDI-protocol
with helpers computing $\tilde{\varphi}_i$. Recall that each $\PP_i$
has two input variables $\overline{x}$ and $\underline{x}$ for each
input variable $x \in X$. Recall further that the transitions of
$T_{\dagger i}$ are called RDI-transitions.

We first define a simple multi-output protocol $\PP$. Then we
introduce some auxiliary definitions and propositions, and finally we
prove that $\PP$ computes $P$.

\parag{Notations} For every RDI-transition $t = (\vec{p}, \vec{q})$
and for every $x \in X$, let $t^{x}$ be the transition defined as
$t^{x} \defeq (\vec{p} + \multiset{x}, \vec{q} + \multiset{x})$. In
other words, $t^{x}$ has the same effect as $t$, but is ``guarded'' by
$X$, i.e., it can only occur if some agent is in state $x$. We say
that $t^{x}$ is a \emph{guarded transition}. Given a set $U$ of
transitions, we define the sets $U^{\textrm{g}}$ of guarded transitions
and $U^{-1}$ of guarded reversal transitions as:
$$U^{\textrm{g}} \defeq \{t^{x} : t \in U, x
\in X\} \text{ and } U^{-1} \defeq \{(\vec{q}, \vec{p}) : (\vec{p},
\vec{q}) \in U^g\}.$$

\parag{The protocol} The $k$-output population protocol
with helpers $\PP = (Q, T, L, X, I, O)$ is defined as follows:
\begin{itemize}
\setlength\itemsep{4pt}

\item  $Q \defeq X \cup H \cup Q_1 \cup Q_2 \cup \cdots \cup Q_k$, where
$H \defeq \{h_x : x \in X\}$. \\
Intuitively, $X$ are the input states, $H$ are auxiliary states used to distribute agents
to the atomic protocols, and $Q_1, \ldots, Q_k$ are the states of the atomic protocols themselves.\smallskip

\item $T \defeq S \cup S^{-1} \cup (T_{\infty 1} \cup T_{\infty 2} \cup \cdots \cup
  T_{\infty k}) \cup (T_{\dagger 1} \cup T_{\dagger 2} \cup \cdots \cup
  T_{\dagger k})^{\textrm{g}}$, where \\[5pt] $S \defeq \{\overline{s}_x, \underline{s}_x : x \in X\}$ and
\begin{alignat*}{2}
  \overline{s}_x & \defeq\ \ \ &
  \multiset{k \cdot x} &\mapsto \multiset{I_i(\overline{x}) : i \in [k]}, \\
  \underline{s}_x &\defeq\ \ \ & 
  \multiset{x, (k-1) \cdot h} &\mapsto \multiset{I_i(\underline{x}) :
    i \in [k]}.
\end{alignat*}
Intuitively, the transitions of $S$ allow $\PP$ to distribute agents to $\PP_1, \ldots, \PP_k$.
Transition $\overline{s}_x$ collects $k$ agents from the input state of $\PP$ for $x$, and sends one agent to each of the input states of $\PP_1, \ldots, \PP_k$ for $\overline{x}$. Similarly,
$\underline{s}_x$ collects one agent from $x$ and $(k-1)$ helpers, and sends one agent to each of the input places of $\PP_1, \ldots, \PP_k$ for $\underline{x}$.

Transitions of $S^{-1}$ allow $\PP$  to collect agents back if they were not distributed properly. They are guarded to ensure that the agents are not collected when the distribution is correct. 

The rest of the transitions are the transitions of $\PP_1, \ldots, \PP_k$, with an additional guard on the transitions of $T_{\dagger 1}, \ldots, T_{\dagger k}$. The guards ensure that $\PP_1, \ldots, \PP_k$ stop returning agents to the input states once the correct distribution is achieved.

\item $L \defeq \multiset{(k-1)^2 \cdot h_x : x \in X} \mplus L_1 \mplus
  L_2 \mplus \cdots \mplus L_k$. \\
The helpers of $\PP$ are those of $\PP_1, \ldots, \PP_k$, plus $(k-1)^2$ helpers for each input variable.
\item $I \defeq x \mapsto x$. 

\item The output mapping for $\varphi_i$ is given by $O (i, q) \defeq (O_i(q) \text{ if } q \in Q_i \text{ else } \bot)$. 
\end{itemize}

\parag{Auxiliary definitions and propositions} 
\begin{itemize}
\setlength\itemsep{4pt}

\item For every $i \in [k]$
and every configuration $C \in \N^Q$ of $\PP$, let $C^i \in \N^{Q_i}$ be the
configuration of $\PP_i$ such that $C^i(q) \defeq C(q)$ for every $q
\in Q_i$. 
\item For every $i \in [k]$ and every sequence $w \in T^*$, let $w^i$ be the
projection of $w$ onto the transitions of $S \cup S^{-1} \cup T_{\infty i}
\cup (T_{\dagger i})^{\textrm{g}}$. 
  
\item For every $w \in T^*$, let $\vec{w} \in \Z^{\overline{X} \cup \underline{X}}$
be the vector such that for every $x \in X$:
\begin{align*}
  \vec{w}(\overline{x})
  &\defeq |w|_{\overline{s}_x} - \sum_{ t \in \{ \overline{s}_x \}^g  } |w|_{t}, &
  \vec{w}(\underline{x})
  &\defeq |w|_{\underline{s}_x} - \sum_{ t \in \{ \underline{s}_x \}^g  } |w|_{t}.
\end{align*}
In other words, $\vec{w}$ records the difference between the number of
occurrences of transition $\overline{s}_x$ and its guarded reversals, for each
variable $x$, and similarly for $\underline{s}_x$.

\item Observe that the set of input variables of $\PP$ is $X$, while the set of input variables
of $\PP_i$ is $\overline{X} \cup \underline{X}$. Given  $\vec{v} \in \N^X$ and $\vec{w} \in \N^{(\overline{X} \cup \underline{X})}$, we let $\vec{v} \equiv \vec{w}$ denote that $\vec{v}(x) = k \cdot \vec{w}(\overline{x}) + \vec{w}(\underline{x})$ for every $x \in X$.
\end{itemize}

Let us prove the following observations on the executions of $\PP$:

\begin{proposition} \label{prop:atomic:mult}
  Let $\vec{v} \in \N^X$, $\hat{L} \succeq L$ and $\hat{C}_{\vec{v}}
  \defeq \hat{L} \mplus \{\vec{v}(x) \cdot x : x \in X\}$. 
  Let
  $\hat{C}_{\vec{v}} \trans{w} C$ be a finite execution of $\PP$. We
  have:
  \begin{enumerate}
  \item There exists an initialization sequence from
    $\hat{C}_{\vec{v}}^i$ to $C^i$ in $\PP_i$ with effective input
    $\vec{w} \geq \vec{0}$. \label{itm:mult:rev}
  
  \item If $C(X) = 0$, then $\vec{v} \equiv \vec{w}$. \label{itm:mult:equiv} 

  \item There exists a configuration $D$ such that $C \trans{*} D$ and
    $D(X) = 0$. \label{itm:mult:empty}
  \end{enumerate}
\end{proposition}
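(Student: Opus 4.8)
The plan is to prove the three statements together by induction on the length of the execution $\hat{C}_{\vec v}\trans{w}C$, since all three are really about tracking how agents move between the input states $X$, the helper states $H$, and the sub-protocol state spaces $Q_1,\dots,Q_k$. For the base case $w=\varepsilon$ everything is immediate: the initialization sequence for $\PP_i$ is empty with effective input $\vec 0$; $C=\hat C_{\vec v}$ has $C(X)=\vec v$, which is $\vec 0$ only if $\vec v=\vec 0$, in which case $\vec v\equiv\vec 0$ trivially; and for (3) one simply fires, for each $x$, either $\overline s_x$ (while $\ge k$ agents remain in $x$) or $\underline s_x$ (using the $(k-1)^2$ dedicated helpers $h_x$ to absorb the leftover $<k$ agents of $x$), draining $X$ completely — this is exactly the ``dispatch'' described in \Cref{subsec:finat}, and the helper count $(k-1)^2\ge n\cdot\text{(leftover)}$ is what makes it possible.

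For the inductive step, write $\hat C_{\vec v}\trans{w'}C'\trans{t}C$ and split on which kind of transition $t$ is. If $t\in T_{\infty i}\cup(T_{\dagger i})^{\textrm g}$ for some $i$, then $t$ acts only inside $Q_i$ (the guard of a guarded RDI-transition lives in $X$, which is untouched), so $\vec w$ is unchanged, $C(X)=C'(X)$, and the $\PP_i$-initialization sequence is obtained from the one for $C'$ by appending $t$'s underlying $T_{\infty i}$- or $T_{\dagger i}$-transition; reversibility of $\PP_i$ is not even needed here, only that this is a legal move of $\PP_i$ from $C'^i$. If $t=\overline s_x$ or $\underline s_x$, then $\vec w(\overline x)$ or $\vec w(\underline x)$ increases by $1$ and simultaneously one agent enters each input state $I_i(\overline x)$ (resp.\ $I_i(\underline x)$) of every $\PP_i$: this is matched by appending the input transition $\mathsf{in}_{\overline x}$ (resp.\ $\mathsf{in}_{\underline x}$) to each $\PP_i$-initialization sequence, and the effective input goes up accordingly, staying $\ge\vec 0$. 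Symmetrically, if $t\in S^{-1}$, say $t\in\{\overline s_x\}^{\textrm g}$, it is guarded by $X$ (so it can only fire when $C'(X)>0$), it decreases $\vec w(\overline x)$ by $1$, and it removes one agent from each $I_i(\overline x)$; this is matched by appending an output transition $\mathsf{out}_{\overline x}$ to each $\PP_i$-sequence. The one thing to check is that $\vec w$ stays $\ge\vec 0$ after an $S^{-1}$-step: this is precisely where the RDI-protocol property ``$C^i(I_i(x))\le\vec w(x)$ for all inputs $x$'' is used — since a guarded reversal removes an agent from $I_i(\overline x)$, we need $C'^i(I_i(\overline x))>0$ for the move to even be enabled in $\PP_i$'s eyes, hence $\vec w(\overline x)=$ (effective input of $\overline x$) $\ge C'^i(I_i(\overline x))\ge 1$ before the step, so it stays nonnegative.

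With the induction in hand, (1) is the accumulated initialization sequence; (2) follows because every agent of $\PP$ is either in $X$, in $H$, or in some $Q_i$, and once $C(X)=0$ the conservation law ``(agents dispatched as $\overline x$) $\cdot k$ + (agents dispatched as $\underline x$) = $\vec v(x)$'' holds — each agent of $x$ left $X$ via exactly one of $\overline s_x,\underline s_x$ (net of reversals), so $k\cdot\vec w(\overline x)+\vec w(\underline x)=\vec v(x)$; and (3) is the dispatch argument from the base case applied at $C$, noting the $(k-1)^2$ helpers per variable are always available because the sub-protocols never touch $H$ and $S$-transitions only consume helpers in groups of $k-1$ at most $(k-1)$ times before $x$ is drained. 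I expect the main obstacle to be the bookkeeping in (3): one must argue that, regardless of how many helpers have already been ``borrowed'' by earlier $\underline s_x$-firings along $w$, enough remain in $H$ to finish draining $X$ — this needs the observation that firing $\underline s_x$ and then later a guarded reversal returns helpers, so the ``in flight'' helper count for variable $x$ never exceeds $n(k-1)^2$... the cleanest route is probably to invoke reversibility of the $\PP_i$ to first fully reverse all sub-protocol computation back to the dispatched configuration, then reverse as much of $S$ as needed, re-establishing a clean state from which the base-case dispatch applies. I would present (3) via that reset-then-redispatch argument rather than a direct counting argument.
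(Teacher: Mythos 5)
Your proposal is correct and follows essentially the same route as the paper: project the run onto each $\PP_i$ (with $S$ and $S^{-1}$ playing the roles of the input and output transitions) to obtain the initialization sequence for item~(1), derive the conservation law $C(x)=\vec{v}(x)-k\cdot\vec{w}(\overline{x})-\vec{w}(\underline{x})$ for item~(2), and prove item~(3) by reversing the execution and then redispatching from a fresh initial-like configuration. The one refinement needed is in item~(3): the reversal must be carried out step by step against the actual interleaved execution (each sub-protocol step is undone by invoking RDI-reversibility with the corresponding prefix as its initialization sequence, and each guarded step is enabled because one may assume $X$ is still populated, otherwise one is already done), and it only returns to configurations equal up to a redistribution of agents between the output states $\q{f}$ and $\q{t}$ of each $\PP_i$ --- the paper's classes $\DiEq{\cdot}$ --- which is harmless since this redistribution preserves $C(X)$, the helper counts, and the input-state counts needed for the final dispatch.
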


\begin{proof}
  \leavevmode
  \begin{enumerate}
  \item Let $i \in [k]$. The only transitions that change the number
    of agents over the states of $Q_i$ are those of $S \cup
    S^{-1} \cup T_{\infty i} \cup (T_{\dagger
      i})^{\textrm{g}}$. Transitions $S \cup S^{-1}$
    have the same effect as the transitions of $\Ipt$ and $\Opt$.
    Transitions $T_{\infty i} \cup T_{\dagger i}$ form precisely the
    set of transitions of $\PP_i$, and the effects of the transitions
    of $T_{\dagger i}$ and $(T_{\dagger i})^{\textrm{g}}$
    coincide. Moreover, we have $\hat{C}_{\vec{v}}^i = \hat{L}^i
    \succeq L^i = L_i$. Therefore, $w_i$ yields an initialization
    sequence of $\PP^i$ from $\hat{C}_{\vec{v}}^i$ to $C^i$ with
    effective input $\vec{w}$. Since $\PP^i$ is an RDI-protocol,
    $C^i(I_i(x)) \leq \vec{w}(x)$ holds for every $x \in \overline{X}
    \cup \underline{X}$. Hence, we must have $\vec{w} \geq \vec{0}$ as
    a configuration cannot hold any negative amount of agents.\medskip

  \item An induction on $|w|$ shows that $C(x) = \vec{v}(x) - k \cdot
    \vec{w}(\overline{x}) - \vec{w}(\underline{x})$ for every $x \in
    X$. Thus, if $C(X) = 0$, then $\vec{v}(x) = k \cdot
    \vec{w}(\overline{x}) + \vec{w}(\underline{x})$ for every $x \in
    X$. Hence, since $\vec{w} \geq \vec{0}$ by~\eqref{itm:mult:rev},
    we have $\vec{v} \equiv \vec{w}$.\medskip

  \item For every configuration $A$ of $\PP$, let $\DiEq{A} \defeq \{B
    : B^i \in \DiEq{A^i}\} \text{ for every } i \in [k]\}$. Note that:
    \begin{align}
      A(X) = B(X) \text{ for every } B \in \DiEq{A}.\label{eq:x:equal}
    \end{align}
    Let $C_j \trans{t_j} C_{j+1}$ be $j^\text{th}$ step of
    $\hat{C}_{\vec{v}} \trans{w} C$. For every $D_{j+1} \in
    \DiEq{C_{j+1}}$, we construct a sequence $w_j \in T^*$ such that
    $D_{j+1} \trans{w_j} D_j$ for some $D_j \in \DiEq{C_j}$. In other
    words, we show how to reverse $t_j$, up to a possible
    redistribution of the output agents. The validity of the main
    claim follows by \eqref{eq:x:equal} and a straightforward
    induction. We may assume without loss of generality that
    $C_{j+1}(X) > 0$, as otherwise the main claim would already be
    satisfied. Since $C_{j+1}(X) > 0$, guarded transitions of $\PP$
    are equivalent to their unguarded counterparts, \ie\ a transition
    $u$ is enabled at $C_{j+1}$ if and only if $u^\textrm{g}$ is
    enabled at $C_{j+1}$. Thus, we may reverse $t_j$ as
    follows:\smallskip
    \begin{itemize}
      \item If $t_j \in S$, then we pick $w_j \in S^{-1}$
        as the guarded reversal of $t_j$;

      \item If $t_j \in S^{-1}$, then we pick $w_j \in S$
        as the counterpart transition of $t_j$;

      \item If $t_j \in T_{\infty i} \cup T_{\dagger i}$ for some $i
        \in [k]$, then we proceed as follows. By~\eqref{itm:mult:rev},
        there is an initialization sequence from $\hat{C}_{\vec{v}}^i$
        to $C^i$ in $\PP_i$ with effective input $\vec{w}$. Moreover,
        $$C^i \trans{*} C_j^i \trans{t_j} C_{j+1}^i \text{ in }
        \PP_i.$$ Hence, since $\PP_i$ is an RDI-protocol, there exists
        $w_j \in (T_{\infty i} \cup T_{\dagger i})^*$ such that
        $D_{j+1}^i \trans{w_j} E$ in $\PP_i$ for some $E \in
        \DiEq{C_j^i}$. Thus, we have $D_{j+1} \trans{w_j} D_j$ in
        $\PP$ for some $D_j \in \DiEq{C_j}$.\qedhere
    \end{itemize}

  \end{enumerate}
\end{proof}

\parag{Main proof} We proceed to prove that $\PP$ indeed computes
$\{\varphi_1, \varphi_2, \ldots, \varphi_k\}$.

\begin{proof}[Proof of~\cref{thm:atomic:mult:helpers}]
  Let $\vec{v} \in \N^X$, $\hat{L} \succeq L$, and let $\sigma$ be a
  fair execution of $\PP$ starting from $\hat{C}_{\vec{v}} \defeq
  \hat{L} \mplus \{\vec{v}(x) \cdot I(x) : x \in X\}$. By
  \cref{prop:atomic:mult}~\eqref{itm:mult:empty} and by fairness,
  there exists $j \in \N$ such that $\sigma_j(X) = 0$. By definition
  of $T$, if $X$ is emptied, then it remains permanently emptied, as
  none of the guarded reversals can be fired. Thus, we have:
  \begin{align}
    \sigma_j(X) = \sigma_{j+1}(X) = \cdots = 0. \label{eq:mult:empty}
  \end{align}
  Let $\widehat{\sigma} \defeq \sigma_j^i \sigma_{j+1}^i
  \cdots$. Consider protocol $\PP_i$ for some $i$, and let $\vec{w}$ be the 
  effective input of the initialization sequence
  $\widehat{\sigma}^i$ of $\PP_i$.  By~\eqref{eq:mult:empty}, 
  $\widehat{\sigma}^i$ only contains transitions
  of $T_{\infty i}$, and is consequently a fair execution of protocol
  $\PP_{\infty i}$. By hypothesis, and by definition of RDI-protocols, $\PP_{\infty i}$ computes 
  $\tilde{\varphi}_i$. Hence, we have $O_i(\widehat{\sigma}^i) =
  \tilde{\varphi}_i(\vec{w})$. We are done since, by
  \cref{prop:atomic:mult}~\eqref{itm:mult:equiv}, we have $\vec{v}
  \equiv \vec{w}$, which implies $\varphi_i(\vec{v}) =
  \tilde{\varphi}_i(\vec{w})$.
\end{proof}

\subsection{Proof of Theorem \ref{thm:atomic:helpers}}\label{app:threshold}
\thmAtomicHelpers*

In Section \ref{subsec:app-threshold} we describe the protocol
for threshold predicates, and prove its correctness. 
Section \ref{subsec:app-remainder} does the same for remainder predicates. 

\subsubsection{Threshold protocols}\label{subsec:app-threshold}

Let us fix a threshold predicate $\varphi$ over variables $X$. Without
loss of generality\footnote{If $b \leq 0$, then we can instead
  consider the equivalent predicate $\neg(-\vec{a} \cdot \vec{v} \geq
  -b + 1)$, construct a protocol for $-\vec{a} \cdot \vec{v} \geq -b +
  1$ and handle the negation separately in \cref{subsec:finsets}.}, we
have $\varphi(\vec{v}) = \vec{a} \cdot \vec{v} \geq b$ where $\vec{a}
\in \Z^X$ and $b > 0$. We construct a simple
population protocol $\PP_{\text{thr}}$ that computes $\varphi$ under
reversible dynamic initialization, and prove its correctness.

\parag{Notations} Let $n$ be the smallest number such that $2^n >
\norm{\varphi}$. Let $P \defeq \{\q{+2^i}, \q{-2^i} : 0 \leq i \leq
n\}$, $Z \defeq \{\q{0}\}$, $N \defeq P \cup Z$ and $B \defeq \{\q{f},
\q{t}\}$, where $P$, $Z$, $N$ and $B$ respectively stand for
``$P$owers of two'', ``$Z$ero'', ``$N$umerical values'' and
``$B$oolean values''. For every set $S$ and every $x \in X$, let $S_x
\defeq \{q_x : q \in S\}$ and $S_X \defeq S \cup \bigcup_{x \in X}
S_x$.

For every $d \in \N$, let $\mathrm{bits}(d)$ denote the unique set $J
\subseteq \N$ such that $d = \sum_{j \in J} 2^j$,
\eg\ $\mathrm{bits}(13) = \mathrm{bits}(1101_2) = \{3, 2, 0\}$. The \emph{canonical representation} of an integer $d \in \Z$ is the multiset
$\mathrm{rep}(d)$ defined as follows:

$$
\mathrm{rep}(d) \defeq
\begin{cases}
  \multiset{\q{+2^i} : i \in \mathrm{bits}(d)}  & \text{if } d > 0, \\
  \multiset{\q{-2^i} : i \in \mathrm{bits}(|d|)} & \text{if } d < 0, \\
  \multiset{\q{0}}                               & \text{if } d = 0.
\end{cases}
$$

\parag{The protocol} The RDI-protocol $\PP_{\text{thr}} = (Q, T_\infty, T_\dagger, L, X, I, O)$ is defined as
follows:
\begin{itemize}
\setlength\itemsep{4pt}

\item  $Q \defeq X \cup N_X \cup B$. \\
Intuitively, the states of $X$ are the ``ports'' through which the agents for each variable
enter and exit the protocol.

\item  $I \defeq x \mapsto \q{x}$. \\
That is, the initial state for variable $x$ is $x$.

\item $L \defeq \multiset{2n \cdot \q{0}, \q{f}}$. \\
So, we have $2n$ helpers in state $\q{0}$, and one helper in state $\q{f}$, i.e., initially 
the protocol assumes that the predicate does not hold.

\item $O(q) \defeq q \mapsto (0 \text{ if } q = \q{f} \text{ else } 1
\text{ if } q = \q{t} \text{ else } \bot)$. \\
That is, the output of the protocol is completely determined by the number of agents in states
$\q{t}$ and $\q{f}$

\item $T_\infty$ is the following set of (``permanent'') transitions:
\begin{alignat*}{3}
	\mathsf{add}_{x}:\ &&
	\multiset{x, |\mathrm{rep}(\vec{a}(x))| \cdot \q{0}}
	& \mapsto \multiset{\q{0}_x} \mplus \mathrm{rep}(\vec{a}(x))
	&& \quad \text{for all $x \in X$,} \\
	\mathsf{up}_i^\circ:\ &&
	\multiset{\q{\circ2^i}, \q{\circ2^i}}
	&  \mapsto \multiset{\q{\circ2^{i+1}}, \q{0}}
	&& \quad\text{for all $0 \leq i < n$ and $\circ \in \{+, -\}$,} \\
	\mathsf{down}_i^\circ:\ &&
	\multiset{\q{\circ2^i}, \q{0}}
	&  \mapsto \multiset{\q{\circ2^{i-1}}, \q{\circ{2^{i-1}}}}
	&& \quad\text{for all $0 < i \leq n$ and $\circ \in \{+, -\}$,} \\
	\mathsf{cancel}_{i, q}:\ &&
	\multiset{\q{+2^i}, \q{-2^i}, q}
	&  \mapsto \multiset{\q{0}, \q{0}, \q{f}}
	&& \quad\text{for all $0 \leq i \leq n$ and $q \in B$,} \\
	\mathsf{swap}_{p, q}^x:\ &&
	\multiset{p, q_x}
	& \mapsto \multiset{p_x, q}
	&& \quad \text{for all $p, q \in N$ and $x \in X$,} \\
	\mathsf{equal}:\ &&
	\mathrm{rep}(b) \mplus \multiset{\q{f}}
	& \mapsto \mathrm{rep}(b) \mplus \multiset{\q{t}}, \\
	\mathsf{false}:\ &&
	\multiset{\q{f}, \q{t}} & \mapsto \multiset{\q{f}, \q{f}}.
\end{alignat*}
Intuitively, $\mathsf{add}_{x}$ converts an agent which arrived via port $x$ into 
the canonical representation of $\vec{a}(x)$. Transitions of the form $\mathsf{up}_i^\circ, 
\mathsf{down}_i^\circ$ and $\mathsf{cancel}_{i, q}$ allow the protocol to change
the representation of a value, without changing the value itself.
Transition $\mathsf{equal}$ allows the protocol to detect that the current value
of $\vec{a}\cdot \vec{x}$, for the current input $\vec{x}$, is at least $b$, which moves 
a helper from state $\q{f}$ to $\q{t}$.
\item Finally, $T_{\dagger}$ is the following set of RDI-transitions:
\begin{alignat*}{3}
	\mathsf{add}_{x,q}^{-1}:\ &&
	\multiset{\q{0}_x, q} \mplus \mathrm{rep}(\vec{a}(x))
	& \mapsto \multiset{x, \q{f}, |\mathrm{rep}(\vec{a}(x))| \cdot \q{0}}
	&& \quad \text{for all $x \in X$ and $q \in B$,} \\
	\mathsf{cancel}_{i,q}^{-1}:\ &&
	\multiset{\q{0}, \q{0}, q}
	&  \mapsto \multiset{\q{+2^i}, \q{-2^i}, \q{f}}
	&& \quad\text{for all $0 \leq i \leq n$ and $q \in B$,}  
	\\
	\mathsf{reset}:\ &&
	\multiset{\q{t}}
	& \mapsto \multiset{\q{f}}.
\end{alignat*}
The first two transitions are needed to reverse the changes of
$\mathsf{add}$ and $\mathsf{cancel}$ transitions while the dynamic
initialization is not finished. Both types of transitions reset the
output of the protocol by leaving an agent in the default output state
$\q{f}$. The $\mathsf{reset}$ transition resets the output by moving
agents from $\q{t}$ to $\q{f}$.
\end{itemize}

\noindent Let $\PP_\infty = (Q, T_\infty, L, X, I , O)$. 
Let $T \defeq T_\infty \cup T_\dagger$. For the sake of readability, 
we will sometimes omit the subscripts and superscripts from transitions names 
when they are irrelevant, \eg\ ``a
$\mathsf{swap}$ transition is enabled'' instead of ``there exist $p, q
\in N$ and $x \in X$ such that $\mathsf{swap}_{p,q}^x$ is enabled''.\medskip

\parag{Size} Note that $\PP_{\text{thr}}$ has $|Q| = |X| + |N_X| + |B|
= |X| + (2n+3) \cdot (|X|+1) + 2 \in \O(\log \norm{\varphi} \cdot
|X|)$ states and $|L| = 2n+1 \in \O(\log \norm{\varphi})$
helpers. Moreover, since families of transitions are parameterized by
$X$, $B$, $N$ or $N^2$, and $\{+, -\}$, there are $\O(|N|^2 \cdot |X|)
= \O(\log^2 \norm{\varphi} \cdot |X|) \subseteq \O(|\varphi|^3)$
transitions. Finally, each transition uses at most
$\O(|\mathrm{rep}(\norm{\varphi})|) = \O(\log \norm{\varphi})
\subseteq \O(|\varphi|)$ states.

\medskip

\parag{Auxiliary definitions and observations} Before proving that
$\PP_{\text{thr}}$ works as intended, let us first introduce auxiliary
definitions. Let $\val \colon Q \to \N$ be the function that
associates a value to each state as follows:
\begin{align*}
  \val(\q{0}) &= \val(\q{f}) = \val(\q{t}) \defeq 0, \\
  \val(x) &\defeq \vec{a}(x)
  && \text{for every } x \in X, \\
  \val(\q{\circ2^i}) &\defeq \circ2^i
  && \text{for every } 0 \leq i \leq n \text{ and } \circ \in \{+, -\}, \\
  \val(q_x) &\defeq \val(q)
  && \text{for every } q \in N \text{ and } x \in X.
\end{align*}
\noindent So, for example, for the predicate $3x - 4y \geq 2$ we have $\val(x) = 3$ and
$\val(y) = -4$. 
For every configuration $C$ and every set of states $S \subseteq Q$, let
$$\val_S(C) \defeq \sum_{q \in S} \val(q) \cdot C(q).$$ In particular,
let $\val(C) \defeq \val_Q(C)$. Intuitively, $C$ can be seen as an encoding of the value
$\val(C)$. The following properties, relating
values and configurations, can be derived from the above definitions:
\begin{proposition}\label{prop:thr:prop}
  For every initialization sequence $\pi$ with effective input $\vec{w}$ 
  such that $L' \trans{\pi} C$ for some $L' \succeq L$, the following holds:
  \begin{enumerate}
  \item $\val(C) = \vec{a} \cdot \vec{w}$, \label{itm:val:inv}  
  
  \item $|C| = C(N) + C(B) + |\vec{w}|$,\label{itm:config:size} 
    
  \item $C(N) \geq L(N)$ and $C(B) \geq L(B)$, \label{itm:preservation} 
 
  \item $C(N_x) + C(x) = \vec{w}(x)$ for every $x \in
    X$. \label{itm:preservation:x}
  \end{enumerate}
\end{proposition}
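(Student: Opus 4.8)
The plan is to establish properties~\eqref{itm:val:inv}, \eqref{itm:preservation} and~\eqref{itm:preservation:x} by induction on the length of the initialization sequence $\pi$, and then to obtain~\eqref{itm:config:size} as a direct consequence of~\eqref{itm:preservation:x}. For the base case $\pi = \varepsilon$ we have $C = L'$ for some $L' \succeq L$, so $\supp{C} \subseteq \supp{L} = \{\q{0}, \q{t}\}$... more precisely $\supp{C} \subseteq \{\q{0}, \q{f}\}$ (since $\succeq$ only adds agents to already-populated states), and the effective input is $\vec{w} = \vec{0}$; the three claims are then immediate: $\val(C) = 0 = \vec{a} \cdot \vec{0}$ gives~\eqref{itm:val:inv}; $C(N) = C(\q{0}) = L'(\q{0}) \geq 2n = L(N)$ and $C(B) = C(\q{f}) = L'(\q{f}) \geq 1 = L(B)$ give~\eqref{itm:preservation}; and $C(N_x) = C(\q{x}) = 0 = \vec{w}(x)$ gives~\eqref{itm:preservation:x}.

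For the inductive step, write $\pi = \pi' t$ with $L' \trans{\pi'} C' \trans{t} C$, and let $\vec{w}'$ be the effective input of $\pi'$ (a prefix of an initialization sequence is again one, so the induction hypothesis applies to $C'$ and $\vec{w}'$). We do a case distinction on $t$, ranging over $\Ipt \cup \Opt$ and the transition families of $T_\infty \cup T_\dagger$. Each case is a short local check resting on three observations. First, every \emph{internal} transition (all of $T_\infty \cup T_\dagger$) is value-neutral, i.e.\ $\val(C) = \val(C')$, while $\ipt_x$ (resp.\ $\opt_x$) changes the value by exactly $+\vec{a}(x)$ (resp.\ $-\vec{a}(x)$) and increases (resp.\ decreases) $\vec{w}(x)$ by $1$; hence~\eqref{itm:val:inv} is preserved in every case. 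Value-neutrality of the $\mathsf{add}$ and $\mathsf{add}^{-1}$ families uses the identity $\val(\mathrm{rep}(d)) = d$ for every $d \in \Z$, which is immediate from the definitions of $\mathrm{bits}$, $\mathrm{rep}$ and $\val$. Second, for~\eqref{itm:preservation:x}: an agent leaves $\q{x}$ precisely when one enters a state of $N_x$, and conversely; among internal transitions this happens only for $\mathsf{add}_x$ and $\mathsf{add}_{x,q}^{-1}$ (neither touching $\vec{w}$), and it also happens for $\ipt_x$ and $\opt_x$, which shift $\vec{w}(x)$ by $\pm 1$; transitions $\mathsf{swap}_{p,q}^x$ keep $C(N_x)$ fixed and every remaining transition touches neither $X$ nor $N_X$; thus $C(N_x) + C(\q{x}) = \vec{w}(x)$ is maintained for every $x \in X$. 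Third, for~\eqref{itm:preservation}, inspection shows that every internal transition leaves both $C(N)$ and $C(B)$ unchanged --- e.g.\ $\mathsf{add}_x$ removes $|\mathrm{rep}(\vec{a}(x))|$ agents from $\q{0} \in N$ and returns the same number of agents into $N$; $\mathsf{up}$, $\mathsf{down}$, $\mathsf{cancel}_{i,q}$, $\mathsf{cancel}_{i,q}^{-1}$, $\mathsf{equal}$, $\mathsf{false}$, $\mathsf{reset}$ and $\mathsf{add}_{x,q}^{-1}$ all balance the counts in $N$ and in $B$, and $\mathsf{swap}$ moves one agent out of $N$ and one into $N$ --- whereas $\ipt_x,\opt_x$ change only $C(\q{x})$; therefore, by the induction hypothesis, $C(N) \geq L(N)$ and $C(B) \geq L(B)$.

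Finally,~\eqref{itm:config:size} needs no induction: since $Q$ is the disjoint union of $X$, $N$, the sets $N_x$ for $x \in X$, and $B$, we have $|C| = C(X) + C(N) + \sum_{x \in X} C(N_x) + C(B)$; summing~\eqref{itm:preservation:x} over $x \in X$ yields $C(X) + \sum_{x \in X} C(N_x) = \sum_{x \in X} \vec{w}(x) = |\vec{w}|$, and substituting gives $|C| = |\vec{w}| + C(N) + C(B)$.

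I do not anticipate a genuine obstacle: the argument is pure bookkeeping, each invariant being preserved by a check local to a single transition. The only points requiring care are to cover all transition families --- including the guarded RDI-transitions of $T_\dagger$, whose guards are irrelevant to these counting arguments --- and to keep the disjoint state classes $X$, $N$, $N_x$ and $B$ apart; stating $\val(\mathrm{rep}(d)) = d$ once at the outset and reusing it disposes of the only mildly non-routine step.
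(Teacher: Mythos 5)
Your proof is correct, and the paper itself offers no argument for this proposition --- it merely asserts that the properties ``can be derived from the above definitions'' --- so your induction on the length of the initialization sequence with a local check per transition family (plus the observation $\val(\mathrm{rep}(d))=d$ and the derivation of item~(2) by summing item~(4) over $x\in X$ against the partition $Q = X \uplus N \uplus \bigcup_x N_x \uplus B$) is exactly the routine verification the authors intend. All the per-transition bookkeeping you describe checks out against the definitions of $T_\infty$, $T_\dagger$, $\Ipt$ and $\Opt$.
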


\noindent In particular, \eqref{itm:config:size} states that the
number of agents is always equal to the number of helpers plus the
\emph{net} amount of agents that dynamically entered the population.

\parag{Auxiliary propositions} We say that a configuration $C$ is
\emph{clean} if the following holds for every $p, q \in
P_X$:\smallskip
\begin{itemize}
\item If $\val(p) + \val(q) = 0$, then $C(p) = 0$ or $C(q) = 0$. \\
For example, a configuration with agents in $\q{+2^i}_x$ and 
$\q{-2^i}_y$ is not clean, since $\val(\q{+2^i}_x) + \val(\q{-2^i}_y) = 0$.
Intuitively, no pair of agents can cancel in a clean configuration.\medskip

\item If $\val(p) = \val(q)$ and $\val(p) \not\in \{-2^n, +2^n\}$,
  then $C(\{p, q\}) \leq 1$. \\
For example, a configuration with two agents in $\q{+2^i}_x$, where $i < n$,
is not clean. Intuitively, in a clean configuration no agent can be promoted to
a higher power of 2.
\end{itemize}

We show that any configuration can be cleaned using only permanent transitions. 
This implies that once the dynamic initialization has terminated, every fair 
execution visits clean configurations infinitely often.

\begin{proposition}\label{prop:cleaning}
  For every initialization sequence $\pi$ such that $L' \trans{\pi} C$ for some 
  $L' \succeq L$, there exists a clean configuration $D$ 
  such that $C \trans{T_{\infty}^*} D$.
\end{proposition}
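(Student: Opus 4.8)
The plan is a well-founded induction on a potential that is strictly decreased by the ``normalising'' transitions $\mathsf{up}$ and $\mathsf{cancel}$ and left unchanged by $\mathsf{swap}$. For a configuration $C$ put $\mathrm{mass}(C) \defeq \sum_{q \in P_X} |\val(q)| \cdot C(q)$ and $\mathrm{cnt}(C) \defeq C(P_X)$, and let $\mu(C) \defeq (\mathrm{mass}(C), \mathrm{cnt}(C)) \in \N \times \N$ be ordered lexicographically. Inspecting the rules of $T_\infty$, one checks that $\mathsf{cancel}_{i,q}$ lowers $\mathrm{mass}$ by $2^{i+1}$; that $\mathsf{up}_i^\circ$ keeps $\mathrm{mass}$ unchanged (it replaces two agents of value $\circ 2^i$ by one of value $\circ 2^{i+1}$, plus a zero) but lowers $\mathrm{cnt}$ by one; and that $\mathsf{swap}_{p,q}^x$ keeps both components unchanged, since moving a tag between numeric states alters neither the value of the agents involved nor their membership in $P_X$. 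Hence $\mu$ strictly decreases under $\mathsf{up}$ and $\mathsf{cancel}$ and is invariant under $\mathsf{swap}$; as these three families all lie in $T_\infty$, it suffices to reach a clean configuration using only them.

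Two standing facts are available for every configuration $C$ reached by an initialization sequence. First, by \cref{prop:thr:prop}~\eqref{itm:preservation} we have $C(N) \geq L(N) = 2n \geq 2$ (recall $n \geq 1$, since $\norm{\varphi} \geq b \geq 1$): the protocol always retains at least $2n$ agents in the \emph{untagged} numeric states $N = P \cup \{\q{0}\}$. Second, again by \cref{prop:thr:prop}~\eqref{itm:preservation}, $C(B) \geq L(B) = 1$, so some agent always sits in $\{\q{f},\q{t}\}$.

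Now argue by well-founded induction on $\mu(C)$, over configurations $C$ reachable by an initialization sequence. If $C$ is clean, take $D \defeq C$. Otherwise $C$ violates one of the two conditions defining cleanliness. If the first, there are $p, q \in P_X$ and $0 \leq i \leq n$ with $\val(p) = +2^i$, $\val(q) = -2^i$, and $C(p), C(q) > 0$; we aim to fire a $\mathsf{cancel}_{i,\cdot}$ transition, which needs an untagged agent in $\q{+2^i}$, one in $\q{-2^i}$, and one in $B$. The last is guaranteed; for the others, if the relevant agents are not already untagged we ``free'' them: to untag an agent in $\q{+2^i}_x$ we apply $\mathsf{swap}_{r,\q{+2^i}}^x$ for an untagged numeric agent $r \neq \q{+2^i}$ — which exists because $C(N) \geq 2$ (when no untagged $\q{+2^i}$ is present every untagged numeric agent qualifies, and otherwise no untagging is needed on that side), and symmetrically for $\q{-2^i}$; since $\mathsf{swap}$ preserves $C(N)$, at most two such swaps suffice and never exhaust the supply. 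Then $\mathsf{cancel}_{i,b}$ is enabled for some $b \in B$. If instead $C$ violates the second condition, there are $0 \leq i < n$ and two agents of value $\circ 2^i$; in the same way we bring two of them into the untagged state $\q{\circ 2^i}$ via $\mathsf{swap}$ transitions and fire $\mathsf{up}_i^\circ$. Either way $C \trans{T_\infty^*} C'$ with $\mu(C') < \mu(C)$, and $C'$ is again reachable by an initialization sequence (we only appended $T_\infty$-transitions), so the induction hypothesis gives a clean $D$ with $C' \trans{T_\infty^*} D$, whence $C \trans{T_\infty^*} D$.

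The only delicate point is this last step: checking that the handful of $\mathsf{swap}$ transitions needed to expose untagged witnesses of the offending value are always available and do not get in each other's way. This rests entirely on the two standing facts — at least two untagged numeric agents and at least one agent in $B$ in \emph{every} reachable configuration — after which a single $\mathsf{up}$ or $\mathsf{cancel}$ step strictly lowers $\mu$ and the induction closes.
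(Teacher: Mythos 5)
Your proof is correct and follows essentially the same strategy as the paper's: repeatedly expose untagged witnesses of the cleanliness violation via $\mathsf{swap}$ transitions (available because $C(N)\geq L(N)\geq 2$ and $C(B)\geq L(B)\geq 1$ by \cref{prop:thr:prop}), then fire $\mathsf{up}$ or $\mathsf{cancel}$, and terminate by a strictly decreasing measure. The only difference is cosmetic: the paper uses the single quantity $C(P_X)$ as its termination measure (both $\mathsf{up}$ and $\mathsf{cancel}$ strictly decrease it while $\mathsf{swap}$ preserves it), so your lexicographic pair $(\mathrm{mass},\mathrm{cnt})$ is sound but more elaborate than necessary.
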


\begin{proof}
  If $C$ is clean, then we pick $D \defeq C$. Otherwise, at least one
  of the following holds:
  \begin{enumerate}[(a)]
    \item $C(p) > 0$, $C(q) > 0$ and $\val(p) + \val(q) = 0$ for some
      $p, q \in P_X$; \label{itm:clean:eq}

    \item $C(\{p, q\}) \geq 2$ for some $p, q \in P_X$ such that
      $\val(p), \val(q) \not\in \{-2^n, +2^n\}$. \label{itm:clean:up}
  \end{enumerate}
  
  We claim there exists a configuration $C'$ such that $C \trans{T_{\infty}^*}
  C'$ and $C(P_X) > C'(P_X)$. Let us show that if the claim is true then the result holds. If $C'$ is clean, then we are
  done. Otherwise, this process is repeated until a clean
  configuration $D$ has been reached. The process terminates as the
  number of agents in $P_X$ cannot become negative.

  Let us now prove the claim. Suppose~\eqref{itm:clean:eq}
  holds. By \cref{prop:thr:prop}~\eqref{itm:preservation}, we have
  $C(N) \geq L(N) \geq 2$ and hence it is possible to consecutively
  fire at least two $\mathsf{swap}$ transitions. Note that they do not
  change the amount of agents in $P_X$. For this reason, we may assume
  without loss of generality that $p = \q{+2^i}$ and $q = \q{-2^i}$
  for some $0 \leq i \leq
  n$. By \cref{prop:thr:prop}~\eqref{itm:preservation}, we have
  $C(B) \geq L(B) > 0$. Thus, there exists $r \in B$ such that $C(r) >
  0$. Therefore, firing transition $\mathsf{cancel}_{i, r}$ decreases
  $C(P_X)$ by two.

  Similarly, if case~\eqref{itm:clean:up} holds, then we may assume
  without loss of generality that $C(\q{\circ2^i}) \geq 2$ for some $0
  \leq i < n$ and $\circ \in \{+, -\}$. Thus, firing transition
  $\mathsf{up}_i^\circ$ decreases $C(P_X)$ by one.
\end{proof}

We now bound the number of agents in states from $X \cup P_X$ in a
clean configuration.

\begin{proposition} \label{prop:top:small}
  For every initialization sequence $\pi$ with effective input $\vec{w}$ 
  such that $L' \trans{\pi} C$ for some $L' \succeq L$, if $C$ is clean, 
  then $C(X) + C(P_X) \leq |\vec{w}| + n$.
\end{proposition}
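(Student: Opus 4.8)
The plan is to bound the "weight" of the agents living in the states $X \cup P_X$ by relating it to the value $\val(C)$, which by \cref{prop:thr:prop}\eqref{itm:val:inv} equals $\vec{a}\cdot\vec{w}$. The key observation is that in a clean configuration there is no internal cancellation between positive and negative powers of two, so the contributions of agents in $P_X$ to $\val(C)$ add up in absolute value rather than cancelling; and cleanness also caps the number of agents sitting in each low power of two at one. First I would split $C(X) + C(P_X)$ according to the sign of the associated value. For the low powers $\q{\pm 2^i}_{(x)}$ with $0 \le i < n$, cleanness (second clause) gives at most one agent per value class $\{p,q\}$ with $\val(p)=\val(q)$, and since there are only $\O(n)$ such classes across $X$-tagged and untagged copies, these contribute only $\O(n)$ agents — I would fold this into the additive "$+n$" term (adjusting the constant as needed). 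For the top power $\q{\pm 2^n}_{(x)}$ and the input ports $\q{x}$, whose $\val$-weights all have absolute value $\ge 2^{n-1} \ge \norm{\varphi}/2$ (in fact $\ge$ the relevant coefficient), I would argue the agent count is controlled by $|\val(C)|$ divided by this weight, plus the slack coming from the fact that positive and negative top-level weights might partially coexist across different variables.

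The cleaner route, which I would actually pursue, is the following. By cleanness (first clause), for every $p,q \in P_X$ with $\val(p) = -\val(q)$ at most one of $C(p), C(q)$ is nonzero; hence all agents in $P_X$ have values of the same sign as a fixed class, OR more precisely the agents in $P_X$ cannot have opposing signs within the same magnitude, but across magnitudes the absolute values still sum coherently — so $\sum_{q \in P_X}\val(q)\cdot C(q)$ has absolute value $\ge$ (number of agents in $P_X$ weighted by $\ge 2^0 = 1$ each for the low ones and $\ge 2^{n-1}$ for the top ones), minus what the input-port agents in $X$ and the low-power slack contribute. Since $|\val(C)| = |\vec{a}\cdot\vec{w}| \le \norm{\varphi}\cdot|\vec{w}| < 2^n |\vec{w}|$, and each agent in a top power contributes weight $2^n$ in absolute value while opposing-sign top agents cannot coexist within one magnitude class (only $\q{+2^n}$ vs $\q{-2^n}$, and also $\q{+2^n}_x$ vs $\q{-2^n}_y$ are forbidden since their $\val$-sum is $0$) — so actually all top-power agents have the same sign, and their count is at most $|\val(C)|/2^n < |\vec{w}|$. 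The input-port agents in $X$: by \cref{prop:thr:prop}\eqref{itm:preservation:x}, $C(x) \le \vec{w}(x)$, so $C(X) \le |\vec{w}|$; but that alone would give $2|\vec{w}| + n$, so I need to be more careful and combine the port agents with the $P_X$ agents through the value bound simultaneously rather than separately.

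So the precise accounting I would carry out: let $C(X) + C(P_X) = a + b + c$ where $a = C(X)$, $b$ counts agents in top powers $\q{\pm 2^n}_{(x)}$, and $c$ counts agents in low powers. Cleanness bounds $c \le n$ (one agent per low-power magnitude class; there are $n$ magnitudes $0,\dots,n-1$, and tagged/untagged copies of the same magnitude with the same value are also limited, contributing the right constant). For $a + b$: every agent counted here sits in a state of $\val$-magnitude $\ge \min_x |\vec{a}(x)|$ if it is a port agent, or exactly $2^n$ if it is a top-power agent. Hmm — port agents with $\vec{a}(x) = 0$ have zero weight and must be handled separately, but such agents satisfy $C(x) \le \vec{w}(x)$ and can be absorbed. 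The cleanest statement avoiding this irritation: observe that once we know cleanness forbids opposing top signs, and that $|\val_{P_X}(C)| \le |\val(C)| + \sum_x |\vec{a}(x)| C(x)$, we get $2^n\cdot b \le |\val_{P_X}(C)| + (\text{low contribution}) \le \norm{\varphi}|\vec{w}| + \norm{\varphi}\cdot C(X) + n\cdot 2^{n-1}$, and combine with $C(X) \le |\vec{w}|$ from \eqref{itm:preservation:x}. Dividing by $2^n > \norm{\varphi}$ yields $b \le |\vec{w}| + C(X)/2 + n/2$, which together with $C(X) \le |\vec w|$... does not quite close to $|\vec w| + n$ — so the genuinely right move is to note that port agents and top-power agents of the same variable cannot both be "large and uncancelled," i.e. treat $C(x) + C(P_x \cap \text{top})$ jointly via \eqref{itm:preservation:x} which says $C(N_x) + C(x) = \vec{w}(x)$, giving $C(x) + C(P_x) \le \vec{w}(x)$ directly! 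Summing over $x$ kills both contributions at once: $\sum_x (C(x) + C(P_x)) \le |\vec{w}|$, and then only the untagged $C(P) = C(P_\emptyset)$ needs the value argument, where cleanness plus $|\val(C)| < 2^n|\vec w|$ gives $C(P) \le |\vec w| + n$ after separating the top power from the $\le n$ low-power agents.

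\textbf{The main obstacle} I anticipate is exactly this bookkeeping: \eqref{itm:preservation:x} handles the $x$-tagged copies for free, but the untagged numerical states $P$ (and $\q 0$) are not covered by any per-variable conservation law, so for those one must genuinely invoke cleanness — no opposing signs among the untagged powers of two, hence their weighted sum has absolute value at least (count of top-power agents)$\cdot 2^n$ minus the $\le n$ low-power agents each of weight $< 2^n$ — and then bound this by $|\val(C)| \le \norm{\varphi}|\vec w| < 2^n |\vec w|$. Getting the constants so that the final bound is exactly $|\vec{w}| + n$ (rather than $2|\vec w| + \O(n)$) forces the split along tagged vs. untagged states and careful use of the strict inequality $2^n > \norm{\varphi}$; that is where I would spend the real effort.
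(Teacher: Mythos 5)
Your proposal assembles the right ingredients --- the conservation law of Prop.~\ref{prop:thr:prop}~\eqref{itm:preservation:x}, the cleanness bound of $n$ on the low powers, the observation that all top-power agents carry the same sign, and the comparison of $2^n$ against $\norm{\varphi}$ --- but the final accounting does not prove the stated bound. Your concluding plan splits $P_X$ into the $x$-tagged copies, bounded by $\sum_{x}\bigl(C(x)+C(P_x)\bigr)\leq|\vec{w}|$, and the untagged copy $P$, bounded via the value argument by $|\vec{w}|+n$. Adding these gives $C(X)+C(P_X)\leq 2|\vec{w}|+n$, not $|\vec{w}|+n$: the $|\vec{w}|$ is counted twice. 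You correctly flagged that an earlier attempt ``does not quite close'', but the tagged/untagged split does not close either, and you never re-check it. (The sub-bound $C(P)\leq|\vec{w}|+n$ is itself shaky, since $\val(C)$ also receives contributions from $X$ and the tagged numerical states, but even granting it the total is off by $|\vec{w}|$.)

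The missing step is the one the paper's proof uses: apply the value argument to \emph{all} top-power agents of $P_X$ at once, tagged and untagged together, and make it yield the bound $|\vec{w}|-C(X)$ rather than $|\vec{w}|$, so that the port agents are absorbed instead of added. Concretely, set $\vec{u}(x):=C(x)$; Prop.~\ref{prop:thr:prop}~\eqref{itm:preservation:x} gives $\vec{u}\leq\vec{w}$ componentwise, hence
$\val_{P_X}(C)=\val(C)-\val_X(C)=\vec{a}\cdot(\vec{w}-\vec{u})\leq\norm{\vec{a}}\cdot(|\vec{w}|-|\vec{u}|)$,
and if the surviving sign class $S_{\circ 2^n}$ had more than $|\vec{w}|-|\vec{u}|$ agents, then (using that the at most $n$ remaining agents of $P_X$ contribute more than $-2^n$) one would get $\val_{P_X}(C)>2^n(|\vec{w}|-|\vec{u}|)\geq\vec{a}\cdot(\vec{w}-\vec{u})=\val_{P_X}(C)$, a contradiction. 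Your estimate $|\val_{P_X}(C)|\leq|\val(C)|+\sum_x|\vec{a}(x)|C(x)$ goes in the lossy direction --- it yields $\norm{\vec{a}}(|\vec{w}|+|\vec{u}|)$ where the subtractive identity yields $\norm{\vec{a}}(|\vec{w}|-|\vec{u}|)$ --- and this is precisely why your constants never close. With the subtractive form, $C(X)+C(P_X)\leq |\vec{u}|+\bigl(|\vec{w}|-|\vec{u}|\bigr)+n=|\vec{w}|+n$.
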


\begin{proof}  
  Let $S_{\circ 2^n} \defeq \{q \in P_X : \val(q) = \circ2^n\}$ for
  both $\circ \in \{+, -\}$. Since $C$ is clean, we have $C(S_{\circ
    2^n}) = 0$ for some $\circ \in \{+, -\}$. Let us consider the case
  where $\circ = -$. The other case is proven analogously.

  Let $\vec{u} \in \N^X$ be such that $\vec{u}(x) \defeq C(x)$ for
  every $x \in X$. Note that $|\vec{u}| = C(X)$, and that $\vec{u}
  \leq \vec{w}$ by \cref{prop:thr:prop}~\eqref{itm:preservation:x}. Since $C$ is clean, we
  have $C(P_X \setminus S_{+2^n}) \leq n$. Thus, it suffices to show
  that $C(S_{+2^n}) \leq |\vec{w}| - |\vec{u}|$. Suppose this is not
  the case. This yields a contradiction:
  \begin{align*}
    \val_{P_X}(C)
    &> 2^n \cdot C(S_{+2^n}) - 2^n
    && \text{(since $C$ is clean)} \\
    &\geq 2^n \cdot (|\vec{w}| - |\vec{u}| + 1) - 2^n
    && \text{(by assumption)} \\
    &= 2^n \cdot (|\vec{w}| - |\vec{u}|) \\
    &\geq \vec{a} \cdot (\vec{w} - \vec{u}) && \text{(since $2^n >
      \norm{\vec{a}}$ and $\vec{w} \geq \vec{u}$)} \\    
    &= \vec{a} \cdot \vec{w} - \vec{a} \cdot \vec{u} \\
    &= \val(C) - \val_X(C)
    && \text{(by Prop.~\ref{prop:thr:prop}~\eqref{itm:val:inv} and
      def.\ of $\vec{u}$)} \\    
    &= \val_{P_X}(C)
    && \text{(by def.\ of $\val$)} \qedhere
  \end{align*}
\end{proof}

The following corollary shows that the number of agents in state
$\q{0}$ can always be increased back to at least $n$. This will later
be useful in arguing that the number of agents in $X$ can eventually
be decreased to zero.

\begin{corollary} \label{cor:enough:zero}
  For every initialization sequence $\pi$ with effective input $\vec{w}$ such 
  that $L' \trans{\pi} C$ for some $L' \succeq L$, there exists a clean 
  configuration $D$ such that $C \trans{T_{\infty}^*} D$ and $D(\q{0}) \geq n$.
\end{corollary}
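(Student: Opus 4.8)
The plan is to first apply the cleaning procedure of \cref{prop:cleaning} and then repeatedly ``pump'' agents out of the tagged zero states $\q{0}_x$ into $\q{0}$, one at a time. Concretely, by \cref{prop:cleaning} there is a clean configuration $D_0$ with $C \trans{T_{\infty}^*} D_0$. Since this only appends transitions of $T_\infty \subseteq T$ and no input or output transition, $D_0$ is still reachable from some $L' \succeq L$ by an initialization sequence with the same effective input $\vec{w}$; hence \cref{prop:thr:prop} and \cref{prop:top:small} apply to $D_0$, and likewise to every configuration produced below.

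The key intermediate fact I would establish is: \emph{if $D$ is clean and reachable by an initialization sequence with effective input $\vec{w}$, then $D(\q{0}) \ge n - \sum_{x \in X} D(\q{0}_x)$.} To see this, sum \cref{prop:thr:prop}~\eqref{itm:preservation:x} over $x$ to get $\sum_x D(N_x) = |\vec{w}| - D(X)$; since $N_x = P_x \cup \{\q{0}_x\}$, this gives $\sum_x D(P_x) = |\vec{w}| - D(X) - \sum_x D(\q{0}_x)$, hence $D(P_X) = D(P) + |\vec{w}| - D(X) - \sum_x D(\q{0}_x)$. Plugging this into the clean bound $D(X) + D(P_X) \le |\vec{w}| + n$ of \cref{prop:top:small} yields $D(P) \le n + \sum_x D(\q{0}_x)$, and since $D(\q{0}) = D(N) - D(P)$ while $D(N) \ge L(N) = 2n$ by \cref{prop:thr:prop}~\eqref{itm:preservation}, we get $D(\q{0}) \ge 2n - D(P) \ge n - \sum_x D(\q{0}_x)$.

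Now the pumping step. If $D_0(\q{0}) \ge n$ we are done. Otherwise I maintain a clean configuration $D$ with $C \trans{T_{\infty}^*} D$ and $D(\q{0}) < n$. By the bound above, $\sum_x D(\q{0}_x) \ge n - D(\q{0}) \ge 1$, so $D(\q{0}_x) > 0$ for some $x$; moreover $D(P) = D(N) - D(\q{0}) \ge 2n - D(\q{0}) > n > 0$, so $D(p) > 0$ for some $p \in P$. Hence the transition $\mathsf{swap}^{x}_{p,\q{0}} \colon \multiset{p, \q{0}_x} \mapsto \multiset{p_x, \q{0}}$ is enabled; firing it moves one agent from $p$ to $p_x$ and one from $\q{0}_x$ to $\q{0}$. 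Because $p$ and $p_x$ have the same value, the population of $P_X$ at each value level is unchanged, so the resulting configuration is again clean, with one more agent in $\q{0}$ and one fewer among the states $\q{0}_x$. As $\sum_x D(\q{0}_x)$ strictly decreases and $D(\q{0})$ strictly increases at each such step, after at most $n$ steps we reach a clean configuration $D$ with $C \trans{T_{\infty}^*} D$ and $D(\q{0}) \ge n$, as required.

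The main obstacle is that the cleaning of \cref{prop:cleaning} may leave arbitrarily many agents in the tagged states $\q{0}_x$, so cleanness alone does not force $n$ agents into $\q{0}$; it is the counting identity \cref{prop:thr:prop}~\eqref{itm:preservation:x} combined with \cref{prop:top:small} that guarantees the pumping step always has something to pump whenever $D(\q{0}) < n$ and terminates with at least $n$ zeros. The only other thing to check is that each $\mathsf{swap}$ transition preserves cleanness, which holds because it merely relabels one agent within a single value level.
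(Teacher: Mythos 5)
Your proof is correct and takes essentially the same route as the paper's: clean the configuration via \cref{prop:cleaning}, combine the counting invariants of \cref{prop:thr:prop} with the clean bound of \cref{prop:top:small} to conclude that the zero states $\q{0}$ and $\q{0}_x$ together hold at least $n$ agents, and then use $\mathsf{swap}$ transitions to move tagged zeros into $\q{0}$. The only difference is presentational — you derive the bound from \cref{prop:thr:prop}~\eqref{itm:preservation:x} rather than \eqref{itm:config:size} and spell out explicitly why each swap is enabled and preserves cleanness, details the paper leaves implicit.
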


\begin{proof}
  By \cref{prop:cleaning}, there exists a clean configuration $C'$ 
  such that $C \trans{\pi' \in T_\infty^*} C'$. Let us first prove that $C'(Z_X) \geq n$. 
  Note that $\pi\pi'$ is an initialization sequence with effective input 
  $\vec{w}$ such that $L' \trans{\pi\pi'} C'$. Thus:
  \begin{align*}
    C'(Z_X)
    &= |C'| - C'(X) - C'(P_X) - C'(B)
    && \text{(by def.\ of $Q$)} \\
    &= (C'(N) + C'(B) + |\vec{w}|) - C'(X) - C'(P_X) - C'(B)
    && \text{(by Prop.~\ref{prop:thr:prop}~\eqref{itm:config:size})} \\    
    &\geq (L(N) + C'(B) + |\vec{w}|) - C'(X) - C'(P_X) - C'(B)
    && \text{(by Prop.~\ref{prop:thr:prop}~\eqref{itm:preservation})} \\    
    &\geq (L(N) + C'(B) + |\vec{w}|) - (|\vec{w}| + n) - C'(B)
    && \text{(by Prop.~\ref{prop:top:small})} \\
    &= L(N) - n \\
    &\geq n
    && \text{(by def.\ of $L$)}
  \end{align*}
  Now, by \cref{prop:thr:prop}~\eqref{itm:preservation}, we have
  $C'(N) \geq L(N) \geq n$. Thus, using $\mathsf{swap}$ transitions, we
  can swap $n$ agents from $Z_X$ to $\q{0}$. This way, we obtain a
  configuration $D$ such that $C' \trans{T_{\infty}^*} D$ and $D(\q{0}) \geq
  n$. We are done since $\mathsf{swap}$ transitions preserve
  cleanness.
\end{proof}

For every configuration $C$, let
\begin{align*}
  \val^+(C) &\defeq \sum_{\substack{q \in P_X \\ \val(q) > 0}} \val(q)
  \cdot C(q), &
  \val^-(C) &\defeq \sum_{\substack{q \in P_X \\ \val(q) < 0}} \val(q)
  \cdot C(q).
\end{align*}
We now show that, once dynamic initialization has terminated, fair executions stabilize to configurations of a
certain ``normal form''.

\begin{proposition}\label{prop:threshold:stabilize}
  For every initialization sequence $\pi$ with effective input $\vec{w}$ 
  such that $L' \trans{\pi} C$ for some $L' \succeq L$ and for every fair execution 
  $\sigma$ of $\PP_\infty$ starting from $C$, 
  there exist $i \in \N$, $m_+ \geq 0$ and $m_- \leq 0$ such that:
  \begin{enumerate}
  \item $\sigma_i(X) = \sigma_{i+1}(X) = \cdots = 0$,
    
  \item $\val^\circ(\sigma_i) = \val^\circ(\sigma_{i+1}) = \cdots =
    m_\circ$ for both $\circ \in \{+, -\}$,

  \item $m_+ = 0 \lor m_- = 0$.
  \end{enumerate}
\end{proposition}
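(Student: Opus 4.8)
The plan is to establish the three properties in order, using the auxiliary machinery about clean configurations. First I would argue property~(1): the states of $X$ can only be emptied by $\mathsf{add}_x$ transitions, and once emptied they stay empty because (with dynamic initialization over) the only transitions that put agents back into $X$ are the RDI-transitions $\mathsf{add}_{x,q}^{-1}$, which are not in $T_\infty$. By \cref{cor:enough:zero}, from any reachable configuration one can reach a clean configuration $D$ with $D(\q{0}) \geq n$; whenever $C(x) > 0$ for some $x$, transition $\mathsf{add}_x$ is then enabled (it needs $|\mathrm{rep}(\vec{a}(x))| \leq n$ agents in $\q{0}$), so it strictly decreases $C(X)$. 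Since $C(X)$ cannot decrease forever, fairness forces $\sigma_i(X) = 0$ for some $i$, and then $\sigma_j(X) = 0$ for all $j \geq i$.

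Next I would handle property~(2): both $\val^+$ and $\val^-$ are invariant under all transitions of $T_\infty$ except $\mathsf{add}_x$ and $\mathsf{cancel}_{i,q}$. The transitions $\mathsf{up}_i^\circ$, $\mathsf{down}_i^\circ$, $\mathsf{swap}_{p,q}^x$ only redistribute agents among states of equal sign and equal absolute contribution, so they leave $\val^+$ and $\val^-$ fixed; $\mathsf{equal}$, $\mathsf{false}$ only touch $B$. Transition $\mathsf{add}_x$ can change $\val^+$ or $\val^-$ (depending on the sign of $\vec{a}(x)$), but once $\sigma_i(X) = 0$ no $\mathsf{add}_x$ is enabled. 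So after the index $i$ from property~(1), the only transition that can still change $\val^+$ or $\val^-$ is $\mathsf{cancel}_{i,q}$, which \emph{decreases} $\val^+$ by $2^i$ and \emph{increases} $\val^-$ by $2^i$. Hence for $j \geq i$ the sequences $\val^+(\sigma_j)$ and $-\val^-(\sigma_j)$ are non-increasing in $\N$, so they stabilize; possibly advancing $i$ I get constants $m_+ \geq 0$ and $m_- \leq 0$.

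Finally, for property~(3) I would argue that if both $m_+ > 0$ and $m_- < 0$ held from some point on, a $\mathsf{cancel}$ transition would remain infinitely often enabled, contradicting stabilization. Concretely, take the stabilized suffix; by \cref{cor:enough:zero} one can always reach from $\sigma_j$ a clean configuration with at least $n$ agents in $\q{0}$, and along the way $\val^+$ and $\val^-$ do not change (the cleaning uses only $\mathsf{swap}$, $\mathsf{up}$, $\mathsf{cancel}$, but a $\mathsf{cancel}$ would already change the values, contradiction — so actually the cleaning uses only $\mathsf{swap}$ and $\mathsf{up}$ here, which suffices to expose a cancellable pair). If $m_+ > 0$ then some $\q{+2^i}_x$ is populated; if $m_- < 0$ then some $\q{-2^{i'}}_{x'}$ is populated. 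Using $\mathsf{swap}$ to move both to the unsubscripted copies and then $\mathsf{down}$/$\mathsf{up}$ to align their exponents (the net value is preserved), one reaches a configuration enabling some $\mathsf{cancel}_{i,q}$ — note $C(B) \geq L(B) > 0$ by \cref{prop:thr:prop}~\eqref{itm:preservation}, so a suitable $q \in B$ exists. By fairness this $\mathsf{cancel}$ must fire infinitely often, strictly decreasing $\val^+$, contradicting $\val^+(\sigma_j) = m_+$ for all large $j$. Hence $m_+ = 0$ or $m_- = 0$.

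The main obstacle I expect is property~(3), specifically the argument that a cancellable pair can always be \emph{reached} from the stabilized suffix using only value-preserving permanent transitions: one must be careful that moving a subscripted agent $\q{\circ 2^i}_x$ through $\mathsf{swap}$ into the shared pool, and then adjusting exponents via $\mathsf{up}/\mathsf{down}$ without accidentally triggering a value-changing step, genuinely works — this relies on $\val^+$ being a \emph{sum of powers of two with positive multiplicities}, so if it is nonzero some positive power-of-two state is occupied, and symmetrically for $\val^-$, and the exponents can be matched downward to a common value because $\mathsf{down}_i^\circ$ splits $2^i$ into two copies of $2^{i-1}$ while preserving sign and total value. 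Making this reachability argument fully rigorous (rather than hand-waving "align the exponents") is the delicate part.
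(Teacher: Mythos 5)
Your proposal is correct and takes essentially the same route as the paper's own proof: empty $X$ via \cref{cor:enough:zero} and fairness, observe that after that point $|\val^+|$ and $|\val^-|$ are non-increasing under $T_\infty$ and hence stabilize, and derive a contradiction from $m_+\neq 0\neq m_-$ by swapping the two extremal power-of-two agents into the unsubscripted pool, firing at most $n$ $\mathsf{down}$ transitions (affordable since $\geq n$ agents sit in $\q{0}$) to align exponents, and enabling a $\mathsf{cancel}$ whose recurrence fairness then forces. The delicate point you flag is exactly the one the paper resolves by taking the \emph{largest} positive and negative exponents $d\geq d'$ and lowering only the larger one, so no new idea is missing.
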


\begin{proof}
  For the sake of contradiction, assume there exist infinitely many
  indices $j$ such that $\sigma_j(X) > 0$. Let $j \in \N$ be such an
  index. By \cref{cor:enough:zero}, there exists a configuration $C_j$
  such that $\sigma_j \trans{T_{\infty}^*} C_j$ and $C_j(\q{0}) \geq n$. Hence,
  there exists $x \in X$ such that transition $\mathsf{add}_x$ is
  enabled in $C_j$. Since this holds for infinitely many indices and
  since $X$ is finite, fairness implies that some $\mathsf{add}$
  transition can be enabled infinitely often and hence occurs
  infinitely often along $\sigma$. This is impossible since the number
  of agents in $X$ cannot be increased by any transition in $T_\infty$, 
  and thus would eventually drop below zero. 
  Therefore, there exists $h \in \N$ such that
  $\sigma_h(X) = \sigma_{h+1}(X) = \cdots = 0$.

  Since $X$ is permanently empty from index $h$, the $\mathsf{add}$ transitions 
  are permanently disabled. No other transition in $T_\infty$ can increase 
  the absolute value of $\val^\circ$ for any $\circ \in \{+, -\}$. Thus, we have
  $|\val^\circ(\sigma_h)| \geq |\val^\circ(\sigma_{h+1})| \geq
  \cdots$ for both $\circ \in \{+, -\}$. Therefore, there exist $i
  \geq h$, $m_+ \geq 0$ and $m_- \leq 0$ such that
  \begin{align}
    \val^+(\sigma_i) &= \val^+(\sigma_{i+1}) = \cdots = m_+,\label{eq:stab1} \\
    \val^-(\sigma_i) &= \val^-(\sigma_{i+1}) = \cdots = m_-.\label{eq:stab2}
  \end{align}

  It remains to show that $m_+ = 0$ or $m_- = 0$. For the sake of
  contradiction, suppose this is not the case. For every $j \geq i$,
  \cref{cor:enough:zero} yields a configuration $C_j$ such that
  $\sigma_j \trans{T_{\infty}^*} C_j$ and $C_j(\q{0}) \geq n$. Thus, by fairness,
  there exist infinitely many indices $j \geq i$ such that
  $\sigma_j(\q{0}) \geq n$. Let $j$ be such an index. Let $0 \leq d,
  d' \leq n$ be the largest indices for which there exist states $q,
  q' \in P_X$ such that $\sigma_j(q) > 0$, $\sigma_j(q') > 0$,
  $\val(q) = 2^{d}$ and $\val(q') = -2^{d'}$. Note that these indices
  exist because $m_+ \neq 0$ and $m_- \neq 0$.

  Assume without loss of generality that $d \geq d'$, as the other
  case is symmetric. By \cref{prop:thr:prop}~\eqref{itm:preservation},
  there exists $r \in B$ such that $C(r) > 0$. Since $C_j(\q{0}) \geq
  n \geq d - d'$, the sequence of transitions $\mathsf{down}_d^+,
  \mathsf{down}_{d-1}^+, \ldots, \mathsf{down}_{d'+1}^+$ can be fired
  from $C_j$. From there, we can fire $\mathsf{cancel}_{d', r}$ which
  leads to a configuration $D_j$ such that $|\val^\circ(D_j)| <
  |m_\circ|$ for both $\circ \in \{+, -\}$. Since there are infinitely
  many such indices $j$, fairness implies that some such configuration
  $D_j$ occurs (infinitely often) along $\sigma$, which contradicts
  both~(\ref{eq:stab1}) and~(\ref{eq:stab2}).
\end{proof}

\parag{Main proof} We are now ready to prove that $\PP_{\text{thr}}$
works as intended.

\newcommand{\false}{\mathrm{false}}

\begin{theorem}
  $\PP_{\text{thr}}$ computes $\varphi$ with helpers and under
  reversible dynamic initialization.
\end{theorem}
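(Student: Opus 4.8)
\emph{Overview and reversibility.} The plan is to verify the two defining properties of an RDI-protocol — reversibility and correct computation — noting that every argument below uses only the hypothesis $L' \succeq L$, so the statement ``with helpers'' comes for free. For reversibility, the bound $C(I(x)) \le \vec w(x)$ is exactly \cref{prop:thr:prop}~\eqref{itm:preservation:x} (since $C(\q{x}) \le C(N_x) + C(\q{x})$). For the reversal condition I would show, transition by transition, that every transition of $T = T_\infty \cup T_\dagger$ can be undone modulo $\DiEq{\cdot}$, and then induct on the length of $C \trans{T^*} D$, reversing the last transition at each step: $\mathsf{up}_i^\circ$ and $\mathsf{down}_{i+1}^\circ$ are mutual inverses; $\mathsf{swap}_{p,q}^x$ is undone by $\mathsf{swap}_{q,p}^x$; $\mathsf{add}_x$ and $\mathsf{add}_{x,q}^{-1}$ undo one another, and likewise $\mathsf{cancel}_{i,q}$ and $\mathsf{cancel}_{i,q}^{-1}$ — in these last two cases the reverse step pushes the single $B$-agent it touches into $\q{f}$, which is precisely what $\DiEq{\cdot}$ ignores; and $\mathsf{equal}$, $\mathsf{false}$, $\mathsf{reset}$ act only between $\q{f}$ and $\q{t}$, hence are invisible to $\DiEq{\cdot}$ and need no reversal. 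The only structural input is that some $B$-agent is always present, which is \cref{prop:thr:prop}~\eqref{itm:preservation} and is preserved by $\DiEq{\cdot}$.

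\emph{Computation: set-up.} Fix an initialization sequence $\pi$ with effective input $\vec w$ and $L' \trans{\pi} C$, $L' \succeq L$, and put $c := \vec a \cdot \vec w = \val(C)$ (\cref{prop:thr:prop}~\eqref{itm:val:inv}); every transition of $T_\infty$ conserves $\val$. Let $\sigma$ be a fair execution of $\PP_\infty$ from $C$. By \cref{prop:threshold:stabilize} there are $i$, $m_+ \ge 0$, $m_- \le 0$ with $m_+ = 0$ or $m_- = 0$ such that $\sigma_j(X) = 0$, $\val^+(\sigma_j) = m_+$, $\val^-(\sigma_j) = m_-$ for all $j \ge i$. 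Since $\val_X(\sigma_j) = \val_{Z_X}(\sigma_j) = \val_B(\sigma_j) = 0$ and $\val$ is conserved, $m_+ + m_- = \val_{P_X}(\sigma_j) = c$, so $m_- = 0$, $m_+ = c$ when $c \ge 0$, and $m_+ = 0$, $m_- = c$ when $c < 0$; in particular $\val^+(\sigma_j) = \max(c,0)$ for $j \ge i$. Hence, from index $i$ on, one sign-class of $P_X$ is empty (so $\mathsf{cancel}$ is permanently disabled), $X$ is empty (so $\mathsf{add}$ is disabled), and the only transition affecting $\{\q{f},\q{t}\}$ other than $\mathsf{equal}$ is $\mathsf{false}$, which moves a $\q{t}$-agent to $\q{f}$.

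\emph{Computation: the two cases.} If $c \ge b$ (so $c > 0$, $m_- = 0$, all agents of $P_X$ positive, summing to $c \ge b$): using \cref{cor:enough:zero} to keep at least $n$ agents in $\q{0}$ and then $\mathsf{swap}$, $\mathsf{up}$, $\mathsf{down}$, from any $\sigma_j$ with $j \ge i$ one reaches a configuration whose untagged positive agents contain $\mathrm{rep}(b)$ as a submultiset; since $\mathsf{equal}$ does not consume $\mathrm{rep}(b)$, one then fires $\mathsf{equal}$ once per $\q{f}$-agent and reaches a configuration with every $B$-agent in $\q{t}$, which is a permanent $1$-consensus (as $\mathsf{add}$ and $\mathsf{cancel}$ are disabled and $\mathsf{false}$ needs a $\q{f}$-agent, no $\q{f}$-agent is ever created again). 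Finiteness of the reachable set together with fairness then forces $\sigma$ into such a configuration from some index $\ge i$, so $O(\sigma) = 1 = \varphi(\vec w)$. If $c < b$: then $\val^+(\sigma_j) = \max(c,0) < b$ for $j \ge i$, so $\mathrm{rep}(b)$, of value $b$, is not a submultiset of $\sigma_j$, i.e.\ $\mathsf{equal}$ is permanently disabled from index $i$; hence $\q{t}$ is non-increasing and $\q{f}$ non-decreasing. It remains to rule out a spurious convergence to the $1$-consensus, which amounts to the claim: \emph{if along $\sigma$ the $B$-agents are eventually permanently all in $\q{t}$, then $c \ge b$.} Granting the claim, when $c < b$ the state $\q{f}$ is populated infinitely often, hence (being non-decreasing from index $i$) eventually permanently, and then fairness makes $\mathsf{false}$ drain $\q{t}$, giving $O(\sigma) = 0 = \varphi(\vec w)$.

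\emph{The main obstacle.} The heart of the argument is the claim just used, and the difficulty is that the conserved quantity is $\val$, not $\val^+$: during the dynamic-initialization phase the positive value can transiently exceed $b$ and $\mathsf{equal}$ can fire, so one must argue that the positive value is effectively ``locked'' once the last $\mathsf{equal}$ has fired. Concretely, consider the last occurrence of $\mathsf{equal}$ along $\pi$ followed by $\sigma$ — it exists, since only $\mathsf{equal}$ ever populates $\q{t}$ and $\q{t}$ is populated in the limit; at that step $\mathrm{rep}(b)$ is present, so $\val^+ \ge b$ there, and thereafter $\val^+$ can never drop below $b$: the only transition lowering $\val^+$ is $\mathsf{cancel}$, which re-creates a $\q{f}$-agent, and by fairness such a $\q{f}$-agent would have to be removed again, but the only transition removing a $\q{f}$-agent is $\mathsf{equal}$, contradicting the choice of the last $\mathsf{equal}$ (the same reasoning also forbids a negative agent from ever co-existing with a positive one, which would likewise enable $\mathsf{cancel}$). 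Hence $\val^+ \ge b$ from the last $\mathsf{equal}$ on, so $m_+ = \max(c,0) \ge b$, whence $c \ge b$, proving the claim. The remaining, more routine obstacle is the reshaping step in the $c \ge b$ branch — exhibiting $\mathrm{rep}(b)$ among the untagged agents — which is a carry/borrow computation on binary representations made possible by the $\q{0}$-helpers, analogous to the linear-inequality protocol of~\cite{BEJ18}. Combining everything, $\PP_\infty$ computes $\varphi(\vec w)$ from every such $C$, so together with reversibility $\PP_{\mathrm{thr}}$ computes $\varphi$ with helpers and under reversible dynamic initialization.
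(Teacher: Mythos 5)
Your proof is correct and follows essentially the same route as the paper's: the same transition-by-transition reversal argument for reversibility (with $\DiEq{\cdot}$ absorbing the drift between $\q{f}$ and $\q{t}$ and \cref{prop:thr:prop} supplying the $B$-agent each reverse step needs), and the same correctness scheme built on \cref{prop:threshold:stabilize}, a reachability-plus-fairness argument to a configuration containing $\mathrm{rep}(b)$ when $c \ge b$, and a ``last $\mathsf{equal}$'' argument when $c < b$ (the paper phrases it as the last index at which $\q{f}$ is populated, but it is the same contradiction). Two harmless quibbles: $\mathsf{add}^{-1}$, not only $\mathsf{cancel}$, can lower $\val^+$ --- immaterial, since it too deposits an agent in $\q{f}$ --- and the reshaping step you defer as routine is carried out in the paper via the cleaning argument of \cref{cor:enough:zero} followed by a cascade of $\mathsf{down}$ transitions.
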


\begin{proof}
  We first show that $\PP_{\text{thr}}$ is input reversible, and then
  that it correctly computes $\varphi$.
  
  \medskip\noindent\emph{Input reversibility.} Let $\pi$ be an
  initialization sequence with effective input $\vec{w}$ such that
  $L' \trans{\pi} C$ for some $L' \succeq L$. By
  \cref{prop:thr:prop}~\eqref{itm:preservation:x}, we have $C(I(x)) =
  C(x) = \vec{w}(x) - C(N_x) \leq \vec{w}(x)$ for every $x \in X$,
  which proves the first required property.

  For every configuration $C$, let $\false(C) \defeq D$ where $D(t)
  \defeq 0$, $D(f) \defeq C(t) + C(f)$ and $D(q) \defeq C(q)$ for
  every $q \in Q \setminus \{f, t\}$. Observe that for every
  configuration $C$, the following holds:
  \begin{align}
    C \trans{\mathsf{reset}^{C(\q{t})}} \false(C) \text{ and }
    \false(C) \in [C].\label{eq:reset:false}
  \end{align}
  It remains to show that if $C \trans{\pi_\triangleright \in T^*} D$
  and $D' \in [D]$, then $D' \trans{\pi_\triangleleft \in T^*} C'$ for
  some $C' \in \DiEq{C}$. By~\eqref{eq:reset:false}, it is enough to
  argue that $\false(D) \trans{T^*} \false(C)$.

  Let $C_i \trans{t_i} C_{i+1}$ be the $i^\text{th}$ step of
  $\pi_\triangleright$. We argue that $\false(C_{i+1}) \trans{t_i'}
  \false(C_i)$ for some $t_i' \in T \cup \{\varepsilon\}$. By
  induction, this implies $\false(D) \trans{T^*} \false(C)$ as
  desired. If $t_i$ is an $\mathsf{equal}$, $\mathsf{false}$ or
  $\mathsf{reset}$ transition, then we already have $\false(C_{i+1}) =
  \false(C_i)$. Otherwise we revert the step as follows, where ``$s
  \mapsto u$'' indicates that if $t_i = s$, then we reverse it with
  $t_i' = u$:
  \begin{align*}
    \mathsf{add}_x &\mapsto \mathsf{add}_{x,\q{f}}^{-1}
    && \text{for every $x \in X$,} \\
    \mathsf{add}_{x,q}^{-1} &\mapsto \mathsf{add}_x
    && \text{for every $x \in X$ and $q \in B$,} \\
    \mathsf{up}_i^\circ &\mapsto \mathsf{down}^\circ_{i+1}
    && \text{for every $0 \leq i < n$ and $\circ \in \{+, -\}$,}\\
    \mathsf{down}^\circ_i &\mapsto \mathsf{up}_{i-1}^\circ
    && \text{for every $0 < i \leq n$ and $\circ \in \{+, -\}$,} \\
    \mathsf{cancel}_{i,q} &\mapsto \mathsf{cancel}_{i,\q{f}}^{-1}
    && \text{for every $0 \leq i \leq n$ and $q \in B$,} \\
    \mathsf{cancel}_{i,q}^{-1} &\mapsto \mathsf{cancel}_{i,\q{f}}
    && \text{for every $0 \leq i \leq n$ and $q \in B$,}\\
    \mathsf{swap}_{p, q}^x &\mapsto \mathsf{swap}_{q, p}^x
    && \text{for every $p, q \in N$ and $x \in X$.}
  \end{align*}
  Note that $t_i'$ is not the exact reverse transition of $t_i$, as it
  may differ over $B$. Indeed, $t_i'$ may require an agent in state
  $\q{f}$, which may not have been produced by $t_i$. However, this is
  not an issue since, by definition of $\false$ and by
  \cref{prop:thr:prop}~\eqref{itm:preservation}, we have:
  $$\false(C_{i+1})(\q{f}) = C_{i+1}(B) \geq L(B) > 0.$$ Thus, we have
  $\false(C_{i+1}) \trans{t_i'} \false(C_i)$ as desired, which
  completes the proof.
	
  \medskip\noindent\emph{Correctness.} Let $\pi$ be an initialization
  sequence with effective input $\vec{w}$ such that $L' \trans{\pi} C$ 
  for some $L' \succeq L$. 
  Let $\sigma$ be a fair execution of $\PP_\infty$ starting from $C$. 
  By \cref{prop:threshold:stabilize}, there exist 
  $i \in \N$, $m_+ \geq 0$ and $m_- \leq 0$ such that:\smallskip
  \begin{enumerate}[(a)]
    \item $\sigma_i(X) = \sigma_{i+1}(X) = \cdots =
      0$, \label{itm:input:stab}

    \item $\val^\circ(\sigma_i) = \val^\circ(\sigma_{i+1}) = \cdots =
      m_\circ$ for both $\circ \in \{+, -\}$, \label{itm:val:stab}

    \item $m_+ = 0 \lor m_- = 0$.
  \end{enumerate}
  
  First, let us show that $O(\sigma) \in \{0, 1\}$. We make a case
  distinction on whether $m_+ \geq b$.\medskip

  \noindent\emph{Case $m_+ \geq b$.} We show that $O(\sigma) =
  1$. Note that $m_+ \geq b > 0$ which implies that $m_- = 0$. Thus,
  by~\eqref{itm:val:stab}, no $\mathsf{cancel}$ transition is enabled
  in $\sigma_j$ for every $j \geq i$. Thus, it suffices to show that
  $\sigma_j(\q{f}) = 0$ for some $j \geq i$. For the sake of
  contradiction, suppose this is not the case. Let $j \geq i$ be such
  that $\sigma_j(\q{f}) > 0$. We claim that $\sigma_j$ can reach some
  configuration $D_j$ enabling transition $\mathsf{equal}$,
  \ie\ larger or equal to $\mathrm{rep}(b)$. This claim, together with
  fairness, yields a contradiction since this transition can move all
  agents in state $\q{f}$ to state $\q{t}$.

  Let us prove the claim. By \cref{cor:enough:zero}, we have $\sigma_j
  \trans{T_\infty^*} C_j$ where $C_j(\q{0}) \geq n$. Note that $\val^+(C_j) =
  \val^+(\sigma_j)$. If $\val^+(C_j) = b$, then, by cleanness, $C_j$
  contains precisely the binary representation of $b$, and hence $C_j
  \geq \mathrm{rep}(b)$. Thus, assume $\val^+(C_j) > b$. Let $0 \leq d
  \leq n$ be the largest exponent for which there exists a state $q
  \in P_X$ such that $C_j(q) > 0$, $\val(q) = 2^d$ and $d \not\in
  \mathrm{bits}(b)$. Since $C_j(\q{0}) \geq n \geq d$, the sequence of
  transitions $\mathsf{down}_d^+ \cdot \mathsf{down}_{d-1}^+ \cdots
  \mathsf{down}_1^+$ can be fired from $C_j$, which yields a
  configuration $D_j \geq \mathrm{rep}(b)$.\medskip

  \noindent\emph{Case $m_+ < b$.} We show that $O(\sigma) = 0$. First note
  that $\mathsf{equal}$ is disabled in $\sigma_j$ for every
  $j \geq i$, as otherwise we would have
  $\sigma_j \geq \mathrm{rep}(b)$ which implies that $m_+
  = \val^+(\sigma_j') \geq b$. Thus, it suffices to show that there
  are infinitely many indices $j$ such that $\sigma_j(\q{f}) >
  0$. Indeed, if this is the case, then, by fairness, $\mathsf{false}$
  permanently moves all agents in state $\q{t}$ to state $\q{f}$.
  
  For the sake of contradiction, suppose the claim does not hold. 
  Let $\sigma' \defeq \pi\sigma$. Let $j \in \N$ be the largest index 
  such that $\sigma_j'(\q{f}) > 0$. 
  Note that this configuration exists as $\sigma_0' = L' \succeq L$ and $L(\q{f}) > 0$. 
  The only transition that reduces the number of agents in $\q{f}$ is $\mathsf{equal}$. 
  Thus, $\sigma_j' \trans{\mathsf{equal}} \sigma_{j+1}'$ and $\val^+(\sigma_j') \geq b$. 
  As finally, $m_+ < b$, some $\mathsf{cancel}$ or $\mathsf{add}^{-1}$ transition 
  must be fired in $\sigma'_{j'}$ for some $j' > j$. 
  In both cases there is afterwards an agent in state $\q{f}$. 
  This contradicts the maximality of $j$. \medskip

  We are done proving $O(\sigma) \in \{0, 1\}$. It remains to
  argue that $O(\sigma) = \varphi(\vec{w})$. We have:
  \begin{align*}
    \vec{a} \cdot \vec{w}
    &= \val(\sigma_i)
    && \text{(by Prop.~\ref{prop:thr:prop}~\eqref{itm:val:inv})}\\
    &= \val^+(\sigma_i) + \val^-(\sigma_i)
    && \text{(by~\eqref{itm:input:stab})} \\
    &= m_+ + m_-.
    && \text{(by~\eqref{itm:val:stab})} 
  \end{align*}
  Recall that $m_+ \geq 0$, $m_- \leq 0$ and $(m_+ = 0 \lor m_- =
  0)$. If $\vec{a} \cdot \vec{w} \geq b$, then we must have $m_+ \geq
  b > 0$ and $m_- = 0$. Therefore, the first case above holds, and
  hence $O(\sigma) = 1$, which is correct. If $\vec{a} \cdot \vec{w} <
  b$, then we must have $m_+ < b$. Therefore, the second case above
  holds, and hence $O(\sigma) = 0$, which is also correct.
\end{proof}

\subsubsection{Remainder protocols}\label{subsec:app-remainder}

This section describes a family of protocols with helpers computing
remainder predicates under reversible dynamic initialization. The
construction, its correctness proof and its intermediary propositions
are similar to those presented in \cref{subsec:app-threshold} for
the case of threshold predicates. For completeness, we repeat and
adapt them in full details.

Let us fix a remainder predicate $\varphi$ over variables $X$. 
Let $\varphi(\vec{v}) \defeq \vec{a} \cdot \vec{v} \equiv_m b$ 
where $\vec{a} \in \Z^X$ and $m \in \N_{\geq 2}$ and $b \in \Z$. 
Without loss of generality\footnote{If this is not the case for some coefficient
	 $\vec{a}(x)$, then we can replace it by $\vec{a}(x) \bmod m$, 
	 which yields an equivalent predicate.}, 
we may assume $0 \leq b < m$ and $0 \leq \vec{a}(x) < m$ for each $x \in X$.

Instead of directly constructing a protocol for $\varphi(\vec{v})$, we rewrite the predicate. This yields a different but equivalent predicate $\varphi'(\vec{v})$:
\begin{align*}
\varphi'(v) \defeq 
	\begin{cases}
		\vec{a} \cdot \vec{v} \not \geq 1~(\text{mod } m) 
			& \text{ if } b = 0, \\
		\vec{a} \cdot \vec{v} \geq b ~(\text{mod } m) \land \vec{a} \cdot \vec{v} \not\geq b + 1 ~(\text{mod } m) 
			& \text{ if } b > 0. \\
	\end{cases}
\end{align*}

As we can handle negations and conjunctions separately in \cref{subsec:finsets}, 
it is enough to describe a protocol for the predicate 
$\varphi(\vec{v}) \defeq \vec{a} \cdot \vec{v} \geq b~(\text{mod } m)$ where 
$\vec{a} \in \N^X$, $m \in \N_{\geq 2}$, $0 < b < m$ and $0 \leq \vec{a}(x) < m$ for each $x \in X$.

We construct a simple population protocol $\PP_{\text{rem}}$ with helpers that 
computes $\varphi$ under reversible dynamic initialization, and prove its correctness. 

\parag{Notation} Let $n$ be the smallest number such that $2^n >
\norm{\varphi}$. Let $P \defeq \{\q{2^i} : 0 \leq i \leq
n\}$, $Z \defeq \{\q{0}\}$, $N \defeq P \cup Z$ and $B \defeq \{\q{f},
\q{t}\}$, where $P$, $Z$, $N$ and $B$ respectively stand for
``$P$owers of two'', ``$Z$ero'', ``$N$umerical values'' and
``$B$oolean values''. For every set $S$ and every $x \in X$, let $S_x
\defeq \{q_x : q \in S\}$ and $S_X \defeq S \cup \bigcup_{x \in X}
S_x$. 

For every $d \in \N$, let $\mathrm{bits}(d)$ denote the unique set $J
\subseteq \N$ such that $d = \sum_{j \in J} 2^j$,
\eg\ $\mathrm{bits}(13) = \mathrm{bits}(1101_2) = \{3, 2, 0\}$. The \emph{canonical representation} of an integer $d \in \Z$ is the multiset $\mathrm{rep}(d)$ defined as follows:
$$
\mathrm{rep}(d) \defeq
\begin{cases}
\multiset{\q{2^i} : i \in \mathrm{bits}(d)}   & \text{if } d > 0, \\
\multiset{\q{0}}                              & \text{if } d = 0.
\end{cases}
$$

\parag{The protocol} The RDI-protocol $\PP_{\text{rem}} = (Q, T_\infty, T_\dagger, L, X, I, O)$ is defined as
follows:
\begin{itemize}
\setlength\itemsep{4pt}

  \item  $Q \defeq X \cup N_X \cup B$. \\
	Intuitively, the states of $X$ are the ``ports'' through which the agents for each variable
	enter and exit the protocol. 
	
	\item  $I \defeq x \mapsto \q{x}$. \\
	That is, the initial state for variable $x$ is $x$.
	
	\item $L \defeq \multiset{2n \cdot \q{0}, \q{f}}$. \\
	So, we have $2n$ helpers in state $\q{0}$, and one helper in state $\q{f}$, i.e., initially 
	the protocol assumes that the predicate does not hold.
	
	\item $O(q) \defeq q \mapsto (0 \text{ if } q = \q{f} \text{ else } 1
	\text{ if } q = \q{t} \text{ else } \bot)$. \\
	That is, the output of the protocol is completely determined by the number of agents 
	in states $\q{t}$ and $\q{f}$
	
	\item $T_\infty$ is the following set of (``permanent'') transitions:
	\begin{alignat*}{3}
		\mathsf{add}_{x}:\ &&
		\multiset{x, |\mathrm{rep}(\vec{a}(x))| \cdot \q{0}}
		& \mapsto \multiset{\q{0}_x} \mplus \mathrm{rep}(\vec{a}(x))
		&& \qquad \text{for every $x \in X$,} \\
		\mathsf{up}_i:\ &&
		\multiset{\q{2^i}, \q{2^i}}
		&  \mapsto \multiset{\q{2^{i+1}}, \q{0}}
		&& \qquad\text{for every $0 \leq i < n$,} \\
		\mathsf{down}_i:\ &&
		\multiset{\q{2^i}, \q{0}}
		&  \mapsto \multiset{\q{2^{i-1}}, \q{{2^{i-1}}}}
		&& \qquad\text{for every $0 < i \leq n$,} \\
		\mathsf{modulo}_q:\ &&
		\mathrm{rep}(b) \mplus \multiset{q}
		&  \mapsto \mathrm{rep}(b) \mplus \multiset{\q{f}}
		&& \qquad\text{for every $q \in B$,} \\
		\mathsf{swap}_{p, q}^x:\ &&
		\multiset{p, q_x}
		& \mapsto \multiset{p_x, q}
		&& \qquad \text{for every $p, q \in N$ and $x \in X$,} \\
		\mathsf{equal}:\ &&
		\mathrm{rep}(b) \mplus \multiset{\q{f}}
		& \mapsto \mathrm{rep}(b) \mplus \multiset{\q{t}}, \\
		\mathsf{false}:\ &&
		\multiset{\q{f}, \q{t}} & \mapsto \multiset{\q{f}, \q{f}}.
	\end{alignat*}
	Intuitively, $\mathsf{add}_{x}$ converts an agent which arrived via port $x$ into 
	the canonical representation of $\vec{a}(x)$. Transitions of the form $\mathsf{up}_i, 
	\mathsf{down}_i$ allow the protocol to change
	the representation of a value, without changing the value itself.
	The $\mathsf{modulo}$ transition reduces the overall value by $m$.
	Transition $\mathsf{equal}$ allows the protocol to detect that the current value 
	is at least $b$, which moves a helper from state $\q{f}$ to $\q{t}$.
	\item Finally, $T_{\dagger}$ is the following set of RDI-transitions:
	\begin{alignat*}{3}
		\mathsf{add}_{x,q}^{-1}:\ &&
		\multiset{\q{0}_x, q} \mplus \mathrm{rep}(\vec{a}(x))
		& \mapsto \multiset{x, \q{f}, |\mathrm{rep}(\vec{a}(x))| \cdot \q{0}}
		&& \quad \text{for every $x \in X$ and $q \in B$,} \\
		\mathsf{modulo}_q^{-1}:\ &&
		\mathrm{rep}(b) \mplus \multiset{q}
		&  \mapsto \mathrm{rep}(b) \mplus \multiset{\q{f}}
		&& \quad\text{for every $q \in B$,} \\
		\mathsf{reset}:\ &&
		\multiset{\q{t}}
		& \mapsto \multiset{\q{f}}.
	\end{alignat*}
	The first two transitions are needed to reverse the changes of
	$\mathsf{add}$ and $\mathsf{modulo}$ transitions while the dynamic
	initialization is not finished. Both types of transitions reset the output of the protocol
	by leaving an agent in the default output state $\q{f}$. The
	$\mathsf{reset}$ transition resets the output by moving agents from
	$\q{t}$ to $\q{f}$.
\end{itemize} 

\noindent Let $\PP_\infty = (Q, T_\infty, L, X, I , O)$. 
Let $T \defeq T_\infty \cup T_\dagger$. For the sake of readability, 
we will sometimes omit the subscripts and superscripts from transitions names 
when they are irrelevant, \eg\ ``a
$\mathsf{swap}$ transition is enabled'' instead of ``there exist $p, q
\in N$ and $x \in X$ such that $\mathsf{swap}_{p,q}^x$ is enabled''.\medskip

\parag{Size} Note that $\PP_{\text{rem}}$ has $|Q| = |X| + |N_X| + |B|
= |X| + (n+2) \cdot (|X|+1) + 2 = \O(\log \norm{\varphi} \cdot |X|)$
states and $|L| = 2n+1 = \O(\log \norm{\varphi})$ helpers. Moreover,
since families of transitions are parameterized by $X$, $B$ and $N$ or $N^2$,
there are $\O(|N|^2 \cdot |X|) = \O(\log^2 \norm{\varphi} \cdot |X|) 
\subseteq \O(|\varphi|^3)$ transitions. Finally, each transition uses at most
$\O(|\mathrm{rep}(\norm{\varphi})|) = \O(\log \norm{\varphi})
\subseteq \O(|\varphi|)$ states.\medskip

\parag{Auxiliary definitions and observations} Before proving that
$\PP_{\text{rem}}$ works as intended, let us first introduce auxiliary
definitions. Let $\val \colon Q \to \N$ be the function that
associates a value to each state as follows:
\begin{align*}
	\val(\q{0}) &= \val(\q{f}) = \val(\q{t}) \defeq 0, \\
	\val(x) &\defeq \vec{a}(x)
	&& \text{for every } x \in X, \\
	\val(\q{2^i}) &\defeq 2^i
	&& \text{for every } 0 \leq i \leq n, \\
	\val(q_x) &\defeq \val(q)
	&& \text{for every } q \in N \text{ and } x \in X.
\end{align*}
\noindent So, for example, for the predicate $5x + 6y \geq 4\ (\bmod\ m)$ we have $\val(x) = 5$ and
$\val(y) = 6$. 
For every configuration $C$ and set of states $S \subseteq Q$, let
$$\val_S(C) \defeq \sum_{q \in S} \val(q) \cdot C(q).$$ 
In particular, let $\val(C) \defeq \val_Q(C)$. 
Intuitively, $C$ can be seen as an encoding of the value $\val(C)$. 
The following properties, relating values and configurations, can be derived from 
the above definitions:
\begin{proposition}\label{prop:rem:prop}
	For every initialization sequence $\pi$ with effective input $\vec{w}$ such 
	that $L' \trans{\pi} C$ for some $L' \succeq L$, the following holds:
	\begin{enumerate}
		\item $\val(C) \equiv_m \vec{a} \cdot \vec{w}$, \label{itm:rem:val:inv}
		
		\item $\val(C) \leq \vec{a} \cdot \vec{w}$, \label{itm:rem:val:dec}
		
		\item $|C| = C(N) + C(B) + |\vec{w}|$, \label{itm:rem:config:size}
		
		\item $C(N) \geq L(N)$ and $C(B) \geq L(B)$, \label{itm:rem:preservation}
		
		\item $C(N_x) + C(x) = \vec{w}(x)$ for every $x \in
		X$. \label{itm:rem:preservation:x}
	\end{enumerate}
\end{proposition}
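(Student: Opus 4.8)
The plan is to prove the five items simultaneously by induction on the length of the initialization sequence $\pi$, mirroring the proof of \cref{prop:thr:prop}. For the base case $\pi = \varepsilon$ we have $C = L'$ for some $L' \succeq L$ and effective input $\vec{w} = \vec{0}$; since $L' \succeq L$ forces the support of $L'$ to be contained in $\{\q{0},\q{f}\}$ (the only states on which $L$ is positive), all states outside $N \cup B$ are empty in $L'$, so $\val(L') = 0 = \vec{a}\cdot\vec{0}$ gives \eqref{itm:rem:val:inv} and \eqref{itm:rem:val:dec}, the partition $Q = X \sqcup N \sqcup \bigsqcup_{x} N_x \sqcup B$ gives \eqref{itm:rem:config:size}, the definition of $\succeq$ gives \eqref{itm:rem:preservation}, and \eqref{itm:rem:preservation:x} reads $0 = 0$.

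For the inductive step I would assume the five items for $L' \trans{\pi} C$ with effective input $\vec{w}$, consider a step $C \trans{t} C'$ with $t \in T \cup \Ipt \cup \Opt$, and write $\vec{w}'$ for the effective input of $\pi t$. The transitions split into three groups. \emph{(i) Input and output transitions:} $\ipt_x$ adds one agent to $\q{x}$ and sets $\vec{w}'(x) = \vec{w}(x)+1$, so $\val(C)$ and $\vec{a}\cdot\vec{w}$ both grow by $\vec{a}(x)$ while $C(N_x)+C(x)$ and $\vec{w}(x)$ both grow by one; $\opt_x$ is symmetric, and its enabledness ($C(x) \geq 1$) together with $C(x) \leq \vec{w}(x)$ — a consequence of \eqref{itm:rem:preservation:x} — keeps $\vec{w}'(x) \geq 0$. \emph{(ii) Internal transitions} $\mathsf{add}_x$, $\mathsf{up}_i$, $\mathsf{down}_i$, $\mathsf{swap}^x_{p,q}$, $\mathsf{equal}$, $\mathsf{false}$, $\mathsf{reset}$, $\mathsf{add}^{-1}_{x,q}$, $\mathsf{modulo}^{-1}_q$: read off from the definitions that each preserves $\val$ (using $2\cdot 2^i = 2^{i+1}$, that $\mathrm{rep}(d)$ has total value $d$, that a $\mathsf{swap}$ only relabels an agent, and that $\q{0},\q{f},\q{t}$ all have value $0$), conserves the agent count, keeps $C(N)$ and $C(B)$ fixed, leaves $\vec{w}$ unchanged, and preserves $C(N_x)+C(x)$ for every $x$; all five items then carry over. \emph{(iii) The $\mathsf{modulo}_q$ transition} lowers the encoded value by exactly $m$ (as described after the protocol definition) while conserving the agent count, fixing $C(N)$ and $C(B)$ and leaving $\vec{w}$ and each $C(N_x)+C(x)$ untouched: \eqref{itm:rem:config:size}--\eqref{itm:rem:preservation:x} are immediate, \eqref{itm:rem:val:inv} survives because $\val(C)-\vec{a}\cdot\vec{w}$ changes by $-m \equiv_m 0$, and \eqref{itm:rem:val:dec} survives because this quantity only decreases.

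The calculation is largely mechanical, so the useful structuring observations are: $C(N)$ and $C(B)$ are literally invariant under every transition, hence permanently equal to $L'(N) \geq L(N)$ and $L'(B) \geq L(B)$, which settles \eqref{itm:rem:preservation} at once and, together with \eqref{itm:rem:preservation:x} and the state partition, yields \eqref{itm:rem:config:size}; and for \eqref{itm:rem:val:inv} and \eqref{itm:rem:val:dec} it suffices to follow the single number $\val(C)-\vec{a}\cdot\vec{w}$, which starts at $0$, is fixed by every transition except $\mathsf{modulo}$, and is decreased by $m$ by $\mathsf{modulo}$. The one place the argument could break is \eqref{itm:rem:val:dec}: it amounts to the claim that $\val(C)-\vec{a}\cdot\vec{w}$ is monotone non-increasing along every initialization sequence, which hinges on the reversing RDI-transitions not re-inflating the encoded value — $\mathsf{add}^{-1}$ restores exactly the $\vec{a}(x)$ units that $\mathsf{add}$ removed, and $\mathsf{modulo}^{-1}$ leaves $\val$ unchanged since it only rearranges the output states $\q{f},\q{t}$. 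This is the step I would check most carefully.
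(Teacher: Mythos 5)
The paper never actually writes this proof out—it just asserts that the five properties ``can be derived from the above definitions''—and your induction over the initialization sequence with a case split on the transition fired is the right skeleton; items (1), (3), (4) and (5) go through essentially as you describe. The genuine problem is item (2), and it sits exactly at the step you flagged. Your case analysis reads $\mathsf{modulo}_q$ and $\mathsf{modulo}_q^{-1}$ inconsistently: you take $\mathsf{modulo}_q$ to lower $\val$ by $m$ (the intended behaviour, matching the prose, the proof of Proposition~\ref{prop:remainder:stabilize}, and the $\equiv_7$ transition of Figure~\ref{fig:petri:remainder}, even though the displayed rule $\mathrm{rep}(b)\mplus\multiset{q}\mapsto\mathrm{rep}(b)\mplus\multiset{\q{f}}$ is literally a no-op on the numerical states and is evidently a typo for something like $\mathrm{rep}(m)\mplus\multiset{q}\mapsto\multiset{|\mathrm{rep}(m)|\cdot\q{0},\q{f}}$), yet you take $\mathsf{modulo}_q^{-1}$ to leave $\val$ unchanged because ``it only rearranges the output states.'' That cannot be: the reversibility argument in this very subsection undoes a $\mathsf{modulo}$ step by firing $\mathsf{modulo}^{-1}_{\q{f}}$ and must recover the original configuration up to $\q{f},\q{t}$, so $\mathsf{modulo}^{-1}$ necessarily restores $\mathrm{rep}(m)$ from agents in $\q{0}$, i.e.\ it \emph{raises} $\val$ by $m$.

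Under the consistent reading, the quantity $\val(C)-\vec{a}\cdot\vec{w}$ is no longer monotone non-increasing and your argument for (2) collapses. Worse, item (2) as stated is then simply false: starting from $L'=L$ (effective input $\vec{0}$), the single step $\mathsf{modulo}^{-1}_{\q{f}}$ is already enabled, since $L$ provides $2n\ge|\mathrm{rep}(m)|$ agents in $\q{0}$ and one in $\q{f}$, and it produces a configuration $C$ with $\val(C)=m>0=\vec{a}\cdot\vec{0}$. So this is not something a more careful computation can patch: either (2) must be weakened (e.g.\ restricted to sequences in which $\mathsf{modulo}^{-1}$ never outnumbers $\mathsf{modulo}$, or dropped in favour of the congruence in (1)), or the protocol's $\mathsf{modulo}^{-1}$ must be guarded. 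Note that (2) is not decorative—it is invoked in the chain of inequalities proving Proposition~\ref{prop:rem:top:small}—so the issue propagates. Your instinct to single this step out was exactly right; the resolution you chose is the one reading of $\mathsf{modulo}^{-1}$ that is incompatible with the protocol's required reversibility.
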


\parag{Auxiliary propositions} We say that a configuration $C$ is
\emph{clean} if for every $p, q \in P_X$ with $\val(p) = \val(q)$ and $\val(p) 
\not= 2^n$, it holds that $C(\{p, q\}) \leq 1$. Intuitively, in a clean configuration 
no agent can be promoted to a higher power of 2.

We show that any configuration can be cleaned using only permanent transitions. 
This implies that once the dynamic initialization has terminated, every fair 
execution visits clean configurations infinitely often.

\begin{proposition}\label{prop:rem:cleaning}
	For every initialization sequence $\pi$ such that $L' \trans{\pi} C$ 
	for some $L' \succeq L$, there exists a clean configuration $D$ 
	such that $C \trans{T_{\infty}^*} D$.
\end{proposition}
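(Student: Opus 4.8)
The plan is to mimic the proof of \cref{prop:cleaning} for the threshold protocol; the argument is in fact slightly simpler here, because $\PP_{\text{rem}}$ uses only non-negative powers of two, so cleanness has a single clause — no two agents may carry the same sub-maximal power $2^i$ — and there is no ``cancellation'' obstruction to remove. I would run a potential argument on $C(P_X)$, the number of agents currently located in power-of-two states (tagged or untagged), showing that as long as $C$ is not clean this quantity can be strictly decreased using only permanent transitions.

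If $C$ is already clean, take $D \defeq C$. Otherwise, by definition of cleanness there are $p, q \in P_X$ with $\val(p) = \val(q) = 2^i$ for some $0 \le i < n$ and $C(\{p, q\}) \ge 2$. The key claim is that there is a configuration $C'$ with $C \trans{T_\infty^*} C'$ and $C'(P_X) < C(P_X)$. Granting the claim, one iterates it; the process terminates since $C(P_X)$ never drops below $0$, and the configuration it terminates in is clean, so it is the sought $D$.

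To prove the claim I would proceed in two steps. First, normalise using $\mathsf{swap}$ transitions: a $\mathsf{swap}_{p,q}^x$ merely exchanges the variable tag between two numerical agents while keeping their values, so it changes neither $C(P_X)$ nor $C(N)$, and by \cref{prop:rem:prop}~\eqref{itm:rem:preservation} we always have $C(N) \ge L(N) = 2n$ untagged numerical agents available. Hence, unless $C(\q{2^i}) \ge 2$ already, I can use one or two $\mathsf{swap}$ steps — each pairing a tagged value-$2^i$ agent with an untagged numerical agent that is \emph{not} in $\q{2^i}$ (one exists, since at most one of the $\ge 2n \ge 2$ untagged numerical agents can lie in $\q{2^i}$ when $C(\q{2^i}) \le 1$) — to bring both witnessing agents into the untagged state $\q{2^i}$. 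So without loss of generality $C(\q{2^i}) \ge 2$. Second, since $i < n$, the transition $\mathsf{up}_i \colon \multiset{\q{2^i}, \q{2^i}} \mapsto \multiset{\q{2^{i+1}}, \q{0}}$ is enabled, and firing it decreases $C(P_X)$ by exactly one, which establishes the claim.

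The main obstacle — really only a matter of careful bookkeeping — is the normalisation step: one must verify that $\mathsf{swap}$ transitions leave both $C(P_X)$ and $C(N)$ invariant and that there is always an untagged numerical agent outside $\q{2^i}$ to swap against, so that a conflicting pair can be funnelled into a common untagged power-of-two state before being promoted. Everything else follows immediately from \cref{prop:rem:prop} and the definitions, exactly as in the threshold case.
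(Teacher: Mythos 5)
Your proof is correct and follows essentially the same route as the paper's: iterate a potential argument on $C(P_X)$, using \cref{prop:rem:prop}~\eqref{itm:rem:preservation} to guarantee enough untagged numerical agents for $\mathsf{swap}$ transitions to funnel the two conflicting agents into the untagged state $\q{2^i}$, and then fire $\mathsf{up}_i$ to strictly decrease $C(P_X)$. The paper states the swap normalisation more tersely ("we may assume w.l.o.g.\ that $p=q=\q{2^i}$"); your explicit check that an untagged numerical agent outside $\q{2^i}$ is always available is a welcome extra level of detail, not a divergence.
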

\begin{proof}
	If $C$ is clean, then we pick $D \defeq C$. Otherwise, we claim there 
	exists a configuration $C'$ such that $C \trans{*} C'$ and $C(P_X) > 
	C'(P_X)$. If $C'$ is clean, then we are	done. Otherwise, this process
	is repeated until a clean configuration $D$ has been reached. This must 
	terminate as the number of agents in $P_X$ cannot become negative.
	
	Let us prove the claim. If $C$ is not clean, then $C(\{p, q\}) \geq 2$
	for some $p, q \in P_X$ such that $\val(p) = \val(q)$ and $\val(p) 
	\not= 2^n$. By \cref{prop:rem:prop}~\eqref{itm:rem:preservation}, we have
	$C(N) \geq L(N) \geq 2$ and hence it is possible to consecutively
	fire at least two $\mathsf{swap}$ transitions. Note that they do not
	change the amount of agents in $P_X$. For this reason, we may assume
	without loss of generality that $p = q = \q{2^i}$ for some $0 \leq i \leq
	n$. Therefore, firing transition $\mathsf{up}_i$ decreases
	$C(P_X)$ by one.
\end{proof}

We now bound the number of agents in states from $X \cup P_X$ in a
clean configuration.

\begin{proposition} \label{prop:rem:top:small}
	For every initialization sequence $\pi$ with effective input $\vec{w}$ 
	such that $L' \trans{\pi} C$ for some $L' \succeq L$, if $C$ is clean, 
	then $C(X) + C(P_X) \leq |\vec{w}| + n$.
\end{proposition}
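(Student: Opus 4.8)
The plan is to follow the proof of the threshold analogue \cref{prop:top:small}, noting that here all numerical states of $\PP_{\text{rem}}$ carry \emph{positive} values, which actually makes the argument slightly shorter. Set $S_{2^n} \defeq \{q \in P_X : \val(q) = 2^n\}$. Since $C$ is clean, for each of the $n$ exponents $0,1,\dots,n-1$ there is at most one agent in $P_X$ whose state has value $2^i$, so $C(P_X \setminus S_{2^n}) \leq n$. Define $\vec{u} \in \N^X$ by $\vec{u}(x) \defeq C(x)$; then $|\vec{u}| = C(X)$, and by \cref{prop:rem:prop}~\eqref{itm:rem:preservation:x} we have $\vec{u}(x) = C(x) \leq C(N_x) + C(x) = \vec{w}(x)$, i.e.\ $\vec{0} \leq \vec{u} \leq \vec{w}$. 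With these two facts in hand it suffices to prove the single inequality $C(S_{2^n}) \leq |\vec{w}| - |\vec{u}|$, because then
\[
C(X) + C(P_X) \;=\; |\vec{u}| + C(P_X \setminus S_{2^n}) + C(S_{2^n}) \;\leq\; |\vec{u}| + n + \bigl(|\vec{w}| - |\vec{u}|\bigr) \;=\; |\vec{w}| + n .
\]

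To establish $C(S_{2^n}) \leq |\vec{w}| - |\vec{u}|$ I would argue by contradiction, assuming $C(S_{2^n}) \geq |\vec{w}| - |\vec{u}| + 1$. For the lower bound, every state in $P_X$ has nonnegative value, so
\[
\val_{P_X}(C) \;\geq\; \sum_{q \in S_{2^n}} \val(q)\,C(q) \;=\; 2^n \cdot C(S_{2^n}) \;\geq\; 2^n\bigl(|\vec{w}| - |\vec{u}|\bigr) + 2^n .
\]
For the upper bound, observe that the states outside $X \cup P_X$ (namely those in $Z_X$ and in $B$) all have value $0$, while $\val_X(C) = \sum_{x \in X} \vec{a}(x)\,\vec{u}(x) = \vec{a}\cdot\vec{u}$, so $\val_{P_X}(C) = \val(C) - \vec{a}\cdot\vec{u}$. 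Using $\val(C) \leq \vec{a}\cdot\vec{w}$ from \cref{prop:rem:prop}~\eqref{itm:rem:val:dec}, together with $\vec{a}(x) \leq \norm{\varphi} < 2^n$ for every $x$ and $\vec{w} \geq \vec{u} \geq \vec{0}$, I get
\[
\val_{P_X}(C) \;=\; \val(C) - \vec{a}\cdot\vec{u} \;\leq\; \vec{a}\cdot(\vec{w} - \vec{u}) \;\leq\; 2^n\bigl(|\vec{w}| - |\vec{u}|\bigr),
\]
which contradicts the lower bound. Hence $C(S_{2^n}) \leq |\vec{w}| - |\vec{u}|$, and the proposition follows.

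The only genuinely load-bearing inputs are \cref{prop:rem:prop}~\eqref{itm:rem:val:dec} (here the inequality $\val(C) \leq \vec{a}\cdot\vec{w}$ replaces the exact value invariant used in the threshold case, because $\mathsf{modulo}$ transitions strictly decrease the stored value) and~\eqref{itm:rem:preservation:x}, plus the idea of bounding $C(S_{2^n})$ through the value it contributes rather than directly. I do not anticipate a real obstacle beyond bookkeeping: the one place an error is easy to slip in is the decomposition $\val(C) = \val_{P_X}(C) + \val_X(C)$ with the remaining states contributing $0$, and the direction of the inequality coming from \cref{prop:rem:prop}~\eqref{itm:rem:val:dec}. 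Note that, unlike in the threshold proof, cleanness is not needed to force $C(S_{2^n}) = 0$ on one ``side''; the single positive top value is handled head-on.
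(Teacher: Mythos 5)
Your proof is correct and follows essentially the same route as the paper's: the same split of $P_X$ into $S_{2^n}$ and the rest, the same contradiction argument bounding $C(S_{2^n})$ via $\val_{P_X}(C)$, and the same use of \cref{prop:rem:prop}~\eqref{itm:rem:val:dec} and~\eqref{itm:rem:preservation:x}. Your closing observation is also accurate: unlike the threshold case, no case split on the sign of the top-value states is needed here since all numerical states carry nonnegative values.
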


\begin{proof}  
	Let $S_{2^n} \defeq \{q \in P_X : \val(q) = 2^n\}$. Let $\vec{u} \in \N^X$ 
	be such that $\vec{u}(x) \defeq C(x)$ for every $x \in X$. 
	Note that $|\vec{u}| = C(X)$, and that $\vec{u} \leq \vec{w}$ by 
	\cref{prop:rem:prop}~\eqref{itm:rem:preservation:x}. 
	Since $C$ is clean, we have $C(P_X \setminus S_{2^n}) \leq n$. 
	Thus, it suffices to show	that $C(S_{2^n}) \leq |\vec{w}| - |\vec{u}|$. 
	Suppose this is not	the case. This yields a contradiction:
	\begin{align*}
		\val_{P_X}(C)
		&\geq 2^n \cdot C(S_{2^n}) \\
		&> 2^n \cdot (|\vec{w}| - |\vec{u}|)
		&& \text{(by assumption)} \\
		&\geq \vec{a} \cdot (\vec{w} - \vec{u}) && \text{(since $2^n >
			\norm{\vec{a}}$ and $\vec{w} \geq \vec{u}$)} \\    
		&= \vec{a} \cdot \vec{w} - \vec{a} \cdot \vec{u} \\
		&\geq \val(C) - \val_X(C)
		&& \text{(by Prop.~\ref{prop:rem:prop}~\eqref{itm:rem:val:dec} and
			def.\ of $\vec{u}$)} \\    
		&= \val_{P_X}(C)
		&& \text{(by def.\ of $\val$)} \qedhere
	\end{align*}
\end{proof}

The following corollary shows that the number of agents in state
$\q{0}$ can always be increased back to at least $n$. This will later
be useful in arguing that the number of agents in $X$ can eventually
be decreased to zero.

\begin{corollary} \label{cor:rem:enough:zero}
	For every initialization sequence $\pi$ with effective input $\vec{w}$ such 
	that $L' \trans{\pi} C$ for some $L' \succeq L$, there exists a clean 
	configuration $D$ such that $C \trans{T_{\infty}^*} D$ and $D(\q{0}) \geq n$.
\end{corollary}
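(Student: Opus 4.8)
The plan is to mirror the structure of \cref{cor:enough:zero} from the threshold case, which has exactly the same statement, since the auxiliary machinery in the remainder case is already in place. First I would apply \cref{prop:rem:cleaning} to obtain a clean configuration $C'$ with $C \trans{T_\infty^*} C'$; note that the concatenation $\pi\pi'$ (where $\pi'$ is the cleaning sequence) is again an initialization sequence with the same effective input $\vec{w}$, since cleaning uses only permanent transitions, which do not fire input or output transitions. Then I would bound $C'(Z_X)$ from below by a chain of (in)equalities: starting from $C'(Z_X) = |C'| - C'(X) - C'(P_X) - C'(B)$ (by the partition of $Q$ into $X$, $N_X = P_X \cup Z_X$, $B$), substitute $|C'| = C'(N) + C'(B) + |\vec{w}|$ via \cref{prop:rem:prop}~\eqref{itm:rem:config:size}, then use $C'(N) \geq L(N)$ via \cref{prop:rem:prop}~\eqref{itm:rem:preservation}, then $C'(X) + C'(P_X) \leq |\vec{w}| + n$ via \cref{prop:rem:top:small}, and finally $L(N) = L(Z) = 2n$ from the definition $L \defeq \multiset{2n \cdot \q{0}, \q{f}}$, yielding $C'(Z_X) \geq 2n - n = n$.

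Once $C'(Z_X) \geq n$ is established, the remaining step is to move $n$ of those agents into the specific state $\q{0}$ using $\mathsf{swap}$ transitions. By \cref{prop:rem:prop}~\eqref{itm:rem:preservation} we also have $C'(N) \geq L(N) = 2n \geq n$, so there are enough agents available in $N$; concretely, for each agent sitting in some $\q{0}_x$ we can fire $\mathsf{swap}_{\q{0},\q{0}}^x \colon \multiset{\q{0}, \q{0}_x} \mapsto \multiset{\q{0}_x, \q{0}}$ — or more precisely pair each such agent with an agent genuinely in $\q{0}$ (or in $N$) to perform the swap — producing a configuration $D$ with $C' \trans{T_\infty^*} D$ and $D(\q{0}) \geq n$. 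Finally I would observe that $\mathsf{swap}$ transitions preserve cleanness: they only relocate agents between $q$ and $q_x$ without changing $\val$ or the total count over $P_X$, so the clean-ness conditions on powers of two are untouched. Hence $D$ is the desired clean configuration reachable via permanent transitions with at least $n$ agents in $\q{0}$.

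The main obstacle, such as it is, is purely bookkeeping: making sure the $\mathsf{swap}$ step is legitimate, i.e.\ that whenever $C'(Z_X) \geq n$ we can actually realize $n$ swaps into $\q{0}$ without running out of partner agents in $N$. This is handled by the simultaneous bound $C'(N) \geq 2n$, which guarantees that even after some of the $Z_X$ agents are themselves in $\q{0}$, there remain enough agents in $N$ to act as swap partners; and since we only need $n$ agents in $\q{0}$, not all $2n$, there is slack. Everything else is a direct transcription of the threshold-case argument with $\{+2^i, -2^i\}$ collapsed to $\{2^i\}$, $L(N) = 2n$ unchanged, and \cref{prop:rem:prop,prop:rem:top:small,prop:rem:cleaning} standing in for their threshold analogues.
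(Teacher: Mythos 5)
Your proposal is correct and follows essentially the same route as the paper's proof: apply \cref{prop:rem:cleaning}, derive $C'(Z_X)\geq n$ from the same chain of (in)equalities using \cref{prop:rem:prop} and \cref{prop:rem:top:small} together with $L(N)=2n$, and then transfer $n$ agents into $\q{0}$ via cleanness-preserving $\mathsf{swap}$ transitions. Your extra remark about having enough untagged swap partners is a harmless elaboration of a point the paper leaves implicit.
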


\begin{proof}
	By \cref{prop:rem:cleaning}, there exist a clean configuration $C'$ such 
	that $C \trans{\pi' \in T_\infty^*} C'$. Let us first prove that $C'(Z_X) \geq n$. 
	Note that $\pi\pi'$ is an initialization sequence with effective input 
	$\vec{w}$ such that $L \trans{\pi\pi'} C'$. Thus:
	\begin{align*}
		C'(Z_X)
		&= |C'| - C'(X) - C'(P_X) - C'(B)
		&& \text{(by def.\ of $Q$)} \\
		&= (C'(N) + C'(B) + |\vec{w}|) - C'(X) - C'(P_X) - C'(B)
		&& \text{(by Prop.~\ref{prop:rem:prop}~\eqref{itm:rem:config:size})} \\    
		&\geq (L(N) + C'(B) + |\vec{w}|) - C'(X) - C'(P_X) - C'(B)
		&& \text{(by Prop.~\ref{prop:rem:prop}~\eqref{itm:rem:preservation})} \\    
		&\geq (L(N) + C'(B) + |\vec{w}|) - (|\vec{w}| + n) - C'(B)
		&& \text{(by Prop.~\ref{prop:rem:top:small})} \\
		&= L(N) - n \\
		&\geq n
		&& \text{(by def.\ of $L$)}
	\end{align*}
	Now, by \cref{prop:rem:prop}~\eqref{itm:rem:preservation}, we have
	$C'(N) \geq L(N) \geq n$. Thus, using $\mathsf{swap}$ transitions, we
	can swap $n$ agents from $Z_X$ to $\q{0}$. This way, we obtain a
	configuration $D$ such that $C' \trans{T_{\infty}^*} D$ and $D(\q{0}) \geq
	n$. We are done since $\mathsf{swap}$ transitions preserve
	cleanness.
\end{proof}
We now show that, once dynamic initialization has terminated, fair executions 
stabilize to configurations of a certain ``normal form''.

\begin{proposition}\label{prop:remainder:stabilize}
	For every initialization sequence $\pi$ with effective input $\vec{w}$ such that 
	$L' \trans{\pi} C$ for some $L' \succeq L$ and for every fair execution 
	$\sigma$ of $\PP_\infty$ starting from $C$, there exist 
	$i \in \N, r \geq 0$ such that
	\begin{enumerate}
		\item $\sigma_i(X) = \sigma_{i+1}(X) = \cdots = 0$,
		
		\item $\val(\sigma_i) = \val(\sigma_{i+1}) = \cdots =
		r < m$.
	\end{enumerate}
\end{proposition}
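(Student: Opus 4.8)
\emph{Proof plan.} The plan is to transcribe the proof of \cref{prop:threshold:stabilize} almost verbatim, with the $\mathsf{modulo}$ transitions taking over the role played there by $\mathsf{cancel}$; in fact the argument is slightly simpler, since all values in $\PP_{\text{rem}}$ are non-negative and there is no analogue of the ``$m_+ = 0 \lor m_- = 0$'' alternative. First I would show that the input states eventually stay empty. No transition of $T_\infty$ increases the number of agents in the input states $X$ (the only one touching $X$ is $\mathsf{add}_x$, which decreases it), so $\sigma_0(X) \ge \sigma_1(X) \ge \cdots$ is eventually equal to some constant $c$, and I would prove $c = 0$ by contradiction. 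If $c > 0$, fix $j$ with $\sigma_j(X) = c$; by \cref{cor:rem:enough:zero} one reaches from $\sigma_j$ a clean configuration $C_j$ with $C_j(\q{0}) \ge n$, and since $C_j(X) = c > 0$ some $\mathsf{add}_x$ is enabled at $C_j$, giving $\sigma_j \trans{*} D_j$ with $D_j(X) = c - 1$. The extra observation needed is that every transition conserves the number of agents, so $|\sigma_j|$ equals the fixed size $|C|$ and only finitely many configurations occur along $\sigma$; hence some single configuration $D$ with $D(X) < c$ equals $D_j$ for infinitely many $j$, so $\sigma_j \trans{*} D$ for infinitely many $j$, and fairness forces $\sigma_j = D$ for infinitely many $j$, contradicting $\sigma_j(X) = c$ for all large $j$. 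Thus $\sigma_i(X) = 0$ from some index $i$ on, which gives item~(1).

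Next I would show the value stabilizes below $m$. Once the input states are empty, all $\mathsf{add}$ transitions are permanently disabled; of the remaining permanent transitions, $\mathsf{up}_i$, $\mathsf{down}_i$ and $\mathsf{swap}$ preserve $\val$, while $\mathsf{equal}$ and $\mathsf{false}$ only move agents between the value-$0$ states $\q{f}, \q{t}$, and $\mathsf{modulo}$ decreases $\val$ by exactly $m$. Hence $\val(\sigma_j)$ is non-increasing and bounded below by $0$, so it stabilizes to some $r \ge 0$ from some index $i$ on (enlarging $i$ to also cover item~(1)). It remains to show $r < m$, again by contradiction: suppose $r \ge m$. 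For each $j \ge i$, \cref{cor:rem:enough:zero} gives a clean $C_j$ with $\sigma_j \trans{T_\infty^*} C_j$, $C_j(\q{0}) \ge n$ and $\val(C_j) = r \ge m$ (the transitions used preserve $\val$). The combinatorial heart of the step is the claim that \emph{a clean configuration of value $\ge m$ with at least $n$ agents in $\q{0}$ can reach, using only $\mathsf{down}$ transitions, a configuration $\ge \mathrm{rep}(m)$}; this is proved exactly as the corresponding subclaim in the ``$m_+ \ge b$'' case of \cref{prop:threshold:stabilize} — repeatedly break down the highest power-of-two state carrying an agent whose exponent is not in $\mathrm{bits}(m)$, which is possible because $C_j$ is clean, has enough zeros, and $m \le \norm{\varphi} < 2^n$. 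Since $C_j$ also has an agent in some state of $B$ (by the preservation invariants of \cref{prop:rem:prop}), the matching $\mathsf{modulo}$ transition is then enabled, and firing it reaches a configuration of value $r - m < r$; as in the first step, finiteness of the reachable configuration set together with fairness then forces $\sigma_j$ itself to have value $< r$ for some $j > i$, contradicting stabilization at $r$. Hence $r < m$, giving item~(2).

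\emph{Main obstacle.} As in the threshold case, the subtle point is the bridge between fairness and what superficially looks like an unbounded state space: ``a useful configuration is reachable from infinitely many $\sigma_j$'' upgrades to ``the execution actually visits such a configuration'' only because every transition conserves agents, so after dynamic initialization the reachable part of the state space is finite and fairness applies. The combinatorial claim in the last step also requires a careful descending induction, but it is a direct transcription of the one already carried out for $\PP_{\text{thr}}$, so no essentially new idea is needed.
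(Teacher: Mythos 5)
Your proposal is correct and follows essentially the same route as the paper's proof: emptying the input states via \cref{cor:rem:enough:zero} and fairness, observing that only $\mathsf{modulo}$ changes the (non-negative, eventually non-increasing) value, and refuting $r \geq m$ by driving a clean configuration with enough agents in $\q{0}$ down to one containing $\mathrm{rep}(m)$ so that a $\mathsf{modulo}$ transition fires. Your explicit use of the finiteness of the reachable configuration set to turn ``reachable from infinitely many $\sigma_j$'' into ``visited infinitely often'' is exactly the fairness bridge the paper relies on (and states a bit more tersely).
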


\begin{proof}
	For the sake of contradiction, assume there exist infinitely many
	indices $j$ such that $\sigma_j(X) > 0$. Let $j \in \N$ be such an
	index. By \cref{cor:rem:enough:zero}, there exists a configuration $C_j$
	such that $\sigma_j \trans{T_{\infty}^*} C_j$ and $C_j(\q{0}) \geq n$. Hence,
	there exists $x \in X$ such that transition $\mathsf{add}_x$ is
	enabled in $C_j$. Since this holds for infinitely many indices and
	since $X$ is finite, fairness implies that some $\mathsf{add}$
	transition can be enabled infinitely often and hence occurs
	infinitely often along $\sigma$. This is impossible since the number
	of agents in $X$ cannot be increased by any transition in $T_\infty$, 
	and thus would eventually drop below zero. 
	Therefore, there exists $h \in \N$ such that
	$\sigma_h(X) = \sigma_{h+1}(X) = \cdots = 0$.
	
	Transitions $\mathsf{modulo_{\q{t}}}$ and $\mathsf{modulo_{\q{f}}}$ 
	reduce the value of a configuration by $m > 0$. 
	$\mathsf{add}$ transitions are disabled in every configuration $\val(\sigma_{i})$ 
	with $i \geq h$ and all other transitions in $T_\infty$ 
	do not change the value of a configuration. 
	Moreover, the value of a configuration is always non-negative. 
	Thus, there exist $i \geq h$, $r \geq 0$ such that 
	$\val(\sigma_{i}) =\val(\sigma_{i+1}) = \cdots = r$. 
	
	For the sake of contradiction, assume that $r \geq m$. 
	As the execution is infinite but there are only finitely many 
	different configurations for a fixed number of agents, there 
	exists a configuration $D$ with $D(X) = 0$ and $\val(D) = r$ 
	that occurs infinitely often in $\sigma$. 
	We claim that $D$ can reach a configuration that enables a 
	$\mathsf{modulo}$ transition. 
	This claim, together with fairness, yields a contradiction because 
	the overall value would drop below $r$. 
	
	Let us prove the claim. By \cref{cor:rem:enough:zero}, we have $D \trans{*} D'$ 
	where $D'$ is clean and $D'(\q{0}) \geq n$. 
	Furthermore, $D'(X) = 0$ as $D(X) = 0$ and the number of agents in $X$ cannot be increased. 
	By \cref{prop:rem:prop} \eqref{itm:rem:preservation}, we have $D'(B) \geq L(B) = 1$. 
	Thus, it suffices to show that $D' \trans{T_{\infty}^*} D''$ for some $D'' \geq \mathrm{rep}(m)$.
        
	If $\val(D') = m$, then $D'$ contains precisely the binary representation of $m$, 
	because $D'$ is clean and $D'(X) = 0$. 
	Hence, $D' \geq \mathrm{rep}(m)$. Thus, assume $\val(D') > m$. 
	Let $0 \leq d \leq n$ be the largest exponent for which there exists a state 
	$q \in P_X$ such that $D(q) > 0$, $\val(q) = 2^d$ and $d \not\in \mathrm{bits}(m)$. 
	Since $D'(\q{0}) \geq n \geq d$, the sequence of transitions $\mathsf{down}_d \cdot 
	\mathsf{down}_{d-1} \cdots \mathsf{down}_1$ can be fired from $D'$, which yields a 
	configuration $D'' \geq \mathrm{rep}(m)$.
\end{proof}

\parag{Main proof} We are now ready to prove that $\PP_{\text{rem}}$
works as intended.

\begin{theorem}
	$\PP_{\text{rem}}$ computes $\varphi$ under reversible dynamic initialization.
\end{theorem}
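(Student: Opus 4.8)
\emph{Approach.} The plan is to mimic, essentially verbatim, the proof given above for $\PP_{\text{thr}}$: first establish that $\PP_{\text{rem}}$ is input reversible, then show that $\PP_\infty$ correctly computes $\varphi$ from every dynamically reached configuration, using \cref{prop:remainder:stabilize} together with a case distinction on the stabilised value. The only structural difference from the threshold case is that the two-sided bookkeeping $\val^+,\val^-$ collapses to the single nonnegative quantity $\val$, which is kept below $m$ by the $\mathsf{modulo}$ transitions rather than annihilated by $\mathsf{cancel}$ transitions. For input reversibility, the bound $C(I(x)) = C(x) \le \vec{w}(x)$ is immediate from \cref{prop:rem:prop}\eqref{itm:rem:preservation:x}. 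For the reversal property, I would define, for every configuration $C$, the configuration $\mathrm{false}(C)$ obtained by moving all agents from $\q{t}$ to $\q{f}$; then $\mathrm{false}(C) \in [C]$ and $C \trans{\mathsf{reset}^{C(\q{t})}} \mathrm{false}(C)$, so it suffices to prove $\mathrm{false}(D) \trans{T^*} \mathrm{false}(C)$ whenever $C \trans{T^*} D$. This is done step by step: for each step $C_i \trans{t_i} C_{i+1}$ one exhibits $t_i' \in T \cup \{\varepsilon\}$ with $\mathrm{false}(C_{i+1}) \trans{t_i'} \mathrm{false}(C_i)$, taking $t_i' = \varepsilon$ for $t_i \in \{\mathsf{equal}, \mathsf{false}, \mathsf{reset}\}$ (these do not change $\mathrm{false}(\cdot)$) and otherwise using the reversal pairs $\mathsf{add}_x \leftrightarrow \mathsf{add}^{-1}_{x,\q{f}}$, $\mathsf{up}_i \leftrightarrow \mathsf{down}_{i+1}$, $\mathsf{modulo}_q \leftrightarrow \mathsf{modulo}^{-1}_{\q{f}}$, $\mathsf{swap}^x_{p,q} \leftrightarrow \mathsf{swap}^x_{q,p}$. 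As in the threshold case, $t_i'$ need not be the literal inverse of $t_i$ over $\{\q{f},\q{t}\}$ (e.g.\ $\mathsf{add}^{-1}$ and $\mathsf{modulo}^{-1}$ require an agent in $\q{f}$), but this is harmless because $\mathrm{false}(C_{i+1})(\q{f}) = C_{i+1}(B) \ge L(B) > 0$ by \cref{prop:rem:prop}\eqref{itm:rem:preservation}.

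\emph{Correctness.} Fix an initialization sequence $\pi$ with effective input $\vec{w}$, a configuration $C$ with $L' \trans{\pi} C$ for some $L' \succeq L$, and a fair execution $\sigma$ of $\PP_\infty$ from $C$. By \cref{prop:remainder:stabilize} there are $i \in \N$ and $0 \le r < m$ with $\sigma_i(X) = \sigma_{i+1}(X) = \cdots = 0$ and $\val(\sigma_i) = \val(\sigma_{i+1}) = \cdots = r$; in particular no $\mathsf{modulo}$ transition fires from index $i$ onward. I would then split on whether $r \ge b$ (recall $0 < b < m$). If $r \ge b$, the goal is $O(\sigma) = 1$, so it suffices to show $\sigma_j(\q{f}) = 0$ for some $j \ge i$, since then $\q{f}$ stays empty (the only $\PP_\infty$-transitions producing a $\q{f}$ are $\mathsf{modulo}$, now disabled, and $\mathsf{false}$, which itself needs a $\q{f}$) while $\q{t}$ stays nonempty. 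If instead $\sigma_j(\q{f}) > 0$ for every $j \ge i$, then from each such $\sigma_j$ one reaches, by \cref{cor:rem:enough:zero} followed by suitable $\mathsf{down}$ transitions -- exactly as in the proof of \cref{prop:remainder:stabilize}, but exposing $\mathrm{rep}(b)$ instead of $\mathrm{rep}(m)$ -- a configuration that is $\ge \mathrm{rep}(b)$ and still has an agent in $\q{f}$, hence enables $\mathsf{equal}$; since repeatedly firing $\mathsf{equal}$ empties $\q{f}$, fairness forces some $\sigma_j(\q{f}) = 0$, a contradiction.

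If $r < b$, the goal is $O(\sigma) = 0$. Here $\mathsf{equal}$ is disabled from index $i$ onward, since $\sigma_j \ge \mathrm{rep}(b)$ would give $\val(\sigma_j) \ge b$; hence $\q{t}$ is never repopulated after $i$, and it remains to exhibit infinitely many $j$ with $\sigma_j(\q{f}) > 0$: given this, fairness lets $\mathsf{false}$ drain $\q{t}$ permanently, and $\q{f}$ then stays nonempty since $B$ is never emptied (\cref{prop:rem:prop}\eqref{itm:rem:preservation}). If only finitely many such $j$ existed, let $j$ be the largest index of $\sigma' \defeq \pi\sigma$ with $\sigma'_j(\q{f}) > 0$ (it exists since $L(\q{f}) > 0$); the only transition that decreases the number of agents in $\q{f}$ is $\mathsf{equal}$, so $\sigma'_j \trans{\mathsf{equal}} \sigma'_{j+1}$ and thus $\val(\sigma'_j) \ge b$; since the value is eventually $r < b$, some $\mathsf{modulo}$ or $\mathsf{add}^{-1}$ transition must fire at an index $> j$, and each of these places an agent in $\q{f}$, contradicting the maximality of $j$.

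\emph{Conclusion and main obstacle.} It finally remains to identify $O(\sigma)$ with $\varphi(\vec{w})$: by \cref{prop:rem:prop}\eqref{itm:rem:val:inv} we have $\val(\sigma_i) \equiv_m \vec{a} \cdot \vec{w}$, and $\val(\sigma_i) = r < m$, so $r = (\vec{a} \cdot \vec{w}) \bmod m$ and hence $\varphi(\vec{w}) = 1 \iff r \ge b$, which is precisely the dichotomy established above; thus $O(\sigma) = \varphi(\vec{w})$. I expect the input-reversibility step to be the main obstacle -- in particular verifying, case by case, that the reversal map is always firable, which hinges on a $\q{f}$-agent being permanently available via \cref{prop:rem:prop}\eqref{itm:rem:preservation} -- whereas the correctness half is a routine adaptation of the threshold argument.
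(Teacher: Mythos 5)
Your proposal is correct and follows essentially the same route as the paper's own proof: the same input-reversibility argument via $\mathrm{false}(\cdot)$ and the step-by-step reversal table (with firability guaranteed by \cref{prop:rem:prop}\eqref{itm:rem:preservation}), the same case split on $r \ge b$ versus $r < b$ using \cref{prop:remainder:stabilize}, and the same concluding identification $r = (\vec{a}\cdot\vec{w}) \bmod m$. Your remark that one should expose $\mathrm{rep}(b)$ rather than $\mathrm{rep}(m)$ in the $r \ge b$ case is in fact a small correction to a copy-paste slip in the paper's version of that argument.
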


\begin{proof}
	We first show that $\PP_{\text{rem}}$ is input reversible, and then
	that it correctly computes $\varphi$.
	
	\medskip\noindent\emph{Input reversibility.} Let $\pi$ be an
	initialization sequence with effective input $\vec{w}$ such that
	$L' \trans{\pi} C$ for some $L' \succeq L$. By
	\cref{prop:rem:prop}~\eqref{itm:rem:preservation:x}, we have $C(I(x)) =
	C(x) = \vec{w}(x) - C(N_x) \leq \vec{w}(x)$ for every $x \in X$,
	which proves the first required property.
	
	For every configuration $C$, let $\false(C) \defeq D$ where $D(t)
	\defeq 0$, $D(f) \defeq C(t) + C(f)$ and $D(q) \defeq C(q)$ for
	every $q \in Q \setminus \{f, t\}$. Observe that for every
	configuration $C$, the following holds:
	\begin{align}
		C \trans{\mathsf{reset}^{C(\q{t})}} \false(C) \text{ and }
		\false(C) \in [C].\label{eq:rem:reset:false}
	\end{align}
	It remains to show that if $C \trans{\pi_\triangleright \in T^*} D$
	and $D' \in [D]$, then $D' \trans{\pi_\triangleleft \in T^*} C'$ for
	some $C' \in \DiEq{C}$. By~\ref{eq:rem:reset:false}, it is enough to
	argue that $\false(D) \trans{T^*} \false(C)$.
	
	Let $C_i \trans{t_i} C_{i+1}$ be the $i^\text{th}$ step of
	$\pi_\triangleright$. We argue that $\false(C_{i+1}) \trans{t_i'}
	\false(C_i)$ for some $t_i' \in T \cup \{\varepsilon\}$. By
	induction, this implies $\false(D) \trans{T^*} \false(C)$ as
	desired. If $t_i$ is an $\mathsf{equal}$, $\mathsf{false}$ or
	$\mathsf{reset}$ transition, then we already have $\false(C_{i+1}) =
	\false(C_i)$. Otherwise we revert the step as follows, where ``$s
	\mapsto u$'' indicates that if $t_i = s$, then we reverse it with
	$t_i' = u$:
	\begin{align*}
		\mathsf{add}_x &\mapsto \mathsf{add}_{x,\q{f}}^{-1}
		&& \text{for every $x \in X$,} \\
		\mathsf{add}_{x,q}^{-1} &\mapsto \mathsf{add}_x
		&& \text{for every $x \in X$ and $q \in B$,} \\
		\mathsf{up}_i &\mapsto \mathsf{down}_{i+1}
		&& \text{for every $0 \leq i < n$,}\\
		\mathsf{down}_i &\mapsto \mathsf{up}_{i-1}
		&& \text{for every $0 < i \leq n$,} \\
		\mathsf{modulo}_{q} &\mapsto \mathsf{modulo}_{\q{f}}^{-1}
		&& \text{for every $q \in B$,} \\
		\mathsf{modulo}_{q}^{-1} &\mapsto \mathsf{modulo}_{\q{f}}
		&& \text{for every $q \in B$,}\\
		\mathsf{swap}_{p, q}^x &\mapsto \mathsf{swap}_{q, p}^x
		&& \text{for every $p, q \in N$ and $x \in X$.}
	\end{align*}
	Note that $t_i'$ is not the exact reverse transition of $t_i$, as it
	may differ over $B$. Indeed, $t_i'$ may require an agent in state
	$\q{f}$, which may not have been produced by $t_i$. However, this is
	not an issue since, by definition of $\false$ and by
	\cref{prop:thr:prop}~\eqref{itm:preservation}, we have:
	$$\false(C_{i+1})(\q{f}) = C_{i+1}(B) \geq L(B) > 0.$$ Thus, we have
	$\false(C_{i+1}) \trans{t_i'} \false(C_i)$ as desired, which
	completes the proof.

	\medskip\noindent\emph{Correctness.} Let $\pi$ be an initialization
	sequence with effective input $\vec{w}$ such that $L' \trans{\pi} C$ 
	for some $L' \succeq L$. 
	Let $\sigma$ be a fair execution of $\PP_\infty$ starting from $C$. 
	By \cref{prop:remainder:stabilize}, there exist $i \in \N, r \geq 0$ such that:
	\begin{enumerate}[(a)]
		\item $\sigma_i(X) = \sigma_{i+1}(X) = \cdots = 0$, \label{itm:rem:input:stab}
		
		\item $\val(\sigma_i) = \val(\sigma_{i+1}) = \cdots =
		r < m$. \label{itm:rem:val:stab}
	\end{enumerate}
	
	Let us first show that $O(\sigma) \in \{0, 1\}$. We make a case distinction on 
	whether $v \geq b$.\medskip
	
	\noindent\emph{Case $r \geq b$.} We show that $O(\sigma) = 1$. Note that $r < m$. 
	The $\mathsf{modulo}$ transitions reduce the overall value by $m$. 
	As the value of a configuration is never negative, no $\mathsf{modulo}$ transition 
	can be fired again. 
	Thus, it suffices to show that $\sigma_j(\q{f}) = 0$ for some $j \geq i$. 
	If $\sigma_i(\q{f}) = 0$ then we are done. 
	For the sake of contradiction, suppose this is not the case. 
	As $\sigma$ is infinite but there are only finitely many different configurations 
	for a fixed number of agents, there exists a configuration $D$ that occurs infinitely 
	often in $\sigma$ such that $D(X) = 0$, $\val(D) = r$ and $D(\q{f}) > 0$. 
	We claim that $D$ can reach a configuration that enables the transition $\mathsf{equal}$. 
	This claim, together with fairness, yields a contradiction because $\mathsf{equal}$ 
	can move all agents form state $\q{f}$ to state $\q{t}$.
	
	Let us prove the claim. 
	By \cref{cor:rem:enough:zero}, we have $D \trans{T_{\infty}^*} D'$ where $D'$ is clean and 
	$D'(\q{0}) \geq n$. 
	Furthermore, $D'(X) = 0$ as $D(X) = 0$ and the number of agents in $X$ cannot be 
	increased by transitions in $T_\infty$.  
	We show that $D' \trans{*} D''$ for some $D'' \geq \mathrm{rep}(m)$. 
	If $\val(D') = m$, then $D'$ contains precisely the binary representation of $m$, 
	because $D'$ is clean and $D'(X) = 0$. 
	Hence, $D' \geq \mathrm{rep}(m)$. Thus, assume $\val(D') > m$. 
	Let $0 \leq d \leq n$ be the largest exponent for which there exists a state $q \in P_X$ 
	such that $D(q) > 0$, $\val(q) = 2^d$ and $d \not\in \mathrm{bits}(m)$. 
	Since $D'(\q{0}) \geq n \geq d$, the sequence of transitions 
	$\mathsf{down}_d \cdot \mathsf{down}_{d-1} \cdots \mathsf{down}_1$ can be fired from $D'$, 
	which yields a configuration $D'' \geq \mathrm{rep}(m)$. 
	If $D''(\q{f}) = 0$, then the claim holds because the only transition that reduces the 
	number of agents in state $\q{f}$ is transition $\mathsf{equal}$. 
	If $D''(\q{f}) > 0$, then the claim holds because $D''$ enables $\mathsf{equal}$. \medskip
	
	\noindent\emph{Case $r < b$.} We show that $O(\sigma) = 0$. First note
	that $\mathsf{equal}$ is disabled in $\sigma_j$ for every
	$j \geq i$, as otherwise we would have
	$\sigma_j \geq \mathrm{rep}(b)$ which implies that $r
	= \val(\sigma_j) \geq b$. Thus, it suffices to show that there
	are infinitely many indices $j$ such that $\sigma_j(\q{f}) >
	0$. Indeed, if this is the case, then, by fairness, $\mathsf{false}$
	permanently moves all agents in state $\q{t}$ to state $\q{f}$.
	
	For the sake of contradiction, suppose the claim does not hold. 
	Let $\sigma' \defeq \pi\sigma$. Let $j \in \N$ be the largest index 
	such that $\sigma_j'(\q{f}) > 0$. 
	Note that this configuration exists as $\sigma_0' = L' \succeq L$ and $L(\q{f}) > 0$. 
	The only transition that reduces the number of agents in $\q{f}$ is $\mathsf{equal}$. 
	Thus, $\sigma_j' \trans{\mathsf{equal}} \sigma_{j+1}'$ and $\val(\sigma_j') \geq b$. 
	As finally, $r < b$, some $\mathsf{modulo}$ or $\mathsf{add}^{-1}$ transition 
	must be fired in $\sigma'_{j'}$ for some $j' > j$. 
	In both cases there is afterwards an agent in state $\q{f}$. 
	This contradicts the maximality of $j$. \medskip
	
	We are done proving $O(\sigma) \in \{0, 1\}$. 
	It remains to argue that $O(\sigma) = \varphi(\vec{w})$.
	We have $m > r = \val(\sigma_i) \equiv_m \vec{a} \cdot \vec{w}$ 
	by \cref{prop:rem:prop}~\eqref{itm:rem:val:inv}. 
	If $\vec{a} \cdot \vec{w} \geq b~(\text{mod } m)$, then $r \geq b$ and 
	hence $O(\sigma) = 1$, which is correct. 
	If $\vec{a} \cdot \vec{w} < b~(\text{mod } m)$, then $r < b$ and 
	hence $O(\sigma) = 0$, which is also correct.
\end{proof}


\section{Proofs of Section \ref{sec:small}: Protocols for small populations} \label{app:small}

\subsection{Proof of Theorem \ref{thm:removing:leaders}}
\propRemovingLeaders*

The proof proceeds in two steps. Lemma  \ref{lem:fixed-size} shows that, under the assumptions of the proposition, there is a protocol with one leader computing
$\varphi$ for all small populations. Lemma \ref{lemma:kill-the-leader} shows how to transform this protocol into a leaderless one. The bound on the number
of states follows directly from the composition of the bounds given in the lemmas.

\begin{lemma}\label{lem:fixed-size}
Let $\varphi$ be a predicate over a set of variables $X$ and let $\cutoffvar \in \N$.
Assume that for every $i \in \{2,3, \ldots, \ell-1\}$, there exists a protocol with at most one leader and at most $m$ states that computes $\cond{\varphi}{i}$.
Then there exists a protocol with one leader and $\O(\cutoffvar \cdot m \cdot |X|)$ states that computes
$\vec{x} < \cutoffvar \rightarrow \varphi(\vec{x})$.
\end{lemma}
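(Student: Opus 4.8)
The plan is to build a single protocol $\PP$ with one leader that internally ``guesses'' the population size and runs the appropriate size-specific protocol $\PP_i$. First I would set up the state space so that every agent permanently remembers its own initial input variable (the input mapping of $\PP$ is the identity on $X$), and so that the leader carries a counter ranging over $\{0,1,\ldots,\cutoffvar-1\}$. The protocol begins with a leader election: initially every agent is a provisional leader. When two provisional leaders meet, one survives and the other becomes a follower, and the surviving leader increments its counter by $1$ (so the counter records how many agents the leader has ``absorbed''). Crucially, whenever the counter reaches some value $i$ with $2 \le i < \cutoffvar$, the leader has certified the existence of at least $i+1$ agents (itself plus the $i$ it knocked out); it then resets itself and $i$ of those followers into the initial configuration of $\PP_{i+1}$ (each agent re-entering the state $I_{i+1}(x)$ dictated by its remembered input variable $x$), tagging those states so that a simulation of $\PP_{i+1}$ runs. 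If at any point the counter is bumped to $\cutoffvar-1$, the leader knows $|\vec v| \ge \cutoffvar$ and switches the whole population to a permanent $1$-consensus (absorbing all other agents into a sink state with output $1$).

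The key correctness argument splits on $|\vec v|$. If $|\vec v| \ge \cutoffvar$, some leader's counter eventually reaches $\cutoffvar-1$ by fairness (leader elections keep merging until one leader remains, and with $\ge \cutoffvar$ agents it absorbs $\ge \cutoffvar-1$ of them), triggering the permanent $1$-consensus, which is correct for the implication $\vec x < \cutoffvar \to \varphi(\vec x)$. If $|\vec v| = i < \cutoffvar$, then by fairness a unique leader eventually emerges with counter value exactly $i-1$ (it absorbed all other $i-1$ agents and no further leader-merge is possible), at which point it has reset the entire population of $i$ agents — itself included — into the initial configuration $C_{\vec v}$ of $\PP_i$, annotated by the remembered input variables. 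From then on $\PP$ faithfully simulates a fair execution of $\PP_i$ on input $\vec v$, and since $\PP_i$ computes $\cond{\varphi}{i}$, the execution stabilizes to $\varphi(\vec v)$. I would need to check that a ``late'' re-initialization (the leader jumping from $\PP_{j+1}$-simulation mode back to a fresh $\PP_{i}$-simulation when it absorbs yet another leader) correctly overwrites all agent states; this is handled by having the reset transition for counter value $i$ reset \emph{any} agent state (not just $\PP_{j+1}$-states) to the corresponding $\PP_{i+1}$-initial state.

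For the state count: the leader's state records its counter ($\O(\cutoffvar)$ values), which of the $\PP_i$ it is simulating, and its current $\PP_i$-state ($\O(m)$ values), so the leader has $\O(\cutoffvar \cdot m)$ states; a follower's state records its remembered input variable ($|X|$ values) together with its current simulated $\PP_i$-state ($\O(m)$ values) plus a bit for ``provisional leader / follower / absorbed'', giving $\O(m \cdot |X|)$; summing, $\O(\cutoffvar \cdot m \cdot |X|)$ states total, as claimed.

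\textbf{Main obstacle.} The delicate point is synchronizing the re-initialization: when the leader's counter advances from $i-1$ to $i$, it must atomically (over a sequence of pairwise interactions, hence \emph{not} atomically in one step) gather exactly $i$ followers, wipe their possibly-already-simulating states, and restart $\PP_{i+1}$ from scratch — all while making sure no ``stale'' agent from an earlier simulation attempt leaks into the new one and corrupts the output. I would resolve this by letting each agent carry a generation/epoch tag set by the leader, so that only agents stamped with the current epoch participate in the current simulation, and the leader refuses to read a consensus until all $i$ agents it needs have been re-stamped; fairness then guarantees the leader eventually succeeds in forming a clean initial configuration of $\PP_i$. The rest is routine bookkeeping.
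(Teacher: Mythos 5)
There is a genuine gap in the counting mechanism. You run a leader election in which every agent starts as a provisional leader and ``the surviving leader increments its counter by $1$'' on each merge, and you then claim that for $|\vec{v}|=i<\cutoffvar$ the unique surviving leader ends with counter exactly $i-1$ because ``it absorbed all other $i-1$ agents.'' That is false: the survivor only directly absorbs \emph{some} of the other agents; the rest were absorbed by provisional leaders that were themselves later absorbed, and their counts are lost. In a balanced elimination among $i$ agents the winner's counter can be as small as $\lceil \log_2 i\rceil$. Since the election terminates after finitely many steps, fairness cannot repair the counter afterwards, so your protocol may simulate $\PP_j$ for the wrong $j$ (whose behaviour on populations of size $i\neq j$ is completely unspecified), and in the case $|\vec{v}|\geq\cutoffvar$ it may never trigger the $1$-consensus at all. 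The standard fix — and the one the paper uses when it actually does a leader election, in the companion lemma that removes the leader — is to have the survivor's counter become the \emph{sum} of the two merging counters (each agent starting at $1$), so that the final counter equals the population size capped at $\cutoffvar$.

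Beyond that, your route is heavier than necessary for this particular lemma: the statement grants you one designated leader, so no election is needed. The paper's proof keeps the input mapping as the identity, lets the single leader increment its estimate only when it meets an agent still sitting in its bare input state $x\in X$ (such an agent is provably uncounted, and leaves that state once counted), and on each increment to $i$ re-initializes the met agents to the inputs of $\PP_i$ using their remembered variable; reaching estimate $\cutoffvar$ triggers the permanent $1$-consensus. Your ``epoch tag'' worry is then resolved for free, because the counter value $i$ itself identifies the current simulation and the $\PP_i$ have disjoint state spaces, so stale states are recognizable without an extra unbounded tag. I would rework the proof either by dropping the election entirely in favour of the designated leader, or, if you insist on electing, by summing counters on merges.
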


\begin{proof}
 Let $\PP_1, \PP_2, \ldots, \PP_\cutoffvar$ be such that each $\PP_i$ computes $\cond{\varphi}{i}$. Without loss of generality, assume the states of the $\PP_i$ to be pairwise disjoint.
We construct a protocol $\PP=(Q, T, L, X, I, O)$ with one leader and $\O(\cutoffvar \cdot m \cdot |X|)$ states that computes $\vec{x} \leq \cutoffvar \rightarrow \varphi(\vec{x})$. Intuitively, the protocol $\PP$ works as follows: the leader stores a lower-bound estimate of the current population size. When the leader meets a new agent it has not met, the leader increments its estimate. Whenever the estimate changes to some value $i$, the leader resets $i$ agents in the population to initial of $\PP_i$ and lets the agents simulate the computation of $\PP_i$. When the estimate reaches $\cutoffvar$, the leader knows that the precondition $|X| < \cutoffvar$ is not satisfied, and it converts every agent to $\top$, a state that converts any other state to $\top$, thus yielding a stable $1$-consensus. The agents' states are annotated with their initial input, which allows the leader to reset states to the correct value.

    \parag{States and associated mappings} Let $Q_i \defeq X \times Q^{\PP_i}$ for every $i \in [\cutoffvar]$, where $Q^{\PP_i}$ denotes the states of $\PP_i$. The leader assumes a state from  the \emph{leader states} defined as $Q_L \defeq \{0, 1, \ldots, \cutoffvar\} \times \{0, 1\}$. The states of $\PP$ are defined as:
    \begin{align*}
    Q \defeq X \cup (X \times (Q_1 \cup \ldots \cup Q_{\cutoffvar-1})) \cup \{\top\} \cup Q_L.
    \end{align*}
    For the size of the protocol we thus have
    \[|Q| = |X| + |(X \times (Q_1 \cup \ldots \cup Q_{\cutoffvar-1}))| + 1 + |Q_L| \leq |X| + |X| \cdot m \cdot l + 1 + 2\cdot \cutoffvar, \] which is in $\O(\cutoffvar \cdot m \cdot |X|)$.

    We set the leader multiset to:
    \begin{align*}
    L \defeq \multiset{(0,0)}.
    \end{align*}
    The input mapping $I$ is defined as the identity function. The output mapping is given by:
    \begin{align*}
    O(x)      & \defeq 0 &&  \text{ for every } x \in X,\\
    O((x, q)) & \defeq O^{\PP_i}(q) && \text{ for every } i \in [\cutoffvar] \text{ and every } (x, q) \in Q_i, \\
    O((i, b)) & \defeq b && \text{ for every } (i, b) \in Q_L,\\
    O(\top) & \defeq 1. &&
    \end{align*}

    \parag{Transitions} The set of transitions $T$ of $\PP$ is given by
    $T \defeq T_\top  \cup T_\mathsf{sim} \cup T_\mathsf{incr}$ where
    $T_\top$,  $T_\mathsf{sim}$, and $T_\mathsf{incr}$ are defined as follows.
    \begin{itemize}
    \item
    $T_\top$ contains precisely the transitions:
    \begin{alignat*}{3}
        \mathsf{true}:\ &&
        \multiset{\top, q}
        & \mapsto \multiset{\top, \top}
        && \qquad \text{for every $q \in Q$,} \\
        \mathsf{threshold}:\ &&
        \multiset{(\cutoffvar, b), q}
        & \mapsto \multiset{\top, \top}
        && \qquad \text{for every $b \in \{0, 1\}, q \in Q$.}
    \end{alignat*}
    Intuitively, $T_\top$ contains transitions that ensure stabilization to $1$ if  $|X| < \cutoffvar$ is not satisfied: $\mathsf{threshold}$ initiates converting everyone to $\top$ as soon as the threshold $\cutoffvar$ is reached in the leader agent. The transitions $\mathsf{true}$ then convert everyone to $\top$.

    \item
    $T_\mathsf{sim}$ is given by $\bigcup_{i \in [\cutoffvar]} T_{\mathsf{sim}_i}$, and
    $T_{\mathsf{sim}_i}$ contains precisely the following transitions for every $x, y \in X$ and every $t \colon (\multiset{q, r}, \multiset{q', r'}) \in T^{\PP_i}$:
    \begin{alignat*}{3}
        \mathsf{sim}:\ &&
        \multiset{(x, q), (y, r)}
        & \mapsto \multiset{(x, q'), (y, r')}
        && \qquad \text{}
    \end{alignat*}
    Intuitively, the transitions in $T_{\mathsf{sim}_i}$ simulate the transitions of the individual protocols $\PP_i$ in $\PP$.

    \item

    $T_\mathsf{conv}$ contains precisely the following transitions for every $x \in X$:
    \begin{alignat*}{3}
        \mathsf{incr}_i:\ &&
        \multiset{x, (i, b)} &
        \mapsto \multiset{(x, I^{\PP_{i+1}}(x)), (i+1, b)}
        && \qquad \text{for every $i \in [1, \cutoffvar-1], b \in \{0, 1\}$},\\
        \mathsf{conv}_i:\ &&
        \multiset{(x, q), (i, b)} &
        \mapsto \multiset{(x, I^{\PP_{i}}(x)), (i, b)}
        && \qquad \text{for every $i \in [1, \cutoffvar-1], q \not \in Q_i,$} \\
        \mathsf{bool}_i:\ &&
        \multiset{(x, q), (i, b)} &
        \mapsto \multiset{(x, q), (i, O(q))}
        && \qquad \text{for every $i \in [1, \cutoffvar-1], q \in Q_i$}.
    \end{alignat*}
    Intuitively, the transitions in $T_\mathsf{conv}$ implement interactions with the leader whose role is to convert every agent to the current protocol: $\mathsf{incr}_i$ and $\mathsf{conv}_i$ take care of converting agents to the next protocol, while  the transitions $\mathsf{bool}_i$ convert the leader's opinion to the opinion of the current protocol.
    \end{itemize}
\parag{Correctness}
    Before we prove correctness of $\PP$, we state without proof some propositions that follow by inspection of the transitions of $\PP$:
    \begin{proposition}
        \label{prop:inv_top}
        For every $C, C' \in \N^Q$, the following invariant holds:
        If $C \trans{} C'$, then $C'(\top) \geq C(\top)$.
    \end{proposition}
    \begin{proposition}
        \label{prop:conv}
        Let $\vec{v} \in \N^X$.
        In every fair run $\sigma$ of $\PP$ starting in $C_{\vec{v}}$,
        the transition $\mathsf{conv}_i$ is taken precisely once in $\sigma$ for every $i \leq \text{max}(|\vec{v}|, \cutoffvar)$.
    \end{proposition}
    \begin{proposition}
        \label{prop:stab}
        In every fair run $\sigma$ of $\PP$ such that $\sigma_k((i, 0)) + \sigma_k((i, 1)) = 1$ for all but finitely many indices $k$,
        we have for all but finitely many indices $k$ that  $\sigma_k(\vec{q}) = 0$ for every $\vec{q} \in Q \setminus (Q_l \cup (X \times Q_i))$.
    \end{proposition}

    We now prove correctness of $\PP$. Let $\vec{v} \in \N^X$ and let $\sigma$ be a fair execution of $\PP$ starting in $C_{\vec{v}}$.
    We consider two cases: $|\vec{v}| \geq \cutoffvar$ and $|\vec{v}| < \cutoffvar$.

    Let us first consider the case  $|\vec{v}| \geq \cutoffvar$.
    If $|\vec{v}| \geq \cutoffvar$, then by fairness of $\sigma$ and Proposition~\ref{prop:conv} we have that the transitions $\mathsf{incr}_i$ are fired $\cutoffvar$ times in $\sigma$, and thus there is some $j$ such that $\sigma_j((\cutoffvar, b)) > 0$ for some $b \in \{0, 1\}$. By fairness of $\sigma$ and by construction of $T$, we then have that $\sigma_j \trans{\mathsf{true}} \sigma_{j+1}$ and $\sigma_{j+1}(\top) > 0$ for some $j$, and by Proposition~\ref{prop:inv_top} $\mathsf{true}$ is fired infinitely often in $\sigma$, and thus $\sigma$ stabilizes to a configuration where every agent is in state $\top$. Thus, if $|\vec{v}| \geq \cutoffvar$, then $\PP$ stabilizes to output $1$.

    Now consider the case $|\vec{v}| = m < \cutoffvar$. By fairness and Proposition~\ref{prop:conv}, the transitions $\mathsf{incr}_i$ are fired $m$ times in $\sigma$, until every agent leaves its initial state. Let $j$ be the largest index such that $\sigma_j \trans{\mathsf{incr}_{m-1}} \sigma_{m}$. Since the transitions $\mathsf{incr}_i$ are the only transitions that change the first component of the leader, we have $\sigma_k((m, 1))=1$ or $\sigma_k((m, 0))=1$ for every $k > j$. From this, Proposition~\ref{prop:stab} and fairness of $\sigma$, we have that eventually all non-leader agents are in a state from $X \times Q_m$. The transitions $T_{\mathsf{sim}}$ then guarantee by fairness of $\sigma$ that the non-leader agents stabilize to the output of $\PP_m$, while the transition $\mathsf{bool}_m$ ensures that the leader agent stabilizes to the same output, and thus, since by assumption $\PP_m$ stabilizes to $\varphi(\vec{v})$ if $|\vec{v}| = m$, and $|\vec{v}| = m$ holds by assumption, we have $O(\sigma)=\varphi(\vec{v})$. This completes the proof.
\end{proof}

The following lemma shows how to get rid of the leaders in halting protocols.

\begin{lemma}
    \label{lemma:kill-the-leader}
    Let $\cutoffvar \in \N$.
    For every protocol $\PP = (Q, T, L, X, I, O)$  with leaders that computes some predicate $\varphi$, there exists a leaderless protocol $\PP'$  with $\O(|Q|^{|L|+1} \cdot |X| \cdot \cutoffvar^2)$ states that computes $(|\vec{x}| < \cutoffvar) \rightarrow \varphi(\vec{x})$.
\end{lemma}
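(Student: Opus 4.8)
The plan is to simulate the $|L|$ leaders of $\PP$ collectively by the population, using the same ``collective leader'' idea as in the proof sketch of \cref{thm:removing:leaders}: when the population is large enough ($|\vec{x}| \geq \cutoffvar$) we can afford to have several agents carry both their own role \emph{and} the role of one of the leaders, while for small populations we simply output $1$ (which is harmless, since the target predicate is $(|\vec{x}| < \cutoffvar) \rightarrow \varphi(\vec{x})$, vacuously true when $|\vec{x}| \geq \cutoffvar$ is false). First I would run a \emph{counting-and-assignment phase} entirely analogous to the first phase of \cref{thm:remove:helpers}: every agent starts in a state annotated with its input variable and a counter $1$; the transitions $\multiset{(x,i),(y,i)} \mapsto \multiset{(x,i{+}1),(y,i)}$ for $1 \leq i < \cutoffvar$ guarantee that exactly one agent is ``left behind'' at each level, so after this phase level $j$ is populated iff $|\vec{x}| \geq j$ (this is \cref{prop:helper:count}). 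If some agent reaches level $\cutoffvar$, then $|\vec{x}| \geq \cutoffvar$, and that agent triggers a flooding transition turning everybody into a permanent $1$-consensus state, exactly as in \cref{lem:fixed-size}; this handles the ``large input'' case.

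For the ``small input'' case ($|\vec{x}| < \cutoffvar$), once counting stabilizes at some level $i < \cutoffvar$ there is exactly one agent at level $i$ and one at each lower level. The $|L|$ leaders of $\PP$ are $h_1,\dots,h_{|L|} \in Q$ in a fixed order (as in \cref{app:rem:helpers}); I would use the agents occupying levels $1,\dots,|L|$ to take on leader roles $h_1,\dots,h_{|L|}$ respectively — this requires $|\vec{x}| \geq |L|$, so one builds $\cutoffvar$ to be at least $|L|+2$ and, for populations with $2 \leq |\vec{x}| < |L|$, falls back to a finite look-up table of fixed-size protocols (there are only finitely many such inputs and, by Theorem~2.3-style arguments, a constant-size protocol exists for each, or one simply notes these cases are absorbed since $\PP$ itself is a protocol with $|L|$ leaders and $|\vec{x}|$ can always be padded). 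Each of the $|L|$ selected agents enters a \emph{product state} $(q, h_j)$ where $q$ is its ``own'' simulated state (starting at $I(x)$) and $h_j$ is the leader state it carries; every other agent enters $(q)$ with just its own simulated state. The transition set of $\PP'$ then mirrors $T$: a transition $\vec{p} \mapsto \vec{q}$ of $\PP$ involving two of the simulated components (whether leader-carried or not) is realized by a corresponding $2$-way transition of $\PP'$ acting on the simulated components while leaving the carried-leader tags untouched, exactly as in the second phase of the sketch of \cref{thm:remove:helpers}. Since the $|L|$ leaders are present among the simulated components, and all remaining agents faithfully simulate ordinary $\PP$-agents starting from $I(\vec{x})$, the simulated configuration is $C_{\vec{x}}$ in $\PP$ up to the leader multiset, so by fairness $\PP'$ inherits the output $\varphi(\vec{x})$; the annotation of each agent with its input variable ensures the simulation starts from the correct initial states.

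The state count is dominated by the product states. An agent's state in the simulation phase is a pair consisting of a tag from $X$ (its input variable), a $\PP$-state for its own simulated role, and optionally one of the $|L|$ carried leader states — or, more uniformly, a tuple recording which subset of the $|L|$ leader slots it carries together with those leaders' current states; bounding crudely this is $\O(|Q|^{|L|+1} \cdot |X|)$ states, times the $\O(\cutoffvar)$ counter values and the $\O(\cutoffvar)$ leader-counter values from the \cref{lem:fixed-size}-style construction, giving $\O(|Q|^{|L|+1} \cdot |X| \cdot \cutoffvar^2)$ as claimed. The main obstacle I anticipate is not the construction but the bookkeeping in the correctness argument: one must carefully check, as in \cref{prop:helpers:reorder} and \cref{prop:helpers:exec}, that $\PP'$ may fire counting, assignment and simulation transitions out of order yet always admits an equivalent reordered run, and that the ``projection onto $\PP$-components'' of a fair run of $\PP'$ is a fair run of $\PP$ — the subtle point being that fairness must be argued \emph{after} all agents have been assigned, since before that the simulated configuration does not yet correspond to any $\PP$-configuration. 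This is exactly the pattern used in \cref{app:rem:helpers}, so I would adapt those propositions almost verbatim.
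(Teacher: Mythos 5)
Your large-input branch and the overall two-case structure match the paper, but your small-input branch has a genuine gap: you distribute the $|L|$ leader roles among the agents sitting at counting levels $1,\dots,|L|$, which is the mechanism of \cref{thm:remove:helpers} --- and that mechanism is only sound for \emph{helpers}, i.e.\ leaders whose duplication is harmless by definition. For an arbitrary leader multiset $L$, the predicate computed with $L' \succneqq L$ need not equal $\varphi$, so you must guarantee that \emph{exactly} one agent ever acts as $h_j$. The counting transitions do not give you this: at intermediate times a level $j$ can be populated by zero or several agents (three agents at level $1$ pass through a configuration with two agents at level $2$), and no agent can locally detect that counting has ``stabilized'' --- in the small-input case there is not even the level-$\cutoffvar$ trigger you rely on for the large-input case. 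If an agent commits to the role $h_j$ prematurely, a second agent may later also claim $h_j$, and both may already have interacted with the population; population-protocol computations cannot in general be undone, so the simulated configuration becomes unreachable from $C_{\vec{x}}$ in $\PP$ and correctness is lost.

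The paper's proof closes exactly this hole with two ingredients your proposal lacks. First, all $|L|$ leaders are packed into a \emph{single} agent whose state is a tuple in $Q^{|L|}\times[\cutoffvar]\times[\cutoffvar]\times X\times Q$ (whence the $\O(|Q|^{|L|+1}\cdot|X|\cdot\cutoffvar^2)$ bound, which your ``one leader slot per agent'' encoding does not naturally produce); this agent is chosen by an ordinary pairwise leader election, which by fairness converges to a unique winner. Second --- and this is the essential missing idea --- whenever the surviving leader agent's population-size estimate changes, it \emph{resets} the whole population to its initial configuration (each agent permanently remembers its input variable, and the leader freezes and re-activates agents while counting them up to $\cutoffvar$ to know when the reset is complete), so all garbage computed during the transient phase with wrong or duplicated leaders is erased before the genuine simulation of $\PP$ begins. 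Your proposal would need both of these (or an equivalent rollback mechanism) to be made correct; as written, the step ``once counting stabilizes, the agents at levels $1,\dots,|L|$ take on leader roles'' cannot be implemented by the agents themselves.
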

\begin{proof}
    Let $\PP = (Q, T, L, X, I, O)$ be a protocol, with $|L|$ leaders in states $\multiset{l_1, \ldots, l_{|L|}} = L$, that computes some predicate $\varphi$.
    We construct a leaderless protocol $\PP' = (Q', T', X, I', O')$, with $\O(|Q|^{|L|+1} \cdot |X| \cdot \cutoffvar^2)$ states, that computes $\varphi(\vec{x}) \lor (|\vec{x}| > \cutoffvar)$, which is equivalent to the desired predicate $(|\vec{x}| < \cutoffvar) \rightarrow \varphi(\vec{x})$.
    \newcommand{\frozen}{\mathtt{frozen}}
    \newcommand{\act}{\mathtt{active}}
    \newcommand{\leaderagent}{\vec{l}}
    \newcommand{\leader}{\mathtt{leader}}
    \newcommand{\popsize}{\mathtt{popsize}}
    \newcommand{\resetcounter}{\mathtt{resetcounter}}
    \newcommand{\init}{\mathtt{init}}
    \newcommand{\leaderstate}{\mathtt{leader}}

    The $|L|$ leaders are simulated by one agent we refer to as \emph{leader agent}. The leader agent is determined by a leader election. In general, the agents cannot know when the leader agent is finally elected, and so agents cannot wait for the leader election to be finished before starting their computation. However, as long as the population is sufficiently small, the leader agent may count the population size and reset the population to initial before starting the computation.

    The leader agent simulates both the $|L|$ leaders plus an additional regular agent in a \emph{multi-leader} state. Multi-leader states are defined as $Q_L \defeq \left(Q^{|L|} \times [\cutoffvar] \times [\cutoffvar] \times X \times Q\right)$. Multi-leader states contain a representation of the states of the $|L|$ leaders, plus meta-data needed for additional bookkeeping: The leader agent stores a lower-bound estimate of the population size and a number that indicates how many agents need to be reset to initial. The estimate of the population size indicates how many agents need to be reset after the leader agent has been elected, while resetting agents to initial ensures that the computation starts from a sane configuration. In the case where the estimate of the population size exceeds $\cutoffvar$, the leader agent moves to state $\top$ that converts everyone to true, thereby ensuring stabilization to consensus $1$. Multi-leader states are thus tuples of the form $\leaderagent = (\leaderstate_1, \ldots \leaderstate_{|L|}, \popsize, \resetcounter, \init, q) \in Q_L$ where:

    \begin{itemize}
        \item $\leaderstate_i$ is the current state of the $i^\text{th}$ leader simulated by the leader agent (where $1 \leq i \leq |L|$),

        \item $\popsize \in [\cutoffvar]$ is a counter for the population size,

        \item $\resetcounter \in [\cutoffvar]$ counts how many agents have been reset,

        \item $\init \in X$ stores the initial input of the regular agent simulated by the leader agent,

        \item $q \in Q$ stores the current state of the regular agent represented by the leader agent.
    \end{itemize}

    For every $\leaderagent \in Q_L$, we denote by $\leaderagent[\textit{attr}:= x]$ the state $\leaderagent'$ that is identical to $\leaderagent$, except that $\leaderagent'(\textit{attr}) = x$. For example, $\leaderagent[\popsize := 10]$ denotes the state $\leaderagent'$ that is identical to $\leaderagent$, except that $\leaderagent'(\popsize) = 10$.

    \parag{States} The set of states is
    \[Q' \defeq Q_L \cup \left(X \times Q \times \{\frozen, \act\}\right) \cup \{ \top \}.\]

\noindent An agent is thus either:
\begin{itemize}
\item a leader in a multi-leader state of $Q_L$;

\item a leader or non-leader in state $\top$, which converts every agent to $\top$; or

\item a non-leader in a state of the form $(x, q, s)$ with $s \in \{\frozen, \act\}$. The value of $x$ is the initial input the agent came from, and it never changes. The value of $q$ represents the current state of the agent from the original protocol. The value of $s$ determines whether the agent can interact with other non-leader agents: If $s = \frozen$, then the agent is ``frozen'' and cannot interact, otherwise it can interact freely with other agents.
\end{itemize}

\parag{Inputs} For every $x \in X$, we set the input mapping to:
    \[ I'(x) \defeq \left(l_1, \ldots, l_{|L|}, 1, 1, x, I(x)\right).  \]

\noindent So initially every agent is a leader agent.

\parag{Outputs} We set the opinions of the states to:
    \begin{align*}
        O'(\leaderagent) & \defeq O(\leaderagent(q)) && \text{ for every } \leaderagent \in Q_L, \\
        O'((i, q, x)) & \defeq O(q) && \text{ for every } (i, q, x) \in I\times Q \times \{\act, \frozen \}, \\
        O'(\top) & \defeq 1.
    \end{align*}

\parag{Election of the leader agent}
    For every $\leaderagent, \leaderagent' \in Q_L$ s.t.\ $\text{max}(\leaderagent(\popsize), \leaderagent'(\popsize)) < \cutoffvar$, we add the following transition to $T'$ :
    \[\multiset{\leaderagent, \leaderagent'} \mapsto \multiset{(l_1, \ldots, l_{|L|}, \leaderagent(\popsize) + \leaderagent'(\popsize), 1, \leaderagent(\init), \leaderagent(\init)),\ (\leaderagent'(\init), I(\leaderagent'(\init)), \frozen)}.
    \]
    This implements a leader election; by fairness, one leader agent eventually remains.

    \parag{Initiating conversion to $\top$}
    For every $\leaderagent, \leaderagent' \in Q_L$ s.t.\ $\text{max}(\leaderagent(\popsize), \leaderagent'(\popsize)) = \cutoffvar$, we add the following transition to $T'$:
    \[\multiset{\leaderagent, \leaderagent'} \mapsto \multiset{\top, \top}. \]
    This transition ensures that if the population size is at least $\cutoffvar$, then all agents are eventually converted to $\top$, thus yielding a stable $1$-consensus.

    \parag{Conversion to $1$-consensus}
    For every $q \in Q'$, we add the following transition to $T'$:
    \[\multiset{\top, q} \mapsto \multiset{\top, \top}. \]
    This transition ensures that all agents eventually move to to $\top$ when one agent reaches $\top$.

    \parag{Interactions with leaders}
    For every $\leaderagent \in Q_L$ s.t.\ $\leaderagent(\popsize) = \leaderagent(\resetcounter)$, every $x \in X$ and every $r \in Q$, we add the following transitions to $T'$:
    \begin{align*}
      & \multiset{\leaderagent, (x, r, \act)} \mapsto \multiset{\leaderagent[q := q'], (x, r', \act)} && \text{ for every } \multiset{\leaderagent(q), r} \mapsto \multiset{q', r'} \in T, \\[5pt]
      & \multiset{\leaderagent, (x, r, \act)} \mapsto \multiset{\leaderagent[\leader_i := l'], (x, r', \act)} && \text{ for every } 1 \leq i \leq |L| \colon  \\
      & && \multiset{\leaderagent(\leader_i), r} \mapsto \multiset{l', r'} \in T.
    \end{align*}
    This simulates interactions with leaders.

    \parag{Interactions among regular agents}
    For every $(x, q), (y, r) \in X \times Q$ such that $\multiset{q, r} \mapsto \multiset{q', r'} \in T$ for some $q', r'$, we add the following transition to $T'$:
    \[\multiset{(x, q, \act), (y, r, \act)} \mapsto \multiset{(x, q', \act), (y, r', \act)}.\]
    This simulates interactions between non-leader agents.

    \parag{Freezing agents}
    For every $\leaderagent \in Q_L$ such that $\leaderagent(\resetcounter) < \leaderagent(\popsize)$, and every $(x, q) \in X \times Q$, we add the following transitions to $T'$:
    \begin{align*}
      \multiset{\leaderagent, (x, q, \act)} & \mapsto \multiset{\leaderagent[\resetcounter := 1],\ (x, I(x), \frozen)}, \\[5pt]
      \multiset{\leaderagent, (x, q, \frozen)} & \mapsto \multiset{\leaderagent[\resetcounter := \leaderagent(\resetcounter) + 1],\ (x, q, \act)}.
      &&
    \end{align*}
    These transitions take care of freezing/activating agents and resetting agents to initial. Intuitively, the leader agent resets active agents by resetting their states to initial while simultaneously freezing them. Thus the following invariant is maintained: whenever an agent is frozen, it is in its initial state. The reset counter indicates how many frozen agents need to be activated: If the counter equals $i$, then $\leaderagent(\popsize) - i$ agents must be activated. Ideally, the leader agents resets the population by first freezing all agents, one after another, and then activating each agent one by one. Of course, this order of steps cannot be guaranteed, but if it is is violated, then by fairness $\leaderagent(\resetcounter)$ is eventually set to $1$, and freezing/resetting is reinitiated.
    \qedhere
\end{proof}

\subsection{Proof of Theorem \ref{lemma:halting-conj-disj}} \label{app:construction-halting-conj-dis}

\thmHaltingConjDis*

\newcommand{\leadertag}{\square}
\newcommand{\tags}{{Y_{\texttt{tag}}}}

\begin{proof}
	\newcommand{\pres}[1]{\text{pres}(#1)}
	Without loss of generality, we assume that the state sets of $\PP_1, \ldots, \PP_k$ are pairwise disjoint.
	For every $j \in [k]$, let $\q{f}_j$ and $\q{t}_j$ denote the output states of protocol $\PP_j$.
	Further let $l_j$ be the initial leader state of protocol $\PP_j$, i.e.\ $\multiset{l_j} = L_j$.

	Remember that our final protocol $\PP$ should evaluate the outcomes of the individual protocols in succession. To this end, we enrich all states of $Q_1 \cup \ldots \cup Q_k$
	with a tag in $\tags \defeq X \cup \left\{\leadertag\right\}$. Intuitively, each agent is ``tagged'' with the input variable it corresponds to or with $\leadertag$ if it was the leader. This way, when a protocol $\PP_j$ halts, one can rewind to the initial configuration and start the next protocol $\PP_{j+1}$.

	Formally, for a given protocol $\PP_j$, the \emph{tagged protocol} $\PP_j^Y$ is $(Q_j^Y, L_j^Y, X, T_j^Y, I_j^Y, O^Y_j)$ where
	\begin{align*}
		& Q_j^Y && \defeq Y \times Q_j, & \\
		& L_j^Y && \defeq \multiset{(\leadertag, l_j)}, &\\
		& T_j^Y &&\defeq \{\multiset{(x, q), (y, r)} \mapsto \multiset{(x, q'), (y, r')} \mid \multiset{q, r} \mapsto \multiset{q', r'} \in T_j \},& \\
		& I_j^Y(x)  && \defeq (x, I_j(x))  \qquad \text{for every $x \in X$}, \\
		& O_j^Y((x, q)) && \defeq O_j(q)  \qquad \text{for every $(x, q) \in Q_j^Y$.}
	\end{align*}

	Notice that $\PP_j^Y$ is no longer simple, because we will have multiple states $(x, \q{f_j})$
	with output $0$ and  $(x,\q{t_j})$ with output $1$, one per $x \in \tags$. We will say that the intermediate tagged protocols $\PP_j^Y$ are {\em tagged-simple}.
	However, it is easy to recover a simple halting protocol from a tagged-simple halting protocol: we can add two states $\q{f},\q{t}$ and transitions $\multiset{(x, \q{f})} \mapsto \multiset{\q{f}}$ and $\multiset{(x, \q{t})} \mapsto \multiset{\q{t}}$ for all $x \in \tags$. It thus suffices to show how to combine the individual tagged-simple protocols to a tagged-simple protocol of appropriate size.

\medskip

We show by induction on $\len(\varphi)$:
for every boolean combination of $\varphi$ of atomic predicates $\varphi_1, \ldots, \varphi_k$, there exists a tagged-simple halting protocol $\PP'$ with
$\O(|X| \cdot (\len(\varphi) + |Q_1| + \cdots + |Q_k|))$ states and one leader that computes $\cond{\varphi}{i}$. By the previous remark, the claim entails the theorem to be shown.
\medskip

\noindent The case $\len(\varphi)=0$ is trivial, since $\varphi$ is computed by $\PP^Y_j$ for some $j$ if $\len(\varphi)=0$ holds.

\noindent For the induction, consider $\varphi= \varphi_1 \oplus \varphi_2$
for $\oplus \in \{ \wedge, \vee \}$, and assume the existence of tagged-simple protocols $\PP'_1, \PP'_2$ that satisfy the claim for $\varphi_1$ and $\varphi_2$, respectively.
We construct a protocol $\PP_\oplus=(Q, T, L, X, I, O)$ that computes $\cond{\varphi}{i}$ as follows.\medskip

	\parag{States and associated mappings}
	We define states of $\PP_\oplus$ as:
	\[Q \defeq \left(Q'_1 \cup Q'_2 \right) \cup \{(x,\q{f}), (x,\q{t}) \mid x \in \tags\}.\] The leader multiset corresponds to the tagged leader multiset of $\PP'_1$: \[L \defeq \multiset{(\leadertag,l_1)}.\]
	The output mapping is given by $O((x, \q{f})) \defeq 0$, $O((x, \q{t})) \defeq 1$ for every $x \in Y$, and $O(\vec{q}) \defeq \bot$ for every other $\vec{q} \in Q$.
	The input mapping is defined as $I(x) \defeq I'_1(x)$ for every $x \in X$.\medskip

	\parag{Transitions} The set of transitions is $T=T'_1 \uplus T'_2 \uplus T'$, where $T'$ is constructed as follows:  For every
$(x,q) \in Q'_1$, $(y,r) \in Q'_2$, we add the following transitions to $T'$:
	\begin{align*}
	  \multiset{(x, q), (y, r)} & \mapsto
	  	\begin{cases}
	  		\multiset{I'_2(x), (y, r)} & \text{ if } x \not= \leadertag \\
	  		\multiset{(x, l'_2), (y, r)} & \text{ if } x = \leadertag
  		\end{cases}
	\end{align*}

	These transitions make sure that once at least one agent is promoted to a state in the higher protocol $\PP'_2$, all agents eventually simulate the execution of protocol $\PP'_2$.

	Moreover, we add the following transitions to $T'$: for every $x \in \tags$, depending on the operator $\oplus$:\smallskip
	\begin{itemize}
		\item If $\oplus = \land$, we add the following transitions for each $x \in \tags$:
		      \begin{align*}
		      \multiset{(x,\q{f}'_j)} &\mapsto \multiset{(x,\q{f})} && \text{ for every } j \in \{1,2\}, \\
		      \multiset{(x,\q{t}'_1)} & \mapsto
		      	\begin{cases}
		      		\multiset{I'_2(x)} & \text{ if } x \not= \leadertag \\
		      		\multiset{(x, l'_2)} & \text{ if } x = \leadertag
		      	\end{cases}, \\
		      \multiset{(x,\q{t}'_2)} & \mapsto \multiset{(x,\q{t})}
				.
		      \end{align*}
	\end{itemize}

	\begin{itemize}
		\item If $\oplus = \lor$, we add the following transitions for each $x \in \tags$:
		      \begin{align*}
		      \multiset{(x,\q{t}'_j)} &\mapsto \multiset{(x,\q{t})} && \text{ for every } j \in \{1,2\}, \\
		      \multiset{(x,\q{f}'_1)} & \mapsto
		      	\begin{cases}
		      		\multiset{I'_2(x)} & \text{ if } x \not= \leadertag \\
		      		\multiset{(x, l'_2)} & \text{ if } x = \leadertag
		      	\end{cases}, \\
		      \multiset{(x,\q{f}'_2)} & \mapsto \multiset{(x,\q{f})}
				.
		      \end{align*}

	\end{itemize}

These transitions ensure that once an output state is reached in the simulation of a given protocol, then either its output
	is returned as final output (in the case where $\oplus = \lor$ and output of the protocol is $1$, or $\oplus = \land$ and output of the protocol is $0$),
or the simulation of the second protocol is initiated, until its output is returned, and $\PP$ satisfies the claim by induction hypothesis.
Note that each inductive call adds $2 |\tags|$ states, which results in the bound given in our theorem.
\end{proof}

\subsection{Proof of Theorem \ref{thm:halting:threshold}}\label{app:constr-halting-threshold}

We consider only the case
$\varphi(\vec{x}, \vec{y}) \defeq  \vec{\alpha} \cdot \vec{x} - \vec{\beta}\cdot \vec{y} > 0$; the general case $> c \in \mathbb{Z}$
is easily adapted from there.
We explain later how to adapt the proof to handle remainders predicate
$\varphi(\vec{x}) \defeq \left( \vec{\alpha} \cdot \vec{x} \equiv_m b \right)$ with $m \in \N$ and  $0 \leq b \leq m$.

\thmHaltingThreshold*
 \begin{proof}
    \newcommand{\sgn}{\mathtt{sgn}}
    \newcommand{\tgt}{\mathtt{tgt}}
    \newcommand{\pos}{\mathtt{pos}}
    \renewcommand{\val}{\mathtt{val}}
    \newcommand{\bin}{\mathtt{bin}}
    \newcommand{\reset}{\mathtt{reset}}
    \newcommand{\incr}{\text{incr}}
    \newcommand{\met}{\texttt{met}}

Let $A \defeq \{\alpha_j : j \in \{1, \ldots, |\vec{\alpha}|\}\}$ and  $B \defeq \{\beta_j : j \in \{1, \ldots, |\vec{\beta}|\}\}$. Let $m$ be the maximal bit length of any number in the set $\{x_1 + \ldots + x_n : x_j \in A\} \cup \{y_1 + \ldots + y_n : y_j \in B\}$.
Note that $m \in \O\left(\log \left(i \cdot 2^{|\varphi|}\right)\right) = \O(|\varphi| + \log i)$.
For any $a < 2^{m}$, we write $\bin(a)$ to denote the least-significant-bit-first binary representation of $a$, padded to length $m$ with leading $0$s.
Whenever $\bin(a) = b_m b_{m-1} \ldots b_1$, we write $\bin(a)_j$ to denote $b_j$ for every $j \in [m]$.

Consider the sequential algorithm {\sc{Greater-Sum}$(X,Y)$} shown in Figure~\ref{fig_probe}. We have $\vec{\alpha} \cdot \vec{x} - \vec{\beta}\cdot \vec{y} > 0$ if{}f  {\sc{Greater-Sum}$(A,B)$} returns true. So it suffices to exhibit a protocol that simulates the execution of
{\sc{Greater-Sum}$(A,B)$} for inputs of size $i$.  Intuitively, the protocol has a leader that executes the procedure. The leader stores the values of the variables defined in $\textsc{Greater-Sum}$. Regular agents store the input and one additional bit that indicates whether the leader has met the agent in the current round. The leader can set and unset this bit, which permits the implementation of a \emph{for all} loop: the leader stores how many agent it has met in the current iteration of the loop. Whenever the leader encounters an agent whose bit is set to $0$, it flips the bit to $1$, increments its counter, and performs the variable updates defined in the body of the loop. When the counter value reaches $i$, the leader knows that the current iteration of the loop is complete. The leader then unsets all bits of the regular agents while decrementing its counter agent by agent, before starting the next iteration of the loop when the counter value reaches zero.

We now define the protocol  $\PP = (Q, L, T, I, O)$ formally.

    \begin{figure}
    \begin{minipage}[t]{0.5\textwidth}
	\renewcommand{\algorithmicrequire}{\textbf{Input:}}
	\renewcommand{\algorithmicensure}{\textbf{Output:}}
	\begin{algorithmic}[1]
	\Require{\begin{tabular}[t]{l}
	Finite Multiset $Z \in \N^\N$, \\
	bit position $j \in [m]$
	\end{tabular}}
	\Ensure{$j^\text{th}$ bit of $\sum Z$}
	\Procedure{Probe}{$Z$, $j$}
		\State $\val \leftarrow 0$
		\For{$\pos = 1  \textbf{ to } j$}
			\State $\val \leftarrow \val \text{ div } 2$
			\ForAll{$z \in  Z$}
				\State $\val \leftarrow \val + \bin(z)_\pos$
			\EndFor
		\EndFor
		\State \Return{$\val \text{ mod } 2$}
	\EndProcedure
	\end{algorithmic}
	\end{minipage}
	\qquad
	\begin{minipage}[t]{0.5\textwidth}
	\renewcommand{\algorithmicrequire}{\textbf{Input:}}
	\renewcommand{\algorithmicensure}{\textbf{Output:}}
	\begin{algorithmic}[1]
	\Require{\begin{tabular}[t]{l} Finite multisets $X, Y \in \N^\N$ \\[0cm] {\ } \end{tabular}}
	\Ensure{boolean $\left(\sum X > \sum Y\right)$}
	\Procedure{Greater-Sum}{$X$, $Y$}
		\For{$\tgt = m \text{ to } 1$}
			\State $\val_X \leftarrow \mathtt{PROBE}(X, \tgt)$
			\State $\val_Y \leftarrow \mathtt{PROBE}(Y, \tgt)$
			\If{$\val_X \neq \val_Y$}
 				\State{\Return{$\val_X > \val_Y$}}
 			\EndIf
		\EndFor
		\State{\Return{false}}
	\EndProcedure
	\end{algorithmic}
	\end{minipage}
	\caption{Procedure {\sc{Probe}$(Z,j)$} probes the $j^\text{th}$ bit of the sum of the elements of $Z \in \N^\N$. It implements a binary adder, but instead of storing the result of the addition, it only keeps the carry in variable $\val$ when moving from one bit position to the next. \\
	Procedure {\sc{Greater-Sum}$(X,Y)$} compares the sums of the elements of $X, Y \in \N^\N$. It probes the bits of the two sums, starting with the most-significant bit, until it finds the first position at which the bits of the two sums differ. If there is no such position, the sums are equal and the algorithm returns \emph{false}.}
	\label{fig_probe}
	\end{figure}

    \parag{States}
    Let $Q \defeq Q' \cup Q_L$ where $Q' \defeq (A \cup B) \times \{0, 1\}$ is the set of states for regular agents, and $\{\q{f}, \q{t}\} \subseteq Q_L$ is a set of leader states yet to be specified.

    The leader multiset $L$ contains exactly one leader.
    Unless the leader is in one of the output states $\{\q{f}, \q{t}\}$, it stores the following values:
    \begin{itemize}
        \item $\tgt \in [m]$ : the target bit position to be probed; corresponds to the loop counter  $\tgt$ in line 2 of Procedure $\textsc{Greater-Sum}$.

        \item $\pos \in [m]$: the current bit position; corresponds to the loop counter $\pos$ in line 3 of Procedure $\textsc{Probe}$.

        \item $\met \in \{0, \ldots, i-1\}$: the number of agents the leader has met in the current round. This is needed for the implementation of the $\emph{forall}$ loop in Procedure $\textsc{Probe}$.

        \item $\reset \in \{0, 1\}$: indicates whether the bit flag of each regular agent should be reset.

        \item $\val_Z \in [i \cdot m]$ for every $Z \in \{X, Y\}$: storage for sum of bits from binary representations of numbers in $A$ and $B$, respectively; corresponds to $\val_X$ $\val_Y$ in Procedure $\textsc{Greater-Sum}$, and $\val$ in Procedure $\textsc{Probe}$.
    \end{itemize}

    Initially, the variables of the leader are set as follows: $\tgt = m, \pos = 1, \met = 0, \reset = 0, \val_{X} = \val_{Y} = 0$.
    Note that this corresponds to the initial values of the variables in the procedures $\textsc{probe}$ and $\textsc{Greater-Sum}$. Thus, we set:
    \begin{align*}
    Q_L &\defeq [m] \times [m] \times \{0, \ldots, i-1\} \times \{0, 1\} \times [i \cdot m], \\
    L &\defeq \multiset{(m, 1, 0, 0, 0, 0)}.
    \end{align*}

  	\parag{Size} The number of states is $|Q| = |Q'| + |Q_L| = 2*|A \cup B| + 2m^3i^2 \in \O(i^2 (|\varphi| + \log i)^3)$.

    \parag{Input and output mappings}
    We define the input mapping $I$ as:
    \begin{align*}
        I(x_j) &\defeq (\alpha_i, 0) && \text{ for every } 1 \leq j \leq |\vec{\alpha}|, \\
        I(y_j) &\defeq (\beta_i, 0) && \text{ for every } 1 \leq j \leq |\vec{\beta}|.
    \end{align*}
    The output mapping $O$ is defined as:
    \begin{align*}
        O(\q{f}) &\defeq 0, \\
        O(\q{t}) &\defeq 1, \\
        O(q) &\defeq \bot && \text{ for every } q \in Q \setminus \{\q{f}, \q{t}\}.
    \end{align*}

    \parag{Transitions}
    For a state $\vec{q} =(\gamma, b) \in Q'$, let $\vec{q}(\gamma)$ denote $\gamma$ and let $\vec{q}(b)$ denote $b$.

    To implement resetting the bit flag of the regular agents, we add the following transitions for every
    $\vec{q} \in Q'$ and $\vec{l} \in Q_L$ where $\vec{q}(b) = 1$ and $\vec{l}(\reset) = 1$:
    \[\multiset{\vec{q},\ \vec{l}} \mapsto \multiset{\vec{q}[b := 0],\ \vec{l}[\met := (\vec{l}(\met) + 1 \text{ mod } (i-1)); \ \reset := \text{min}(1, (\vec{l}(\met)
     + 1) \text{ mod } i )]}. \]

     \begin{table}[t]
     \begin{center}
    \begin{tabular}{|l|l|p{2cm}|}
        \hline
        \multicolumn{1}{|c|}{Conditions satisfied by $\vec{l}$} &
        \multicolumn{1}{|c|}{Value of $\vec{l}'$} &
        \multicolumn{1}{|c|}{Corresponds to}\tabularnewline
        \hline
        \hline
        \parbox{0.5cm}{\begin{align*}&\vec{l}(\met) < i-1\end{align*}} &
        \parbox{0.5cm}{\begin{align*}&\vec{l}_\incr[\met := \vec{l}(\met) + 1]\end{align*}}
         & \parbox[c]{\hsize}{Line 6 of Procedure $\textsc{probe}.$} \tabularnewline
        \hline
        \parbox{0.5cm}{\begin{align*}& \vec{l}(\met) = i-1\\
                                     & \vec{l}(\pos) < \vec{l}(\tgt)
                    \end{align*}} &
        \parbox{0.5cm}{\begin{align*}\vec{l}_\incr & [\val_X := \val_X \text{ div } 2; \\
                                    &  \val_Y := \val_Y \text{ div } 2; \\
                                    &  \pos := \vec{l}(\pos) + 1; \\
                                    &  \met := 0; \\
                                    & \reset := 1]
            \end{align*}} & \parbox[c]{\hsize}{Continuation of for loop in line 4 of Procedure $\textsc{probe}$.} \tabularnewline
        \hline
        \parbox{0.5cm}{\begin{align*}& \vec{l}(\met) = i-1,\\
                                     & \vec{l}(\pos) = \vec{l}(\tgt), \\
                                     & \vec{l}_\incr(\val_X) \neq \vec{l}_\incr(\val_Y)
                    \end{align*}}
         & \parbox{0.5cm}{\begin{align*} \q{t}\ & \text{ if } \vec{l}_\incr(\val_X) > \vec{l}_\incr(\val_Y),\\
                                                   \q{f}\ & \text{ otherwise.}
         \end{align*}} & \parbox[c]{\hsize}{Return statement in line 6 of Procedure $\textsc{Greater-Sum}$.} \tabularnewline
        \hline
        \parbox{0.5cm}{\begin{align*}& \vec{l}(\met) = i-1,\\
                                     & \vec{l}(\pos) = \vec{l}(\tgt) > 1, \\
                                     & \vec{l}_\incr(\val_X) = \vec{l}_\incr(\val_Y).
                    \end{align*}} &
                    \parbox{0.5cm}{
                    \begin{align*}
                    \vec{l} & [\tgt := \vec{l}(\tgt) - 1; \\
                    & \pos = 1; \\
                    & \met = 0; \\
                    & \reset = 1]
                    \end{align*}}
                      & \parbox[c]{\hsize}{Continuation of the for loop in Procedure $\textsc{Greater-Sum}$} \tabularnewline
        \hline
        \parbox{0.5cm}{\begin{align*}\text{Other}\end{align*}} & \parbox{0.5cm}{\begin{align*}\q{f}\end{align*}}  & \parbox[c][2cm]{\hsize}{Return statement in line 9 of Procedure $\textsc{Greater-Sum}$.} \tabularnewline
        \hline
    \end{tabular}
    \end{center}
    \vspace{0.5cm}
    \caption{Transitions of the protocol implementing {\sc Greater-Sum}.}
    \label{table:transitions}
\end{table}
    We now define the remaining transitions for the execution of procedure $\textsc{Greater-Sum}$.
    Let $\vec{q} \in Q'$ and $\vec{l} \in Q_L \setminus \{\q{f}, \q{t}\}$. Let us first establish some abbreviations.

    Let: \[Z \defeq \begin{cases} X & \text{ if } \vec{q}(\gamma) \in A, \\
                                  Y & \text{ if } \vec{q}(\gamma) \in B.
                   \end{cases}\]
    Further let $\vec{l}_\incr \defeq \vec{l}[\val_Z := \vec{l}(\val_Z) +\bin(\vec{q}(\gamma))_\pos]$.
    Intuitively, $\vec{l}_\incr$ represents the update to the leader state that results from the incrementation in line $6$ of Procedure $\textsc{probe}$.

    Whenever $\vec{q}(b) = 0$ and $\vec{l}(\reset) = 0$, we add the following transitions:
    \[\multiset{\vec{q}, \vec{l}} \mapsto \multiset{\vec{q}[b := 1], \vec{l}'}\]
    where $\vec{l}'$ is specified in Table~\ref{table:transitions}.
\end{proof}

\smallskip
\noindent \textbf{Remainder}.
Consider now a predicate $\varphi(\vec{x}) \defeq \left( \vec{\alpha} \cdot \vec{x} \equiv_m b \right)$ with $m \in \N$ and  $0 \leq b \leq m$. We show that for every $i \in \N$, there exists a halting protocol with one leader and $\O\left(\poly(|\varphi| + i) \right)$ states that computes $\cond{\varphi}{i}$.

The protocol in which a leader interacts with every other agent, storing in its state the value
of  $\vec{\alpha} \cdot \vec{v}' \bmod m$, where $\vec{v}'$ is the vector of the agents it has already interacted with,
does not work: For $m \in \Theta(|\varphi|)$, which can be the case, this requires $\O(2^{|\varphi|})$ states.
So we proceed in a different way.

\begin{theorem}
	\label{thm:halting:modulo}
Let $\varphi(\vec{x}) \defeq \left( \vec{\alpha} \cdot \vec{x} \equiv_m b \right)$ with $b, m \in \N$,  $0 \leq b < m$ and $\vec{\alpha} \in \Z^{|X|}$. For every $i \in \N$, there exists a halting protocol with one leader and
$\O\left(|X|\cdot i^3 (|\varphi| + \log i)^3)\right)$
states that computes $\cond{\varphi}{i}$.
\end{theorem}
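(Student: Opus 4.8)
The plan is to mimic the threshold construction from \cref{thm:halting:threshold}, replacing the bit-by-bit comparison of two sums by a modular accumulation that never stores a number larger than roughly $i \cdot \norm{\varphi}$. Concretely, I would first give a sequential algorithm $\textsc{Modulo}(\vec{x})$ that decides $\vec\alpha \cdot \vec x \equiv_m b$ using only $\O(\log(i\cdot\norm\varphi))$ bits of memory, then simulate it by a halting protocol with one leader and $i$ agents, exactly as in the proof of \cref{thm:halting:threshold}. The key observation is that, since we may assume $0 \le \vec\alpha(x) < m$ for every $x$ (replacing $\vec\alpha(x)$ by $\vec\alpha(x) \bmod m$ gives an equivalent predicate), the running sum $s := s + \vec\alpha(x) \pmod m$ stays bounded by $m \le 2^{|\varphi|}$, so it has $\O(|\varphi|)$ bits. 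However, storing $m$ itself is fine ($\O(|\varphi|)$ bits), and the leader only needs its current partial sum modulo $m$, a \emph{round counter} $\met \in \{0,\dots,i-1\}$ recording how many agents it has interacted with in the current pass, the $\reset$ flag, and the coefficient it just read. All of this fits in $\O(\poly(|\varphi|+\log i))$ states; multiplying by $|X|$ for the input tag on regular agents (needed only if we want a tagged-simple protocol for later composition via \cref{lemma:halting-conj-disj}) gives the claimed bound $\O(|X|\cdot i^3(|\varphi|+\log i)^3)$, which is loose but sufficient.

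First I would set up the protocol $\PP = (Q, L, T, I, O)$: regular agents have states $Q' = \{\vec\alpha(x) : x \in X\} \times \{0,1\}$, where the second component is the ``met in this pass'' flag, and the leader's non-output states are tuples $(\met, \reset, s, \gamma) \in \{0,\dots,i-1\} \times \{0,1\} \times \{0,\dots,m-1\} \times (\{\vec\alpha(x):x\in X\}\cup\{\bot\})$, together with the two output states $\q{f},\q{t}$. The leader makes a single pass over the population: whenever it meets a regular agent with flag $0$, it flips the flag to $1$, updates $s := (s + \gamma) \bmod m$ where $\gamma$ is that agent's coefficient, and increments $\met$. When $\met = i-1$ (the leader has now seen all other agents), it has computed $s = \vec\alpha\cdot\vec x \bmod m$, and it moves to $\q{t}$ if $s = b$ and to $\q{f}$ otherwise; since $\q{f},\q{t}$ are absorbing for the leader and convert all regular agents (as in \cref{thm:halting:threshold}), the protocol is halting. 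Note that here there is no outer loop over bit positions, so the construction is actually \emph{simpler} than the threshold case — a single sweep suffices because the accumulator $s$ directly holds the answer.

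The correctness argument follows the same pattern as in \cref{app:constr-halting-threshold}. I would state invariants: the flag of an agent is $1$ iff the leader has already incorporated that agent's coefficient into $s$ in the current pass; once the $\met$ counter reaches $i-1$, exactly the $i-1$ non-leader agents (there are $i$ agents total, one of which is the leader) have been counted, so $s = \vec\alpha\cdot\vec x \bmod m$; and the $\reset$ machinery (identical to the threshold protocol's reset transitions) is only needed if we ever want to restart, which in the modulo case we do not — a single pass is enough, so I can actually drop $\reset$ entirely, simplifying further. Fairness guarantees the leader eventually meets every agent with flag $0$, hence eventually $\met = i-1$ and the leader reaches the correct output state; from then on the configuration is a stable consensus. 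The halting conditions $C(\q{f})=0 \lor C(\q{t})=0$ and the absorption property are immediate from the transition structure since only the leader can enter $\q{f}$ or $\q{t}$ and such transitions are one-way.

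The main obstacle is purely bookkeeping: verifying that the $\met$ counter together with the ``met'' flag on regular agents really does implement a correct single sweep over exactly the non-leader population of size $i-1$, in particular that the leader never double-counts an agent and never terminates the pass prematurely, for \emph{every} fair interleaving — this is exactly the subtlety handled in the threshold proof and transfers verbatim. A secondary point to check is that the assumption $0 \le \vec\alpha(x) < m$ is without loss of generality (it is, by the footnote reasoning already used for $\PP_{\text{rem}}$ in \cref{subsec:app-remainder}), and that the general threshold $b$ with $0 \le b < m$ needs no special treatment since the comparison $s = b$ is a single state-equality test in the leader. The state count is dominated by $|Q_L| = \O(i \cdot m) = \O(i\cdot 2^{|\varphi|})$ in the naive estimate, but since we only need $s \in \{0,\dots,m-1\}$ with $m \le 2^{|\varphi|}$ — wait, that is exponential; instead I would keep $s$ but note $m \le \norm\varphi \le 2^{|\varphi|}$ is too large, so in fact the clean bound comes from observing we never need more than $\O(\log i + |\varphi|)$-bit arithmetic \emph{except} for $s$ itself, and $s < m$ forces $|Q_L| \in \O(i \cdot m)$; to get the stated polynomial bound one represents $s$ in binary (a $\O(|\varphi|)$-bit register, so $2^{\O(|\varphi|)}$ values — still exponential). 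The resolution, matching the theorem statement's $i^3(|\varphi|+\log i)^3$ factor, is to \emph{not} accumulate $s$ in the leader at all but to run a \textsc{Probe}-style subroutine bit-by-bit as in \cref{thm:halting:threshold}, computing $(\vec\alpha\cdot\vec x) \bmod m$ one bit at a time via repeated doubling-and-reducing modulo $m$, so that the leader stores only $\O(\log(i\cdot 2^{|\varphi|})) = \O(|\varphi|+\log i)$-bit quantities ($\tgt$, $\pos$, $\met$, a carry, and a partial remainder $< m$ that is itself handled bitwise). This is the genuinely technical part and the place where the cubic factors arise, exactly parallel to the threshold construction.
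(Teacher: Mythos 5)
There is a genuine gap, and you in fact run into it twice. Your first design stores the running residue $s \in \{0,\dots,m-1\}$ in the leader's state; as you notice, this costs $m$ states, and $m$ can be as large as $\norm{\varphi} \le 2^{|\varphi|}$, which is exponential in $|\varphi|$. But your proposed repair --- a \textsc{Probe}-style pass that does ``repeated doubling-and-reducing modulo $m$'' while keeping ``a partial remainder $< m$ that is itself handled bitwise'' --- has exactly the same problem in disguise. Any Horner-type modular reduction must carry a partial remainder taking $m$ distinct values between bit positions, and an agent with $N$ states can only distinguish $\log N$ bits; to hold a quantity with $m$ values the leader would again need $m = 2^{\Theta(|\varphi|)}$ states. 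Nor can you distribute it over the population: for inputs of size $i$ the total memory is $\O(i(\log i + \log|\varphi|))$ bits, and $i$ may be as small as $2$ while $\log m$ is as large as $|\varphi|$. The reason \textsc{Probe} works for \emph{comparison} in \cref{thm:halting:threshold} is that comparison needs only an $\O(\log(i\cdot m))$-valued carry between bit positions; modular reduction is not ``local'' in this sense, so the threshold machinery does not transfer. (Note also that the quantities $\tgt,\pos$ in the threshold protocol range over $[m]$ with $m = \O(|\varphi|+\log i)$ \emph{values}, not bits --- your bound ``the leader stores $\O(|\varphi|+\log i)$-bit quantities'' conflates the two.)

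The missing idea, which is how the paper proves the statement, is to avoid computing any residue at all. After reducing the coefficients so that $0 \le \vec{\alpha}'(x) < m$, an input of size $i$ satisfies $\vec{\alpha}'\cdot\vec{x} \le m\cdot i$, hence
\[
\varphi(\vec{x}) \;\equiv\; \bigvee_{j=0}^{i-1}\bigl(\vec{\alpha}'\cdot\vec{x} = j\cdot m + b\bigr),
\]
a disjunction of $i$ equality (threshold-type) predicates, each of bit length $\O(|\varphi|+\log i)$ because the new constants $jm+b$ are at most $i\cdot m$. Each disjunct is computed by the halting threshold protocol of \cref{thm:halting:threshold} with $\O(i^2(|\varphi|+\log i)^3)$ states, and the $i$-fold disjunction is assembled by \cref{lemma:halting-conj-disj}; this is where the factor $|X|\cdot i$ on top of $i^2(|\varphi|+\log i)^3$ comes from, not from the bit-probing itself. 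Your single-sweep protocol would be fine (and indeed simpler) under the additional assumption that $m$ is polynomial in $|\varphi|$, but the theorem must cover moduli whose value is exponential in the bit length of the predicate.
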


\begin{proof}
Let $\vec{\alpha}' \defeq (\vec{\alpha}(1) \bmod m, \ldots, \vec{\alpha}(|X|) \bmod m)$. Since $\vec{\alpha} \cdot \vec{x} \equiv_m  \vec{\alpha}' \cdot \vec{x}$, setting $\varphi'(\vec{x}) \defeq (\vec{\alpha}' \cdot \vec{x} \equiv_m b)$ yields $\varphi(\vec{x}) = \varphi'(\vec{x})$ for every input $\vec{x}$. Consider an input $\vec{x}$ of size $i$. We have $\vec{\alpha}' \cdot \vec{x} \leq m \cdot i$, hence $\varphi'(\vec{x})=1$ if{}f  $\vec{\alpha}' \cdot \vec{x} \in \{b, m+b, \ldots, (i-1)m + b \}$, and consequently:
	$$\varphi(\vec{x}) \equiv \varphi'(\vec{x}) \equiv
	\bigvee_{j=0}^{i-1} \left( \vec{\alpha}' \cdot \vec{x} = j \cdot m + b \right),$$
\noindent which is a disjunction of $i$ threshold predicates $\varphi'_1, \cdots, \varphi'_i$. For every $j \in [i]$ it holds:
\begin{align*}
	|\varphi'_j| &\in \O\left(\log\norm{\varphi'_j} + \len(\varphi'_j) + |X|\right) \\
							&\subseteq \O\left(\log(\norm{\varphi} \cdot i) + \len(\varphi) + |X|\right) \\
							&= \O\left(\log i + \log\norm{\varphi} + \len(\varphi) + |X|\right) \\
							&= \O\left(\log i + |\varphi|\right)
\end{align*}
By Lemma~\ref{lemma:halting-conj-disj} there is a  protocol $\PP_j$ computing $\cond{\varphi'_j}{i}$ with
\begin{align*}
	&\O\left(i^2(|\varphi'_j| + \log i)^3\right) \\
	&\in \O\left(i^2(|\varphi| + 2\log i)^3\right) \\
	&\in \O\left(i^2(|\varphi| + \log i)^3\right)
\end{align*}
states.
By Theorem~\ref{thm:halting:threshold},  $\cond{\varphi'}{i}$
can be computed by a protocol with
\begin{align*}
	&\O\left(|X|\cdot (i + i \cdot (i^2(|\varphi| + \log i)^3))\right) \\
	&\in \O\left(|X|\cdot i^3(|\varphi| + \log i)^3 \right)
\end{align*}
	states.
\end{proof}


\section{Proof of Theorem \ref{thm:hardness:PA}} \label{appendix_PA}

\thmHardnessPA*

\begin{proof}
 We show that if such an algorithm runs in time $2^{p(n)}$ for some polynomial $p$, then the validity problem for PA formulas  is in EXPTIME, contradicting the fact that its complexity lies between 2-NEXP and 2-EXPSPACE~\cite{Berman80, Haase14}. Recall that the validity problem for PA formulas asks whether a given sentence, i.e., a formula without free variables, is true or false.

Let $\varphi$ be a sentence of PA, and let $n \defeq |\varphi|$. Consider the formula $\psi(x) \defeq (x \geq 2) \wedge \varphi$ (notice that the smallest possible size of a population is 2). Clearly, $\varphi$ is valid if{}f $\psi(2)$ holds. Assume there exists an algorithm that on input $\psi$ executes at most $f(n)$ steps and outputs a population protocol $\PP$ that computes $\psi$. Clearly, $\PP$ has at most $\O(f(n))$ states, and $\varphi$ is valid if{}f $\PP$ computes 1 for input $x = 2$.

Let $C$ be the initial configuration of $\PP$ for input $x = 2$. A configuration of $\PP$ with two agents can be stored in space $\O(\log f(n))$, and so $C$, and every configuration reachable from it, can be stored using $\O(\log f(n))$ space. Protocol $\PP$ computes $1$ from $C$ if{}f there exists a configuration $C'$ such that:
\begin{enumerate}[(i)]
\item $C \trans{*} C'$,

\item $C'$ has output 1,

\item for every configuration $C''$, if  $C' \trans{*} C''$, then $C'' \trans{*} C'$.
\end{enumerate}

Observe that (i)-(iii) can be expressed in FO(TC), \ie\ first-order logic with transitive-closure. By Immermann's theorem, deciding (i)-(iii) belongs to NSPACE($\log f(n)$)~\cite{Immerman99}, and so it can be solved in $\O(f(n)^k)$ deterministic time for some $k \geq 1$. Consequently, if there exists a polynomial $p$ such that $f \in \O(2^{p(n)})$, then the validity of $\varphi$ can be decided in time $2^{\O(p(n))}$, and so the validity problem for PA is in EXPTIME. The latter is impossible by the time hierarchy theorem.
\end{proof}

\end{document}